\newcommand{\nc}{\newcommand}
\nc{\prox}{$\sim$}
\nc{\SL}{\mathcal{T}} 
\nc{\GL}{\mathcal{G}} 
\nc{\trans}[2]{S_{#1}(#2)} 
\nc{\extended}[2]{H_{#1}(#2)} 
\nc{\abstracted}[2]{{#1}_{#2}} 
\nc{\MAProg}{\mathcal{P}_{E, \simrel}}
\nc{\linear}[1]{#1_\ell}
\nc{\Set}{\mathcal{S}}
\nc{\transform}[1]{#1^{\mathcal{T}}}
\nc{\transQ}[1]{\transform{#1}}   
\nc{\transRen}[1]{\widehat{#1}}       
\nc{\transSim}[1]{{#1}_{S}}       
\nc{\transUni}[1]{{#1}_{\sim}}    
\nc{\transPay}[1]{{#1}_{\mathit{pay}}}    
\nc{\elimS}[1]{\mathrm{elim}_{\simrel}(#1)} 
\nc{\elimD}[1]{\mathrm{elim}_{\qdom}(#1)} 
\nc{\Con}[1]{\mbox{Con}_{#1}}  
\nc{\pc}[3]{\mathsf{#1}_{#2}(#3)} 
\nc{\qval}[1]{\pc{qVal}{}{#1}} 
\nc{\qbound}[1]{\pc{qBound}{}{#1}} 
\nc{\qvali}[2]{\pc{qVal}{#1}{#2}} 
\nc{\qboundi}[2]{\pc{qBound}{#1}{#2}} 
\nc{\encode}[1]{\ulcorner#1\urcorner} 
\nc{\spair}[2]{\llparenthesis\, #1, #2 \,\rrparenthesis}
\newcommand{\schemenp}[1]{\mbox{#1}} 
\newcommand{\scheme}[2]{\schemenp{#1}(#2)} 
\newcommand{\clp}[1]{\scheme{CLP}{#1}} 
\newcommand{\qclp}[2]{\scheme{QCLP}{#1,#2}} 
\newcommand{\sqclp}[3]{\scheme{SQCLP}{#1,#2,#3}} 
\newcommand{\cdom}{\mathcal{C}} 
\newcommand{\rdom}{\mathcal{R}} 
\newcommand{\qdom}{\mathcal{D}} 
\newcommand{\aqdomd}[1]{D_{#1} \setminus \{\bt\}} 
\newcommand{\bqdomd}[1]{(D_{#1} \setminus \{\bt\}) \uplus \{?\}} 
\newcommand{\aqdom}{\aqdomd{}} 
\newcommand{\bqdom}{\bqdomd{}} 
\newcommand{\bt}{\mathrm{\mathbf{b}}} 
\newcommand{\tp}{\mathrm{\mathbf{t}}} 
\newcommand{\dleq}{\trianglelefteqslant} 
\newcommand{\dlt}{\vartriangleleft} 
\newcommand{\dgeq}{\trianglerighteqslant} 
\newcommand{\B}{\mathcal{B}} 
\newcommand{\U}{\mathcal{U}} 
\newcommand{\W}{\mathcal{W}} 
\newcommand{\simrel}{\mathcal{S}} 
\newcommand{\sid}{\simrel_{\mathrm{id}}} 
\newcommand{\Prog}{\mathcal{P}} 
\newcommand{\Var}{\mathcal V\!ar} 
\newcommand{\War}{\mathcal W\!ar} 
\newcommand{\set}[2]{\mathrm{#1}(#2)} 
\newcommand{\domset}[1]{\set{dom}{#1}} 
\newcommand{\varset}[1]{\set{var}{#1}} 
\newcommand{\warset}[1]{\set{war}{#1}} 
\nc{\Exp}{\mbox{Exp}_{\bot}(\Sigma,B,\Var)} 
\nc{\TExp}{\mbox{Exp}(\Sigma,B,\Var)} 
\nc{\GExp}{\mbox{Exp}_{\bot}(\Sigma,B)} 
\nc{\TGExp}{\mbox{Exp}(\Sigma,B)} 
\nc{\Term}{\mbox{Term}_{\bot}(\Sigma,B,\Var)} 
\nc{\TTerm}{\mbox{Term}(\Sigma,B,\Var)} 
\nc{\GTerm}{\mbox{Term}_{\bot}(\Sigma,B)} 
\nc{\TGTerm}{\mbox{Term}(\Sigma,B)} 
\nc{\At}{\mbox{At}(\Sigma,B,\Var)} 
\nc{\GAt}{\mbox{GAt}(\Sigma,B)} 
\nc{\PAt}{\mbox{PAt}(\Sigma,B,\Var)} 
\nc{\GPAt}{\mbox{GPAt}(\Sigma,B)} 
\nc{\Atz}{\mbox{At}_{\Sigma}} 
\nc{\QAtz}{\mbox{At}_{\Sigma}(\qdom)} 
\nc{\sust}{\mbox{Subst}_\Sigma} 
\nc{\Sust}{\mbox{Subst}(\Sigma,B,\Var)} 
\nc{\GSust}{\mbox{GSubst}(\Sigma,B)} 
\nc{\Soln}[3]{\mbox{Sol}_{#1}^{#2}(#3)} 
\nc{\Sol}[2]{\Soln{#1}{}{#2}} 
\nc{\GSol}[2]{\mbox{GSol}_{#1}(#2)}
\nc{\Solc}[1]{\Sol{\cdom}{#1}}
\nc{\CAns}[2]{\mbox{C\!Ans}_{#1}(#2)} 
\newcommand{\qat}[2]{#1 \sharp #2} 
\newcommand{\cat}[2]{#1 \Leftarrow #2} 
\newcommand{\cqat}[3]{\qat{#1}{#2} \Leftarrow #3} 
\newcommand{\qgets}[1]{\xleftarrow{#1}} 
\nc{\closure}[1]{\mbox{cl}_{#1}} 
\newcommand{\model}[1]{~{\models_{#1}}~} 
\newcommand{\M}[1]{\mathcal{M}_{#1}} 
\newcommand{\Mp}{\M{\Prog}} 
\newcommand{\eqdef}{~{=_{\mathrm{def}}}~} 
\newcommand{\infi}{\bigsqcap} 
\newcommand{\entail}[1]{~{\succcurlyeq_{#1}}~} 
\newcommand{\sep}{\talloblong} 
\newcommand{\tup}[1]{\overline{#1}}   
\newcommand{\ntup}[2]{\tup{#1}_{#2}}  
\newcommand{\infx}[2]{\ {\vdash}_{\!#1}^{\!#2}\ } 
\newcommand{\CHL}{\mbox{CHL}} 
\newcommand{\chln}[2]{\infx{#1}{#2}} 
\newcommand{\chl}[1]{\chln{#1}{}} 
\newcommand{\chlc}{\chln{\cdom}{}} 
\newcommand{\QCHL}{\mbox{QCHL}} 
\newcommand{\qchln}[3]{\infx{#1,#2}{#3}} 
\newcommand{\qchldc}{\qchln{\qdom}{\cdom}{}} 
\newcommand{\qchldcn}[1]{\qchln{\qdom}{\cdom}{#1}} 
\newcommand{\SQCHL}{\mbox{SQCHL}} 
\newcommand{\sqchln}[4]{\infx{#1,#2,#3}{#4}} 
\newcommand{\sqchlrdc}{\sqchln{\simrel}{\qdom}{\cdom}{}} 
\newcommand{\sqchlrdcn}[1]{\sqchln{\simrel}{\qdom}{\cdom}{#1}} 
\title[A Transformation-based Implementation for SQCLP (Preliminary Version)]
    {A Transformation-based Implementation\\ for CLP with Qualification and Proximity \thanks{This work has been partially supported by the Spanish projects STAMP (TIN2008-06622-C03-01), PROMETIDOS--CM (S2009TIC-1465) and GPD--UCM (UCM--BSCH--GR58/08-910502).}\\
    {\large Preliminary Version (Technical Report SIC-4-10)}}
  \author[R. Caballero, M. Rodr\'iguez-Artalejo and C. A. Romero-D\'iaz]
    {R. CABALLERO, M. RODR\'IGUEZ-ARTALEJO and C. A. ROMERO-D\'IAZ\\
    Departamento de Sistemas Inform\'aticos y Computaci\'on, Universidad Complutense\\
    Facultad de Inform\'atica, 28040 Madrid, Spain\\
    \email{\{rafa,mario\}@sip.ucm.es, cromdia@fdi.ucm.es}}
\begin{document}
\maketitle
\thispagestyle{empty}

\begin{abstract}
Uncertainty in logic programming has been widely investigated in the last decades,
leading to multiple extensions of the classical LP paradigm.
However, few of these are designed as
extensions of the well-established and powerful CLP scheme for Constraint Logic Programming.
In a previous work we have proposed the SQCLP ({\em proximity-based qualified constraint logic programming}) scheme as a
quite expressive extension of CLP with support for qualification values and proximity relations
as generalizations of uncertainty values and similarity relations, respectively.
In this paper we provide a transformation technique for transforming SQCLP programs and goals into semantically equivalent CLP programs and goals, and a practical Prolog-based implementation of some particularly useful instances of the SQCLP scheme.
We also illustrate, by showing some simple---and working---examples, how the prototype can be effectively used as a tool for solving problems where qualification values and proximity relations play a key role.
Intended use of SQCLP includes flexible information retrieval applications.
\end{abstract}

\begin{keywords}
Constraint Logic Programming,
Program Transformation,
Qualification Domains and Values,
Similarity and Proximity Relations,
Flexible Information Retrieval.
\end{keywords}

\newtheorem{defn}{Definition}[section]
\newtheorem{thm}{Theorem}[section]
\newtheorem{lem}{Lemma}[section]
\newtheorem{prop}{Proposition}[section]
\newtheorem{exmp}{Example}[section]
\newtheorem{cor}{Corollary}[section]


\section{Introduction}
\label{sec:introduction}


Many extensions of LP ({\em logic programming}) to deal with uncertain knowledge and uncertainty have been proposed in the last decades.
These extensions have been proposed from different and somewhat unrelated perspectives, leading to multiple approaches in the way of using uncertain knowledge and understanding uncertainty.


A recent work by us \cite{RR10} focuses on the declarative semantics of a new proposal for an extension of the CLP scheme supporting qualification values and proximity relations.
More specifically, this work defines a new generic scheme SQCLP ({\em proximity-based qualified constraint logic programming}) whose instances $\sqclp{\simrel}{\qdom}{\cdom}$ are parameterized by a proximity relation $\simrel$, a qualification domain $\qdom$ and a constraint domain $\cdom$.
The current paper is intended as a continuation of \cite{RR10} with the aim of providing a semantically correct program transformation technique that allows us to implement a sound and complete implementation of some useful instances of SQCLP on top of existing CLP systems like {\em SICStus Prolog} or {\em SWI-Prolog}.
In the introductory section of \cite{RR10} we have already summarized some related approaches of SQCLP with a special emphasis on their declarative semantics and their main semantic differences with SQCLP.
In the next paragraphs we present a similar overview but, this time, putting the emphasis on the goal resolution procedures and system implementation techniques, when available.


Within the extensions of LP using annotations in program clauses we can find the seminal proposal of {\em quantitative logic programming} by \cite{VE86} that inspired later works such as the GAP ({\em generalized annotated programs}) framework by \cite{KS92} and the QLP ({\em qualified logic programming}) scheme by us \cite{RR08}.
In the proposal of van Emden, one can find a primitive goal solving procedure based in and/or trees (these are similar to the alpha-beta trees used in game theory), used to prune the search space when proving some specific ground atom for some certainty value in the real interval $[0,1]$.
In the case of GAP, the goal solving procedure uses constrained SLD resolution in conjunction with a---costly---computation of so-called {\em reductants} between variants of program clauses.
In contrast, QLP goal solving uses a more efficient resolution procedure called SLD($\qdom$) resolution, implemented by means of real domain constraints, used to compute the qualification value of the head atom based on the attenuation factor of the program clause and the previously computed qualification values of the body atoms.
Admittedly, the gain in efficiency of SLD($\qdom$) w.r.t. GAP's goal solving procedure is possible because QLP focuses on a more specialized class of annotated programs.
While in all these three approaches there are some results of soundness and completeness, the results for the QLP scheme are the stronger ones (again, thanks to its also more focused scope w.r.t. GAP).


From a different viewpoint, extensions of LP supporting uncertainty can be roughly classified into two major lines: approaches based in fuzzy logic \cite{Zad65,Haj98} and approaches based in similarity relations.
Historically, Fuzzy LP languages were motivated by expert knowledge representation applications.
Early Fuzzy LP languages implementing the resolution principle introduced in \cite{Lee72} include Prolog-Elf \cite{IK85}, Fril Prolog \cite{BMP95} and F-Prolog \cite{LL90}.  More recent approaches such as the Fuzzy LP languages in \cite{Voj01,GMV04} and Multi-Adjoint LP (MALP for short) in the sense of \cite{MOV01a} use clause annotations and a fuzzy interpretation of the connectives
and aggregation operators occurring in program clauses and goals.
The Fuzzy Prolog system proposed in \cite{GMV04} is implemented by means of real constrains on top of a CLP($\rdom$) system, using a syntactic expansion of the source code during the Prolog compilation.
A complete procedural semantics for MALP using reductants has been presented in \cite{MOV01b}.
A method for translating a MALP like program into standard Prolog has been described in \cite{JMP09}.


The second line of research mentioned in the previous paragraph
was motivated by applications in the field of flexible query answering.
Classical LP is extended to Similar\-i\-ty-based LP (SLP for short),
leading to languages which keep the classical syntax of LP clauses but use a similarity relation over a set of symbols $S$ to allow ``flexible'' unification of syntactically different symbols with a certain approximation degree.
Similarity relations over a given set $S$ have been defined in \cite{Zad71,Ses02} and related literature as fuzzy relations represented by mappings $\simrel : S \times S \to [0,1]$ which satisfy reflexivity, symmetry and transitivity axioms analogous to those required for classical equivalence relations.
Resolution with flexible unification can be used as a sound and complete goal solving procedure for SLP languages as shown e.g. in \cite{Ses02}.
SLP languages include {\em Likelog} \cite{AF99,Arc02} and more recently {\em SiLog} \cite{LSS04},
which has been implemented by means of an extended Prolog interpreter and proposed as a useful tool for web knowledge discovery.


In the last years, the SLP approach has been extended in various ways. The SQLP ({\em similarity-based qualified logic programming}) scheme proposed in \cite{CRR08} extended SLP by allowing program clause annotations in QLP style and generalizing similarity relations to mappings $\simrel : S \times S \to D$ taking values in a qualification domain not necessarily identical to the real interval $[0,1]$. As implementation technique for SQLP,
\cite{CRR08} proposed a semantically correct program transformation into QLP, whose goal solving procedure has been described above.
Other related works on transformation-based implementations of SLP languages include \cite{Ses01,MOV04}.
More recently, the SLP approach has been generalized to work with {\em proximity relations} in the sense of \cite{DP80} represented by mappings $\simrel : S \times S \to [0,1]$ which satisfy reflexivity and symmetry axioms but do not always satisfy transitivity.
SLP like languages using proximity relations include {\sf Bousi$\sim$Prolog} \cite{JR09} and the SQCLP scheme \cite{RR10}.
Two prototype implementations of {\sf Bousi$\sim$Prolog} are available:
a low-level implementation \cite{JR09b} based on an adaptation of the classical WAM (called {\em Similarity WAM}) implemented in {\sc Java} and able to execute a Prolog program in the context of a similarity relation defined on the first order alphabet induced by that program;
and a high-level implementation \cite{JRG09} done on top of {\em SWI-Prolog} by means of a program transformation from {\sf Bousi$\sim$Prolog} programs into a so-called {\em Translated BPL code} than can be executed according to the weak SLD resolution principle by a meta-interpreter.


Let us now refer to approaches related to constraint solving and CLP.
An analogy of proximity relations in the context of partial constraint satisfaction can be found in \cite{FW92},
where several metrics are proposed to measure the proximity between the solution sets of two different constraint satisfaction problems.
Moreover, some extensions of LP supporting uncertain reasoning use constraint solving as implementation technique,
as discussed in the previous paragraphs.
However, we are only aware of three approaches which have been conceived as extensions of the classical CLP scheme proposed for the first time in \cite{JL87}.
These three approaches are:
\cite{Rie98phd} that extends the formulation of CLP by \cite{HS88} with quantitative LP in the sense of \cite{VE86} and adapts van Emden's idea of and/or trees to obtain a goal resolution procedure;
\cite{BMR01} that proposes a semiring-based approach to CLP, where constraints are solved in a soft way with levels of consistency represented by values of the semiring, and is implemented with {\tt clp(FD,S)} for a particular class of semirings which enable to use local consistency algorithms,
as described  in \cite{GC98};
and the SQCLP scheme proposed in our previous work \cite{RR10}, which was designed as a common extension of SQLP and CLP.


As we have already said at the beginning of this introduction, this paper deals with transformation-based implementations of the SQCLP scheme.
Our main results include: a) a transformation technique for transforming SQCLP programs into semantically equivalent CLP programs via two specific program transformations named elim$_\simrel$ and elim$_\qdom$; and b) and a practical Prolog-based implementation which relies on the aforementioned program transformations and supports several useful SQCLP instances.
As far as we know, no previous work has dealt with the implementation of extended LP languages for uncertain reasoning which are able to support clause annotations, proximity relations and CLP style programming.
In particular, our previous paper \cite{CRR08} only presented a transformation analogous to elim$_\simrel$ for a programming scheme less expressive than SQCLP,  which supported neither non-transitive proximity relations nor CLP programming.
Moreover, the transformation-based implementation reported in \cite{CRR08} was not implemented in a system.


The reader is assumed to be familiar with the semantic foundations of LP  \cite{Llo87,Apt90} and CLP \cite{JL87,JMM+98}.
The rest of the paper is structured as follows:
Section \ref{sec:sqclp} presents a brief overview of the semantics of the  SQCLP scheme, focusing on the essential notions needed  to understand the following sections and concluding with an abstract discussion of goal solving procedures for SQCLP.
Section \ref{sec:qclpclp} briefly discusses two specializations of SQCLP, namely QCLP and CLP, which are used as the targets
of the program transformations elim$_\simrel$ and elim$_\qdom$, respectively.
Section \ref{sec:implemen} presents these two program transformations along with mathematical results which prove their semantic correctness, relying on the declarative semantics of the SQCLP, QCLP and CLP schemes.
Section \ref{sec:practical} presents a Prolog-based prototype system which relies on  the transformations proposed in the previous section and implements several useful SQCLP instances.
Finally, Section \ref{sec:conclusions} summarizes conclusions and points to some lines of planned future research.


\section{The Scheme SQCLP and its Declarative Semantics}
\label{sec:sqclp}

We present in this section a short overview of the declarative semantics of the SQCLP scheme originally presented in \cite{RR10},
focusing on the essential notions needed  to understand the following sections.
Interested readers are referred to \cite{RR10} and its extended version \cite{RR10TR} for
a full-fledged exposition of SQCLP semantics and a discussion of various extended LP languages for uncertain reasoning which can
be obtained as specializations and instances of SQCLP.
Some technical notions and results from \cite{RR10TR} will be cited along this paper when needed to support mathematical proofs.


{\em Constraint domains} $\cdom$, sets of constraints $\Pi$ and their solutions, as well as terms, atoms and substitutions over a given $\cdom$ are well known notions underlying  the CLP scheme. The reader is referred  \cite{RR10TR} for a relational formalization of constraint domains and some examples, including the real constraint domain $\rdom$.
We assume the following classification of atomic $\cdom$-constraints: defined atomic constraints $p(\ntup{t}{n})$, where $p$ is a program-defined predicate symbol; primitive constraints $r(\ntup{t}{n})$ where $r$ is a $\cdom$-specific primitive predicate symbol; and equations $t == s$.

We use $\Con{\cdom}$ as a notation for the set of all $\cdom$-constraints and
$\kappa$ as a notation for an atomic primitive constraint.
Constraints are interpreted by means of {\em $\cdom$-valuations} $\eta \in \mbox{Val}_{\cdom}$,
which are ground substitutions.
The set $\Solc{\Pi}$ of solutions of $\Pi \subseteq \Con{\cdom}$ includes all the valuations $\eta$ such that $\Pi\eta$
is true when interpreted in $\cdom$.
$\Pi \subseteq \Con{\cdom}$ is called {\em satisfiable} if $\Solc{\Pi} \neq \emptyset$ and {\em unsatisfiable} otherwise.
$\pi \in \Con{\cdom}$ {\em is entailed} by $\Pi \subseteq \Con{\cdom}$
(noted  $\Pi \model{\cdom} \pi$) iff $\Solc{\Pi} \subseteq \Solc{\pi}$.


{\em Qualification domains} were first introduced in \cite{RR08} with the aim of providing elements, called qualification values,  which can be attached to computed answers. They are  defined as structures $\qdom = \langle D, \dleq, \bt, \tp, \circ \rangle$ verifying the following requirements:
\begin{enumerate}
\item
$\langle D, \dleq, \bt, \tp \rangle$ is a lattice with extreme points $\bt$ (called {\em infimum} or {\em bottom} element) and $\tp$ (called {\em maximum} or {\em top} element) w.r.t. the partial ordering $\dleq$ (called {\em qualification ordering}). For given elements  $d, e \in D$, we  write $d \sqcap e$ for the {\em greatest lower bound} ($glb$) of $d$ and $e$, and $d \sqcup e$ for the {\em least upper bound} ($lub$) of $d$ and $e$. We also write $d \dlt e$ as abbreviation for $d \dleq e \land d \neq e$.
    \item $\circ : D \times D \rightarrow D$, called {\em attenuation operation}, verifies the following axioms:
        \begin{enumerate}
            \item $\circ$ is associative, commutative and monotonic w.r.t. $\dleq$.
            \item $\forall d \in D : d \circ \tp = d$ and $d \circ \bt = \bt$.
            \item $\forall d, e \in D  : d \circ e \dleq e$ and even $\bt \neq d \circ e \dleq e$ if $d,e \in \aqdom$.
            \item $\forall d, e_1, e_2 \in D : d \circ (e_1 \sqcap e_2) = (d \circ e_1) \sqcap (d \circ e_2)$.
        \end{enumerate}
\end{enumerate}
For any $S = \{e_1, e_2, \ldots, e_n\} \subseteq D$, the $glb$ (also called {\em infimum} of $S$)
exists and can be computed as $\infi S =e_1 \sqcap e_2 \sqcap \cdots \sqcap e_n$
(which reduces to $\tp$ in the case $n = 0$).
The dual claim concerning $lub$s is also true.
As an easy consequence of the axioms, one gets the identity $d \circ \infi S =  \infi \{d \circ e \mid e \in S\}$.

Technical details, explanations and examples can be found in  \cite{RR10TR},
including: the qualification domain $\B$ of classical boolean values,
the qualification domain $\U$ of uncertainty values,
the qualification domain $\W$ of weight values,
and other qualification domains built from these by means of the strict cartesian product operation $\otimes$.
The following definition is borrowed from \cite{RR10}:

\begin{defn} [Expressing $\qdom$ in $\cdom$]
\label{dfn:expressible} A qualification domain $\qdom$ is expressible in a constraint domain $\cdom$
if there is an injective embedding mapping $\imath : \aqdom \to C$ and moreover:
  \begin{enumerate}
     \item
     There is a $\cdom$-constraint $\qval{X}$ such that
     $\Solc{\qval{X}}$ is the set of all $\eta \in \mbox{Val}_\cdom$  verifying  $\eta(X) \in ran(\imath)$.
     \item
    There is a $\cdom$-constraint $\qbound{X,Y,Z}$ encoding ``$x \dleq y \circ z$'' in the following sense:
    any $\eta \in \mbox{Val}_\cdom$ such that $\eta(X) = \iota(x)$, $\eta(Y) = \iota(y)$ and $\eta(Z) = \iota(z)$
    verifies $\eta \in \Solc{\qbound{X,Y,Z}}$ iff $x \dleq y \circ z$.
  \end{enumerate}
In addition, if $\qval{X}$ and $\qbound{X,Y,Z}$ can be chosen as existential constraints
of the form $\exists X_1 \ldots \exists X_n(B_1 \land \ldots \land B_m)$---where $B_j ~ (1 \leq j \leq m)$ are atomic---we say that $\qdom$ is {\em existentially expressible} in $\cdom$. \mathproofbox
\end{defn}

It can be proved that $\B$, $\U$, $\W$ and and any qualification domain built from these with the help of $\otimes$ are existentially expressible in any constraint domain $\cdom$ that includes the basic values and computational features of $\rdom$.


{\em Admissible triples} $\langle \simrel, \qdom, \cdom \rangle$ consist of a constraint domain $\cdom$, a qualification domain $\qdom$ and a proximity relation $\simrel : S \times S \to D$---where $D$ is the carrier set of $\qdom$ and $S$ is the set of all variables, basic values  and signature symbols available in $\cdom$---satisfying the following properties:
\begin{itemize}
\item $\forall x\in S : \simrel(x,x) = \tp$ (reflexivity).
\item $\forall x,y\in S : \simrel(x,y) = \simrel(y,x)$ (symmetry).
\item Some additional technical conditions explained in \cite{RR10TR}.
\end{itemize}
A proximity relation $\simrel$ is called {\em similarity} iff it satisfies the additional property $\forall x,y,z\in S : \simrel(x,z) \dgeq \simrel(x,y) \sqcap \simrel(y,z)$ (transitivity).
The scheme SQCLP has instances $\sqclp{\simrel}{\qdom}{\cdom}$ where $\langle \simrel,\qdom,\cdom \rangle$ is an admissible triple.


A $\sqclp{\simrel}{\qdom}{\cdom}$-program is a set $\Prog$ of  \emph{qualified
program rules} (also called \emph{qualified clauses})
$C : A \qgets{\alpha} \qat{B_1}{w_1}, \ldots, \qat{B_m}{w_m}$, where $A$ is a defined atom,
$\alpha \in \aqdom$ is  called the {\em attenuation factor} of the clause and
each $\qat{B_j}{w_j} ~ (1 \le j \le m)$ is an atom $B_j$
annotated with a so-called {\em threshold value} $w_j \in \bqdom$.
The intended meaning of $C$ is as follows:
if for all $1 \leq j \leq m$ one has $\qat{B_j}{e_j}$ (meaning that $B_j$ holds with qualification value $e_j$)
for some $e_j \dgeq^? w_j$,
then $\qat{A}{d}$ (meaning that $A$ holds with qualification value $d$)
can be inferred for any $d \in \aqdom$ such that $d \dleq \alpha \circ \infi_{j = 1}^m e_j$.
By convention, $e_j \dgeq^? w_j$ means $e_j \dgeq w_j$ if $w_j ~{\neq}~?$ and is identically true otherwise.
In practice threshold values equal to `?' and attenuation values equal to $\tp$ can be omitted.

\begin{figure}[ht]
\figrule
\footnotesize\it
\renewcommand{\arraystretch}{1.4}
\begin{tabular}{rl}
& \% Book representation: book( ID, Title, Author, Lang, Genre, VocLvl, Pages ). \\
\tiny 1 & library([ book(1, `Tintin', `Herg\'e', french, comic, easy, 65), \\
\tiny 2 & $\quad$ book(2, `Dune', `F.P.~Herbert', english,\ sciFi,\ medium,\ 345), \\
\tiny 3 & $\quad$ book(3, `Kritik der reinen Vernunft', `I.~Kant', german, philosophy, difficult, 1011), \\
\tiny 4 & $\quad$ book(4, `Beim Hauten der Zwiebel', `G.~Grass', german, biography, medium, 432) ]) \\[3mm]
& \% Auxiliary predicate for computing list membership: \\
\tiny 5 & member(B, [B$\mid$\_]) \\
\tiny 6 & member(B, [\_$\mid$T]) $\gets$ member(B, T) \\[3mm]
& \% Predicates for getting the explicit attributes of a given book: \\
\tiny 7 & getId(book(ID, \_Title, \_Author, \_Lang, \_Genre, \_VocLvl, \_Pages), ID) \\
\tiny 8 & getTitle(book(\_ID, Title, \_Author, \_Lang, \_Genre, \_VocLvl, \_Pages), Title) \\
\tiny 9 & getAuthor(book(\_ID, \_Title, Author, \_Lang, \_Genre, \_VocLvl, \_Pages), Author) \\
\tiny 10 & getLanguage(book(\_ID, \_Title, \_Author, Lang, \_Genre, \_VocLvl, \_Pages), Lang) \\
\tiny 11 & getGenre(book(\_ID, \_Title, \_Author, \_Lang, Genre, \_VocLvl, \_Pages), Genre) \\
\tiny 12 & getVocLvl(book(\_ID, \_Title, \_Author, \_Lang, \_Genre, VocLvl, \_Pages), VocLvl) \\
\tiny 13 & getPages(book(\_ID, \_Title, \_Author, \_Lang, \_Genre, \_VocLvl, Pages), Pages) \\[3mm]
& \% Function for guessing the reader level of a given book: \\
\tiny 14 & guessRdrLvl(B, basic) $\gets$ getVocLvl(B, easy), getPages(B, N), N $<$ 50 \\
\tiny 15 & guessRdrLvl(B, intermediate) $\qgets{0.8}$ getVocLvl(B, easy), getPages(B, N), N $\ge$ 50 \\
\tiny 16 & guessRdrLvl(B, basic) $\qgets{0.9}$ getGenre(B, children) \\
\tiny 17 & guessRdrLvl(B, proficiency) $\qgets{0.9}$ getVocbLvl(B, difficult), getPages(B, N), N $\ge$ 200 \\
\tiny 18 & guessRdrLvl(B, upper) $\qgets{0.8}$ getVocLvl(B, difficult), getPages(B, N), N $<$ 200 \\
\tiny 19 & guessRdrLvl(B, intermediate) $\qgets{0.8}$ getVocLvl(B, medium) \\
\tiny 20 & guessRdrLvl(B, upper) $\qgets{0.7}$ getVocLvl(B, medium) \\[3mm]
& \% Function for answering a particular kind of user queries: \\
\tiny 21 & search(Lang, Genre, Level, Id)  $\gets$ library(L)\#1.0, member(B, L)\#1.0, \\
\tiny 22 & $\quad$ getLanguage(B, Lang), getGenre(B, Genre), \\
\tiny 23 & $\quad$ guessRdrLvl(B, Level), getId(B, Id)\#1.0 \\[3mm]
& \% Proximity relation $\simrel_s$: \\
\tiny 24 & $\simrel_s$(sciFi, fantasy) = $\simrel_s$(fantasy, sciFi) = 0.9 \\
\tiny 25 & $\simrel_s$(adventure, fantasy) = $\simrel_s$(fantasy, adventure) = 0.7 \\
\tiny 26 & $\simrel_s$(essay, philosophy) = $\simrel_s$(philosophy, essay) = 0.8 \\
\tiny 27 & $\simrel_s$(essay, biography) = $\simrel_s$(biography, essay) = 0.7 \\
\end{tabular}
\normalfont
\caption{$\sqclp{\simrel_s}{\,\U}{\rdom}$-program $\Prog_{\!s}$ ({\em Library with books in different languages})}
\label{fig:library}
\figrule
\vspace*{-4mm}
\end{figure}

Figure \ref{fig:library} shows a simple $\sqclp{\simrel_s}{\,\U}{\rdom}$-program $\Prog_{\!s}$ which illustrates the expressivity of the SQCLP scheme to deal with problems involving flexible information retrieval. Predicate \mbox{\it search} can be used to answer queries asking for books in the library matching some desired language, genre and reader level.
Predicate \mbox{\it guessRdrLvl} takes advantage of attenuation factors to encode heuristic rules to compute reader levels on the basis of vocabulary level and other book features. The other predicates compute book features in the natural way,
and the proximity relation $\simrel_s$ allows flexibility in any unification (i.e. solving of equality constraints) arising during the invocation of the program predicates.


The declarative semantics of a given $\sqclp{\simrel}{\qdom}{\cdom}$-program $\Prog$ relies on {\em qualified constrained atoms} (briefly {\em qc-atoms}) of the form $\cqat{A}{d}{\Pi}$, intended to assert that the validity of atom  $A$ with qualification degree $d \in D$ is entailed by the constraint set $\Pi$.
A qc-atom is called {\em defined}, {\em primitive} or {\em equational} according to the syntactic form of $A$; and it is called {\em observable} iff $d \in \aqdom$ and $\Pi$ is satisfiable.

Program interpretations are defined as sets of observable qc-atoms which obey a natural closure condition.
The results proved in \cite{RR10} show two equivalent ways to characterize declarative semantics, using a fix-point approach and a proof-theoretical approach, respectively.
For the purposes of the present paper it suffices to consider the proof theoretical approach,
that relies on a formal inference system called {\em Proximity-based Qualified Constrained Horn Logic}---in symbols, $\SQCHL(\simrel,\qdom,\cdom)$---intended to infer observable qc-atoms from $\Prog$ and consisting of the three inference rules displayed in Figure \ref{fig:sqchl}. Rule {\bf SQEA} depends on a relation $\approx_{d,\Pi}$ between terms that is defined in the following way: $t \approx_{d,\Pi} s$ iff there exist two terms $\hat{t}$ and $\hat{s}$ such that
$\Pi \model{\cdom} t == \hat{t}$,
$\Pi \model{\cdom} s == \hat{s}$ and
$\bt \neq d \dleq \simrel(\hat{t},\hat{s})$.
This allows to deduce equations from $\Pi$ in a flexible way,
taking the proximity relation $\simrel$ into account.
The reader is referred to \cite{RR10TR} for more motivating comments on $\SQCHL(\simrel,\qdom,\cdom)$
and some technical properties of the $\approx_{d,\Pi}$ relation.


\begin{figure}[ht]
  \figrule
  \centering
  \vspace{2mm}
  \begin{tabular}{l}
  \textbf{SQDA} ~ $\displaystyle\frac
    {~ (~ \cqat{(t'_i == t_i\theta)}{d_i}{\Pi} ~)_{i=1 \ldots n} \quad (~ \cqat{B_j\theta}{e_j}{\Pi} ~)_{j=1 \ldots m} ~}
    {\cqat{p'(\ntup{t'}{n})}{d}{\Pi}}$ \\ \\
    \hspace{13mm} if $(p(\ntup{t}{n}) \qgets{\alpha} \qat{B_1}{w_1}, \ldots, \qat{B_m}{w_m}) \in \Prog$\!,~
        $\theta$ subst.,~ $\simrel(p',p) = d_0 \neq \bt$, \\
    \hspace{13mm} $e_j \dgeq^? w_j ~ (1 \le j \le m)$ and
        $d \dleq \bigsqcap_{i = 0}^{n}d_i \sqcap \alpha \circ \bigsqcap_{j = 1}^m e_j$. \\
  \\
  \textbf{SQEA} ~ $\displaystyle\frac
    {}
    {\quad \cqat{(t == s)}{d}{\Pi} \quad}$ ~
  if $t \approx_{d, \Pi} s$.
  ~~ \textbf{SQPA} ~ $\displaystyle\frac
    {}
    {\quad \cqat{\kappa}{d}{\Pi} \quad}$ ~
  if $\Pi \model{\cdom} \kappa$. \\
  \end{tabular}
  \vspace{2mm}
  \caption{Proximity-based Qualified Constrained Horn Logic}
  \label{fig:sqchl}
  \figrule
  \vspace{-3mm}
\end{figure}

We will write $\Prog \sqchlrdc \varphi$ to indicate that $\varphi$ can be deduced from $\Prog$ in
$\SQCHL(\simrel,$ $\qdom,\cdom)$, and $\Prog \sqchlrdcn{k} \varphi$ in the case that the deduction can be performed with exactly $k$ {\bf SQDA} inference steps.
As usual in formal inference systems, $\SQCHL(\simrel,\qdom,\cdom)$ proofs can be represented as {\em proof trees} whose nodes correspond to qc-atoms, each node being inferred from its children by means of some $\SQCHL(\simrel,\qdom,\cdom)$ inference step.
The following theorem, proved in \cite{RR10TR}, characterizes least program models in the scheme SQCLP.
This result allows to use $\SQCHL(\simrel,\qdom,\cdom)$-derivability as a logical criterion for proving the semantic correctness of program transformations, as we will do in Section \ref{sec:implemen}.

\begin{thm}[Logical characterization of least program models in SQCHL]
\label{thm:SQCHL-leastmodel} For any $\sqclp{\simrel}{\qdom}{\cdom}$-program $\Prog$, its least
model can be characterized as: $$\Mp = \{\varphi \mid \varphi \mbox{ is an observable defined
qc-atom and }\Prog \sqchlrdc \varphi\}. \enspace \mathproofbox$$
\end{thm}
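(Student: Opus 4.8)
The plan is to prove the two set inclusions by exploiting the fact that $\Mp$ is the \emph{least} model of $\Prog$. Write $A = \{\varphi \mid \varphi \mbox{ is an observable defined qc-atom and } \Prog \sqchlrdc \varphi\}$ for the right-hand side. I would establish $\Mp \subseteq A$ by showing that $A$ is itself a model of $\Prog$, and $A \subseteq \Mp$ by showing that $A \subseteq \I$ for every model $\I$ of $\Prog$ (and then instantiating $\I := \Mp$). Throughout I rely on the definitions from \cite{RR10TR} of interpretations as sets of observable qc-atoms closed under a natural weakening condition, and of models as interpretations whose clause-satisfaction condition mirrors exactly one application of the rule \textbf{SQDA}; since the statement concerns only defined qc-atoms, it suffices to reason about the defined part of each model.

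For the inclusion $A \subseteq \Mp$, I would argue that every model $\I$ contains every observable defined qc-atom derivable from $\Prog$, by induction on the number $k$ of \textbf{SQDA} steps in a proof $\Prog \sqchlrdcn{k} \varphi$. Since \textbf{SQDA} is the only rule concluding a defined qc-atom, any proof of such a $\varphi = \cqat{p'(\ntup{t'}{n})}{d}{\Pi}$ ends in an \textbf{SQDA} step with $k \geq 1$. First I would note that observability propagates to the relevant premises: because $d \neq \bt$ and $d \dleq \alpha \circ \infi_{j=1}^{m} e_j$, no $e_j$ can equal $\bt$ (otherwise $\alpha \circ \bt = \bt$ would force $d = \bt$), and the children share the satisfiable $\Pi$; hence each defined body premise $\cqat{B_j\theta}{e_j}{\Pi}$ is again observable and is derived with strictly fewer \textbf{SQDA} steps, so it lies in $\I$ by the induction hypothesis. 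The equational premises $\cqat{(t'_i == t_i\theta)}{d_i}{\Pi}$ and the primitive body premises are concluded by \textbf{SQEA} and \textbf{SQPA}, hence reflect only facts about $\approx_{d_i,\Pi}$ and about $\Pi \model{\cdom} \kappa$ that any model respects by definition. Feeding these into the clause-satisfaction condition for the clause used in the \textbf{SQDA} step yields $\varphi \in \I$.

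For the inclusion $\Mp \subseteq A$, I would verify that $A$ is a model. That $A$ consists of observable defined qc-atoms is immediate; the closure condition of interpretations would be checked using monotonicity and weakening lemmas for $\SQCHLrdc$ proofs: lowering the qualification value to any $d' \in \aqdom$ with $d' \dleq d$ is legal because the side condition of \textbf{SQDA} already reads $d \dleq \cdots$, and replacing $\Pi$ by a $\Pi'$ with $\Pi' \model{\cdom} \Pi$ is legal because $\Solc{\Pi'} \subseteq \Solc{\Pi}$ preserves every entailment used by \textbf{SQEA} and \textbf{SQPA}, while proximity-closure on the head atom is already built into the factor $\simrel(p',p) = d_0$ of \textbf{SQDA}. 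For clause satisfaction, given an \textbf{SQDA} instance all of whose premises hold relative to $A$ and $\cdom$, I would simply reassemble them into a single \textbf{SQDA} step to derive the head qc-atom, which therefore lies in $A$ whenever it is observable. Hence $A$ is a model and, by minimality of $\Mp$, $\Mp \subseteq A$, completing the proof. I expect the main obstacle to be the careful alignment of the abstract model definition of \cite{RR10TR} with the three inference rules---in particular, making the clause-satisfaction condition coincide with \textbf{SQDA} and checking that the treatment of equations through $\approx_{d,\Pi}$ in \textbf{SQEA} matches exactly what models demand of equational atoms, a point that is delicate precisely because $\simrel$ need not be transitive.
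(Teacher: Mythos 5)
You should know at the outset that the paper itself contains no proof of this theorem to compare against: the statement is imported verbatim from \cite{RR10TR} (``The following theorem, proved in \cite{RR10TR}, \ldots'') and the proof box closes the bare statement. Judged on its own merits, your strategy is the canonical one for results of this kind---show that the set $A$ of derivable observable defined qc-atoms is contained in every model by induction on the number of \textbf{SQDA} steps, and that $A$ is itself a model, so that leastness of $\Mp$ yields $\Mp \subseteq A$---and your observability-propagation step is correct: if some $e_j$ were $\bt$ then $\alpha \circ \bt = \bt$ would force $d = \bt$, contradicting observability of the conclusion, and the same argument shows no $d_i$ can be $\bt$. (Note also that \cite{RR10TR} develops the semantics through a fixpoint operator as well as proof-theoretically, so its own proof of this result may route the inclusion $\Mp \subseteq A$ through the least fixpoint of the immediate-consequence operator rather than through your direct model-intersection argument; both routes are standard.)

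The one step whose justification would fail if taken literally is the claim that ``proximity-closure on the head atom is already built into the factor $\simrel(p',p) = d_0$ of \textbf{SQDA}.'' Precisely because $\simrel$ need not be transitive, the derivable set $A$ is \emph{not} closed under replacing the head predicate of a derivable qc-atom by a proximal one. Concretely, take $\qdom = \U$, a program consisting of the single clause $q \qgets{\tp}$, and $\simrel(p,q) = 0.9$, $\simrel(p,p') = 0.8$, $\simrel(p',q) = \bt$: then $\cqat{p}{0.8}{\emptyset}$ is derivable by one \textbf{SQDA} step, but no qc-atom $\cqat{p'}{d}{\emptyset}$ with $d \neq \bt$ is derivable at all, since the only available clause has head predicate $q$ and $\simrel(p',q) = \bt$. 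Hence, if the closure condition on interpretations demanded such head-proximity weakening, the theorem itself would be false; the argument goes through only because that condition in \cite{RR10TR} is proximity-free (closure under instantiation, lowering of the qualification value, and strengthening of the constraint set), with proximity entering solely through the clause-satisfaction condition that mirrors \textbf{SQDA}. Your two weakening lemmas (lowering $d$, replacing $\Pi$ by a stronger $\Pi'$) are exactly what is needed for that check, and they do hold---they are instances of the entailment properties of \cite{RR10TR} that the present paper invokes elsewhere in its own proofs. So your outline is sound, but the quoted justification should be deleted and replaced by the observation that head-proximity closure is not---and, absent transitivity, could not be---part of what interpretations are required to satisfy.
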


%


Let us now discuss goals and their solutions.
Goals for a given $\sqclp{\simrel}{\qdom}{\cdom}$-program $\Prog$ have the form
$$G ~:~ \qat{A_1}{W_1},~ \ldots,~ \qat{A_m}{W_m} \sep W_1 \dgeq^? \!\beta_1,~ \ldots,~ W_m \dgeq^? \!\beta_m$$
abbreviated as $(\qat{A_i}{W_i},~ W_i \dgeq^? \!\beta_i)_{i = 1 \ldots m}$.
The $\qat{A_i}{W_i}$ are called {\em annotated atoms}.
The pairwise different variables $W_i \in \War$ are called qualification variables;
they are taken from a  set $\War$ assumed to be disjoint from the set $\Var$ of data variables used in terms.
The conditions $W_i \dgeq^? \!\beta_i$ (with $\beta_i \in \bqdom$)
are called {\em threshold conditions} and their intended meaning (relying on the notations `?' and `$\dgeq^?$') is as already explained when introducing
program clauses above.
In the sequel, $\warset{o}$ will denote the set of all qualification variables occurring in the syntactic object $o$. In particular, for a goal $G$ as displayed above, $\warset{G}$ denotes the set $\{W_i \mid 1 \leq i \leq m\}$.
In the case $m = 1$ the goal is called  {\em atomic}.
The following definition relies on  $\SQCHL(\simrel,\qdom,\cdom)$-derivability to provide a natural
declarative notion of goal solution:


\begin{defn}[Goal Solutions]
\label{dfn:goalsol}
Assume a given $\sqclp{\simrel}{\qdom}{\cdom}$-program $\Prog$ and a goal $G$ for $\Prog$ with the syntax displayed above.
Then:
\begin{enumerate}
 \item
 A {\em solution} for $G$ is any triple $\langle \sigma, \mu, \Pi \rangle$ such that $\sigma$ is a $\cdom$-substitution, $W\!\mu \in \aqdom$ for all $W \in \domset{\mu}$, $\Pi$ is a satisfiable and finite set of atomic $\cdom$-constraints and the following two conditions hold for all $i = 1 \ldots m$:
$W_i\mu = d_i \dgeq^? \!\beta_i$ and $\Prog \sqchlrdc \cqat{A_i\sigma}{W_i\mu}{\Pi}$.
The set of all solutions for $G$ w.r.t. $\Prog$ is noted $\Sol{\Prog}{G}$.
\item
A solution $\langle \eta, \rho, \Pi \rangle$ for $G$ is called {\em ground} iff $\Pi = \emptyset$ and $\eta \in \mbox{Val}_\cdom$ is a variable valuation such that $A_i\eta$ is a ground atom for all $i = 1 \ldots m$.
The set of all ground solutions for $G$ w.r.t. $\Prog$ is noted $\GSol{\Prog}{G} \subseteq \Sol{\Prog}{G}$.
\item
A ground solution $\langle \eta, \rho, \emptyset \rangle \in \GSol{\Prog}{G}$ is {\em subsumed} by $\langle \sigma, \mu, \Pi \rangle$ iff there is some $\nu \in \Solc{\Pi}$ s.t. $\eta =_{\varset{G}} \sigma\nu$ and $W_i\rho \dleq W_i\mu$ for $i = 1 \ldots m$.\mathproofbox
\end{enumerate}
\end{defn}


A possible goal $G_s$ for the library program displayed in Figure \ref{fig:library} is
\begin{center}
$G_s$ ~:~ \it search(german, essay, intermediate, ID)\#W $\sep$ \!W $\ge$ 0.65
\end{center}
and one solution for $G_s$ is
$\langle \{\textit{ID} \mapsto 4\}, \{\textit{W} \mapsto 0.7\}, \emptyset \rangle$.
In this simple case, the constraint set $\Pi$ within the solution is empty.
Other examples of goal solutions can be found in \cite{RR10TR} and Sections \ref{sec:implemen} and \ref{sec:practical} below.


In practice, users of SQCLP languages will rely on  some available {\em goal solving system} for computing goal solutions.
The following definition specifies two important abstract properties of goal solving systems
which will be taken as a reference for the implementation presented in this paper.

\begin{defn}[Correct Abstract Goal Solving Systems]
\label{dfn:goalsolsys}
An {\em abstract goal solving system} for $\sqclp{\simrel}{\qdom}{\cdom}$ is any device that takes a program $\Prog$ and a goal $G$ as input and yields various triples $\langle \sigma, \mu, \Pi \rangle$, called {\em computed answers}, as outputs.
Such a goal solving system is called:
\begin{enumerate}
\item
{\em Sound} iff every computed answer is a solution $\langle \sigma, \mu, \Pi \rangle \in \mbox{Sol}_\Prog(G)$.
\item
{\em Weakly complete} iff every ground solution $\langle \eta, \rho, \emptyset \rangle \in \mbox{GSol}_\Prog(G)$ is subsumed by some computed answer.
\item
{\em Correct} iff it is both sound and weakly complete. \mathproofbox
\end{enumerate}
\end{defn}


Every goal solving system for a SQCLP instance should be sound and ideally also weakly complete.
In principle, goal solving systems with these properties for extensions of the classical LP paradigm can be formalized as extensions of the well-known SLD-resolution method \cite{Llo87,Apt90}.
A sound and complete extensions of SLD-resolution for the CLP scheme can be found e.g. in \cite{JMM+98},
and several extensions of SLD resolution for  LP languages  aiming at uncertain reasoning SQCLP scheme have been mentioned in Section \ref{sec:introduction}.


Our aim in this paper is to present an implementation based on a semantically correct program transformation from SQCLP into CLP, rather than developing a sound and complete extension of SLD resolution.
Nevertheless, both our implementation and SLD-based approaches for SLP languages in the line of \cite{Ses02} must share the ability to solve unification problems w.r.t. to a proximity relation $\simrel : S \times S \to [0,1]$ over signature symbols, which is assumed to be transitive in \cite{Ses02} but not in our setting.
The lack of transitivity makes a crucial difference w.r.t. the behavior of unification algorithms.
In the rest of this section we briefly discuss the problem by means of a simple example.

\cite{Ses02} presents a flexible unification algorithm for solving unification problems represented as systems of the form $S\sep\alpha$, where $S$ is a set of equations between terms and $\alpha$ is a certainty degree.
A solution of such a system is any substitution $\theta$ which verifies $\simrel(s\theta, t\theta) \geq \alpha$ for all equations $s == t$ belonging to $S$.
This notion of solution is consistent with the declarative semantics of the SQCLP scheme
(more specifically, with Definition \ref{dfn:goalsol}), even in the case that $\simrel$ is a non-transitive proximity relation.
Following a traditional approach, Sessa presents the flexible unification algorithm as set of transformation rules which convert systems $S\sep\alpha$ into solved form systems which represent unifiers.
The transformations are similar to those presented in e.g. Section 4.6 of \cite{BN98} for the case of classical syntactic unification,
extended with suitable computations to update $\alpha$ during the process, taking the given similarity relation $\simrel$ into account.
One of the transformations allows to transform a system of the form $X == t, S \sep \alpha$ into $S\{X \mapsto t\}\sep \alpha$ (provided that $X$ is not identical to $t$ and does not occur in $t$, the so-called occurs check). Unfortunately, this transformation can lose solutions in case that $\simrel$ is not transitive. Consider for instance the following example:

\begin{exmp}
\label{exmp:incomplete-sld}
Assume constants $a$, $b$, $c$ and a non-transitive proximity relation $\simrel$ such that
$\simrel(a,b) = \simrel(b,a) = 0.7$; $\simrel(a,c) = \simrel(c,a) = 0.8$; $\simrel(b,c) = \simrel(c,b) = 0$.
Then, the substitution $\theta = \{X \mapsto a\}$ is obviously a solution of the unification problem
$X == b$, $X == c \sep 0.7$.
Nevertheless, the unification algorithm presented in \cite{Ses02} and related papers fails without computing any solution:
$$X==b,\,X==c \sep 0.7 ~\Longrightarrow~ X==c\ \{X \mapsto b\} \sep 0.7 ~\Longrightarrow~ f\!ail \enspace $$
The second transformation step leads to {\em fail} because $X==c\ \{X \mapsto b\} \sep 0.7$ is the same as
$b==c \sep 0.7$ and $\simrel(b,c) = 0 < 0.7$.
Should $\simrel$ satisfy transitivity, then $\simrel(b,c) = \simrel(c,b) \geq 0.7$, and Sessa's unification algorithm would compute the unifier $\sigma = \{X \mapsto b\}$ as follows:
$$X==b,\,X==c \sep 0.7 ~\Longrightarrow~ X==c\ \{X \mapsto b\} \sep 0.7 ~\Longrightarrow~ \{X \mapsto b\} \sep 0.7$$
Note that $\sigma$ is more general than $\theta$ in the sense that $\simrel(\theta,\sigma\theta) = \simrel(\theta,\sigma) \geq 0.7$.
Therefore this example does not contradict the completeness of Sessa's unification algorithm for the case of (transitive) similarity relations. \mathproofbox
\end{exmp}

Even in the case that $\simrel$ is transitive, we have found examples showing that a goal solving system based on Sessa's unification algorithm can fail to compute some valid solutions for $\sqclp{\simrel}{\qdom}{\cdom}$-programs whose clauses use attenuation factors other than $\tp$.
The unification algorithm underlaying the implementations presented in Section \ref{sec:practical}---based on the program transformations from Section \ref{sec:implemen}---avoids the problematic transformation step $X == t, S \sep \alpha \Longrightarrow S\{X \mapsto t\}\sep \alpha$, that might cause incompleteness;
instead, Prolog's backtracking is used to implement the effect of a non-deterministic choice between
 several transformation steps $X == c(\ntup{t}{n}),S\sep\alpha \Longrightarrow X_1==t_1, \ldots, X_n==t_n,S\mu\sep\alpha$, where $X_1, \ldots, X_n$ are fresh variables and $\mu = \{X\mapsto c'(\ntup{X}{n})\}$ for some possible choice of $c'$ such that $\simrel(c,c') \geq \alpha$.

As an optimization, our prototype system allows the user to use a directive whose effect is that the system avoids the backtracking search just discussed and implements just the effect of the transformation $X == t, S \sep \alpha \Longrightarrow S\{X \mapsto t\}\sep \alpha$.
When including this directive, the user runs the risk of losing some valid solutions.
We conjecture that no incompleteness occurs in the case of $\sqclp{\simrel}{\qdom}{\cdom}$-programs based on a transitive $\simrel$ and whose clauses do not use attenuation factors other than $\tp$; i.e. SLP programs enriched with constraint solving.

\section{The Schemes QCLP \& CLP as Specializations of SQCLP}
\label{sec:qclpclp}

As discussed in the concluding section of \cite{RR10}, 
several specializations of the SQCLP scheme can be obtained by partial instantiation of its parameters.
In particular, QCLP and CLP can be defined as schemes with instances:
$$
\begin{array}{r@{\hspace{1mm}}c@{\hspace{1mm}}l}
\qclp{\qdom}{\cdom} &\eqdef& \sqclp{\sid}{\qdom}{\cdom}\\
\clp{\cdom} &\eqdef& \sqclp{\sid}{\B}{\cdom} = \qclp{\B}{\cdom}\\
\end{array}
$$
where $\sid$ is the {\em identity} proximity relation
and $\B$ is the qualification domain including just the two classical boolean values.
As explained in the introduction, QCLP and CLP are the targets of the two program transformations to 
be developed in Section \ref{sec:implemen}.
In this  brief section we provide an explicit description of the syntax and semantics of these two schemes,
derived from their behavior as specializations of SQCLP.

\subsection{Presentation of the QCLP Scheme}
\label{sec:cases:qclp}


As already explained, the instances of QCLP can be defined by the equation QCLP($\qdom$,$\cdom$) = SQCLP($\sid$,$\qdom$,$\cdom$).
Due to the admissibility of the parameter triple $\langle \sid, \qdom, \cdom \rangle$, the qualification domain $\qdom$ must be (existentially) expressible in the constraint domain $\cdom$.
Technically, the QCLP scheme can be seen as a common extension of the classical CLP scheme for Constraint Logic Programming \cite{JL87,JMM+98} and the QLP scheme for Qualified Logic Programming originally introduced in \cite{RR08}.
Intuitively,  QCLP programming behaves like SQCLP programming, except that proximity information other than the identity is not available for proving equalities.


Program clauses and observable qc-atoms in QCLP are defined in the same way as in  SQCLP.
The library program $\Prog_{\!s}$ in Figure \ref{fig:library} becomes a $\qclp{\,\U}{\rdom}$-program $\Prog'_{\!s}$ just by replacing $\sid$ for $\simrel$.
Of course, $\Prog'_{\!s}$ does not support flexible unification as it was the case with $\Prog_{\!s}$.


As explained in Section \ref{sec:sqclp}, the proof system consisting of the three displayed in Figure \ref{fig:sqchl} characterizes the declarative semantics of
a given $\sqclp{\simrel}{\qdom}{\cdom}$-program $\Prog$. In the particular case $\simrel = \sid$, the inference rules specialize to those  displayed in Figure \ref{fig:qchl}, yielding a formal proof system called \emph{Qualified Constrained Horn Logic} -- in symbols, $\QCHL(\qdom,\cdom)$ -- 
which characterizes the declarative semantics of a given $\qclp{\qdom}{\cdom}$-program $\Prog$.
Note that rule {\bf SQEA} depends on a relation $\approx_{\Pi}$ between terms that
is defined to behave the same as the specialization of $\approx_{d,\Pi}$ to the case $\simrel = \sid$.
It is easily checked that $t \approx_{\Pi} s$ does not depend on $d$ and holds iff $\Pi \model{\cdom} t == s$.
Both $\approx_{d,\Pi}$ and $\approx_{\Pi}$ allow to use the constraints within $\Pi$ when 
deducing equations. However, $c(\ntup{t}{n}) \approx_{\Pi} c'(\ntup{s}{n})$ never holds in the case that $c$ and $c'$ are not syntactically identical.

 \begin{figure}[h]
  \figrule
  \centering
  \begin{tabular}{l}\\
  \textbf{QDA} ~ $\displaystyle\frac
    {~ (~ \cqat{(t'_i == t_i\theta)}{d_i}{\Pi} ~)_{i = 1 \ldots n} \quad (~ \cqat{B_j\theta}{e_j}{\Pi} ~)_{j = 1 \ldots m} ~}
    {\cqat{p(\ntup{t'}{n})}{d}{\Pi}}$ \\ \\
  \hspace{11mm} if $(p(\ntup{t}{n}) \qgets{\alpha} \qat{B_1}{w_1}, \ldots, \qat{B_m}{w_m}) \in \Prog$, $\theta$ subst., \\
  \hspace{11mm} $e_j \dgeq^? w_j ~ (1 \le j \le m)$ and $d \dleq \bigsqcap_{i = 1}^{n}d_i \sqcap \alpha \circ \bigsqcap_{j = 1}^m e_j$.\\ 
  \\
  \textbf{QEA} ~ $\displaystyle\frac
    {}
    {\quad \cqat{(t == s)}{d}{\Pi} \quad}$ ~
  if $t \approx_{\Pi} s$.
  \hspace{8mm} \textbf{QPA} ~ $\displaystyle\frac
    {}
    {\quad \cqat{\kappa}{d}{\Pi} \quad}$ ~
  if $\Pi \model{\cdom} \kappa$. \\
  \\
  \end{tabular}
  \caption{Qualified Constrained Horn Logic}
  \label{fig:qchl}
  \figrule
\end{figure}

$\SQCHL(\simrel,\qdom,\cdom)$ proof trees and the notations related to them can be naturally specialized to $\QCHL(\qdom,\cdom)$.
In particular, we will use the notation $\Prog \qchldc \varphi$ (resp. $\Prog \qchldcn{k} \varphi$)
to indicate that the qc-atom $\varphi$ can be inferred in $\QCHL(\qdom,\cdom)$ from the program $\Prog$
(resp. it can be inferred by using exactly $k$ \textbf{QDA} inference steps).
Theorem \ref{thm:SQCHL-leastmodel} also specializes to QCHL, yielding the following result:

\begin{thm}[Logical characterization of least program models in QCHL]
\label{thm:QCHL-leastmodel} For any $\qclp{\qdom}{\cdom}$-program $\Prog$, its least
model can be characterized as: $$\Mp = \{\varphi \mid \varphi \mbox{ is an observable defined
qc-atom and }\Prog \qchldc \varphi\}. \enspace \mathproofbox$$
\end{thm}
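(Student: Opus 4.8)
The plan is to derive Theorem~\ref{thm:QCHL-leastmodel} as a direct specialization of Theorem~\ref{thm:SQCHL-leastmodel}, exploiting the defining identity $\qclp{\qdom}{\cdom} = \sqclp{\sid}{\qdom}{\cdom}$. Since $\sid$ is (trivially) a reflexive and symmetric proximity relation, the triple $\langle \sid, \qdom, \cdom \rangle$ is admissible whenever $\qdom$ is expressible in $\cdom$, and every notion attached to the scheme $\sqclp{\simrel}{\qdom}{\cdom}$---qualified clauses, observable defined qc-atoms, program interpretations, and the least model $\Mp$---applies verbatim to $\Prog$ read as a $\sqclp{\sid}{\qdom}{\cdom}$-program. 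In particular, Theorem~\ref{thm:SQCHL-leastmodel} instantiated at $\simrel = \sid$ already yields $\Mp = \{\varphi \mid \varphi \mbox{ is an observable defined qc-atom and } \Prog \sqchln{\sid}{\qdom}{\cdom}{} \varphi\}$. Hence it suffices to show that $\QCHLdc$-derivability and $\SQCHL(\sid,\qdom,\cdom)$-derivability coincide, i.e.\ $\Prog \qchldc \varphi$ iff $\Prog \sqchln{\sid}{\qdom}{\cdom}{} \varphi$ for every qc-atom $\varphi$.

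First I would check, rule by rule, that the two proof systems licence exactly the same inference steps once $\simrel$ is fixed to $\sid$. Rules \textbf{SQPA} and \textbf{QPA} are literally identical. For \textbf{SQEA}, the side condition uses $\approx_{d,\Pi}$, which by the construction recalled in Section~\ref{sec:cases:qclp} specializes, when $\simrel = \sid$, to the relation $\approx_{\Pi}$ governing \textbf{QEA} (indeed $t \approx_{\Pi} s$ iff $\Pi \model{\cdom} t == s$, independently of $d$); so the two equation rules licence the same qc-atoms. The only step requiring genuine (if routine) attention is \textbf{SQDA} versus \textbf{QDA}: in \textbf{SQDA} the side condition $\simrel(p',p) = d_0 \neq \bt$ becomes $\sid(p',p) \neq \bt$, which by definition of the identity proximity relation forces $p' = p$ and $d_0 = \tp$. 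Consequently the guard $d \dleq \bigsqcap_{i=0}^{n} d_i \sqcap \alpha \circ \bigsqcap_{j=1}^{m} e_j$ collapses to $d \dleq \bigsqcap_{i=1}^{n} d_i \sqcap \alpha \circ \bigsqcap_{j=1}^{m} e_j$ since $d_0 = \tp$ is neutral for $\sqcap$, and the conclusion $\cqat{p'(\ntup{t'}{n})}{d}{\Pi}$ becomes $\cqat{p(\ntup{t'}{n})}{d}{\Pi}$; this is precisely an instance of \textbf{QDA}, and conversely every \textbf{QDA} instance arises this way by taking $p' = p$ and $d_0 = \tp$.

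With the rule-by-rule correspondence in hand, the equivalence of the two derivability relations follows by a straightforward induction on the structure of proof trees (equivalently, on the number $k$ of defined-atom steps, using the refined notations $\qchldcn{k}$ and $\sqchln{\sid}{\qdom}{\cdom}{k}$): any $\QCHLdc$ proof tree for $\varphi$ is, step for step, a $\SQCHL(\sid,\qdom,\cdom)$ proof tree for $\varphi$, and vice versa. Substituting this equivalence into the $\simrel = \sid$ instance of Theorem~\ref{thm:SQCHL-leastmodel} gives the claimed characterization of $\Mp$. I expect the \textbf{SQDA}/\textbf{QDA} comparison to be the only point that needs care---specifically the observation that $\sid(p',p) \neq \bt$ forces both $p' = p$ and the disappearance of the $d_0$ factor from the qualification bound---while everything else is inherited wholesale from the general scheme.
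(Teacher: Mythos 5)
Your proposal is correct and follows essentially the same route as the paper: the paper also obtains this theorem as the $\simrel = \sid$ instance of Theorem~\ref{thm:SQCHL-leastmodel}, noting (in Section~\ref{sec:cases:qclp}) that the $\SQCHL(\simrel,\qdom,\cdom)$ inference rules specialize to those of $\QCHL(\qdom,\cdom)$ and that $\approx_{d,\Pi}$ collapses to $\approx_{\Pi}$. Your rule-by-rule verification, in particular that $\sid(p',p) \neq \bt$ forces $p' = p$ and $d_0 = \tp$ so that the \textbf{SQDA} guard collapses to the \textbf{QDA} guard, just makes explicit the routine check the paper leaves implicit.
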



Concerning goals and their solutions, their specialization to the particular case $\simrel = \sid$ leaves the syntax of goals $G$ unaffected and leads to the following definition, almost identical to Definition \ref{dfn:goalsol}:

\begin{defn}[Goal Solutions in QCLP]
\label{dfn:qclp-goalsol} Assume a given $\qclp{\simrel}{\qdom}{\cdom}$-program $\Prog$ and a goal
$G: (~ \qat{A_i}{W_i}, W_i \dgeq^? \!\beta_i ~)_{i = 1 \ldots m}$. Then:
\begin{enumerate}
 \item
 A {\em solution} for $G$ is any triple $\langle \sigma, \mu, \Pi \rangle$ such that
 $\sigma$ is a $\cdom$-substitution, $W\mu \in \aqdom$  for all $W \in \domset{\mu}$,
 $\Pi$ is a satisfiable and finite set of atomic $\cdom$-constraints,
 and the following two conditions hold for all $i = 1 \ldots m$:
$W_i\mu = d_i \dgeq^? \!\beta_i$ and $\Prog  \qchldc \cqat{A_i\sigma}{W_i\mu}{\Pi}$.
The set of all solutions for $G$ is noted $\Sol{\Prog}{G}$.
\item
A solution $\langle \eta, \rho, \Pi \rangle$ for $G$ is called {\em ground} iff $\Pi = \emptyset$ and
$\eta \in \mbox{Val}_\cdom$ is a variable valuation such that $A_i\eta$ is a ground atom for all $i = 1 \ldots m$.
The set of all ground solutions for $G$ is noted $\GSol{\Prog}{G} \subseteq \Sol{\Prog}{G}$.
\item
A ground solution $\langle \eta, \rho, \emptyset \rangle \in \GSol{\Prog}{G}$ is {\em subsumed} by
$\langle \sigma, \mu, \Pi \rangle$ iff
there is some $\nu \in \Solc{\Pi}$ s.t. $\eta =_{\varset{G}} \sigma\nu$ and $W_i\rho \dleq W_i\mu$ for $i = 1 \ldots m$.
\mathproofbox
\end{enumerate}
\end{defn}


Finally, the notion  of correct abstract goal solving system for $\mbox{SQCLP}$ given in Definition
\ref{dfn:goalsolsys} specializes to $\mbox{QCLP}$ without any formal change. Therefore, we state no
new definition at this point.
\subsection{Presentation of the CLP Scheme}
\label{sec:cases:clp}


As already explained, the instances of CLP can be defined by the equation CLP($\cdom$) = SQCLP($\sid,\B,\cdom$), 
or equivalently, CLP($\cdom$) = QCLP($\B,\cdom$).
Due to the fixed  choice $\qdom = \B$, the only qualification value $d \in \aqdom$ available for use as attenuation factor or threshold value is $d = \tp$.
Therefore, CLP can only include threshold values equal to `?' and attenuation values equal to the top element $\tp = true$ of $\B$.
As explained in Section \ref{sec:sqclp}, such trivial threshold and attenuation values can be omitted,
and CLP clauses can  be written with the simplified syntax  $A \gets B_1, \ldots, B_m$.


Since $\tp = true$ is the only non-trivial qualification value available in CLP, qc-atoms $\cqat{A}{d}{\Pi}$ 
are always of the form $\cqat{A}{true}{\Pi}$ and can be written as $\cat{A}{\Pi}$.
Moreover, all the side conditions for the inference rule {\bf QDA} in Figure \ref{fig:qchl}
become trivial when specialized to the case $\qdom = \B$. 
Therefore, the specialization of $\QCHL(\qdom,\cdom)$ to the case $\qdom = \B$ leads to the
formal proof system called \emph{Constrained Horn Logic} -- in symbols, $\CHL(\cdom)$ --
consisting of the three inference rules displayed in Figure \ref{fig:chl},
which characterizes the declarative semantics of a given $\clp{\cdom}$-program $\Prog$.

\begin{figure}[h]
  \figrule
  \centering
  \begin{tabular}{l}\\
  \textbf{DA} ~ $\displaystyle\frac
    {~ (~ \cat{(t'_i == t_i\theta)}{\Pi} ~)_{i = 1 \ldots n} \quad (~ \cat{B_j\theta}{\Pi} ~)_{j = 1 \ldots m} ~}
    {\cat{p(\ntup{t'}{n})}{\Pi}}$ \\ \\
  \hspace{9mm} if $(p(\ntup{t}{n}) \gets B_1, \ldots, B_m) \in \Prog$ and $\theta$ subst. \\
   \\
  \textbf{EA} ~ $\displaystyle\frac
    {}
    {\quad \cat{(t == s)}{\Pi} \quad}$ ~
  if $t \approx_{\Pi} s$.
  \hspace{10mm} \textbf{PA} ~ $\displaystyle\frac
    {}
    {\quad \cat{\kappa}{\Pi} \quad}$ ~
  if $\Pi \model{\cdom} \kappa$. \\
  \\
  \end{tabular}
  \caption{Constrained Horn Logic}
  \label{fig:chl}
  \figrule
\end{figure}

$\QCHL(\qdom,\cdom)$ proof trees and the notations related to them can be naturally specialized to $\CHL(\cdom)$.
In particular, we will use the notation $\Prog \chl \varphi$ (resp. $\Prog \chl{k} \varphi$)
to indicate that the qc-atom $\varphi$ can be inferred in $\CHL(\cdom)$ from the program $\Prog$
(resp. it can be inferred by using exactly $k$ \textbf{DA} inference steps).
Theorem \ref{thm:QCHL-leastmodel} also specializes to CHL, yielding the following result:

\begin{thm}[Logical characterization of least program models in CHL]
\label{thm:CHL-leastmodel} For any $\clp{\cdom}$-program $\Prog$, its least
model can be characterized as: $$\Mp = \{\varphi \mid \varphi \mbox{ is an observable defined
qc-atom and }\Prog \chl \varphi\}. \enspace \mathproofbox$$
\end{thm}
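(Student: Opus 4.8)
The plan is to obtain this result as an immediate specialization of the already established Theorem~\ref{thm:SQCHL-leastmodel}, in exactly the way Theorem~\ref{thm:QCHL-leastmodel} was derived. By definition $\clp{\cdom} = \qclp{\B}{\cdom} = \sqclp{\sid}{\B}{\cdom}$, so any $\clp{\cdom}$-program $\Prog$ is literally a $\sqclp{\sid}{\B}{\cdom}$-program, and the least model $\Mp$, being defined within the SQCLP framework of interpretations and their closure condition, is the very same object in both readings. Hence Theorem~\ref{thm:SQCHL-leastmodel} already yields $\Mp = \{\varphi \mid \varphi \mbox{ is an observable defined qc-atom and } \Prog \sqchln{\sid}{\B}{\cdom}{} \varphi\}$, and all that remains is to check that, under the fixed choices $\simrel = \sid$ and $\qdom = \B$, both the notion of observable defined qc-atom and the inference relation collapse to the ones appearing in the statement.

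First I would verify the specialization of the observable defined qc-atoms. Since $\qdom = \B$ has $\aqdom = \{\tp\}$, the only qualification degree admissible in an observable qc-atom is $d = \tp$; thus every observable defined qc-atom has the form $\cqat{A}{\tp}{\Pi}$ with $A$ a defined atom and $\Pi$ satisfiable, which is precisely what is written $\cat{A}{\Pi}$ in the CLP reading.

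Next I would check, rule for rule, that $\SQCHL(\sid,\B,\cdom)$ coincides with $\CHL(\cdom)$ of Figure~\ref{fig:chl}; this is the one step that needs care. For \textbf{SQDA} versus \textbf{DA}, fixing $\simrel = \sid$ makes the side condition $\simrel(p',p) = d_0 \neq \bt$ hold only when $p' = p$ (and then $d_0 = \tp$), leaving the head predicate symbol unchanged. Moreover, over $\B$ every derivable degree equals $\tp$ and the attenuation factor is forced to $\alpha = \tp$, so the threshold conditions $e_j \dgeq^? w_j$ hold automatically (whether $w_j = {?}$ or $w_j = \tp$), and the bound $d \dleq \bigsqcap_{i=0}^{n} d_i \sqcap \alpha \circ \bigsqcap_{j=1}^{m} e_j$ reduces: since $d_i = e_j = \alpha = \tp$, since $\tp \circ \tp = \tp$, and since any glb of copies of $\tp$ (including the empty one) equals $\tp$, it becomes the trivially satisfied $d \dleq \tp$. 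For \textbf{SQEA} versus \textbf{EA}, the relation $\approx_{d,\Pi}$ specializes to $\approx_{\Pi}$, which (as noted before Theorem~\ref{thm:QCHL-leastmodel}) holds iff $\Pi \model{\cdom} t == s$, independently of $d$; and \textbf{SQPA} versus \textbf{PA} carries the identical side condition $\Pi \model{\cdom} \kappa$. Consequently $\Prog \sqchln{\sid}{\B}{\cdom}{} \varphi$ iff $\Prog \chlc \varphi$.

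Combining these two observations, the characterization supplied by Theorem~\ref{thm:SQCHL-leastmodel} turns into exactly the one claimed, which completes the argument. The hard part is nothing more than the bookkeeping of the third step---confirming that every degree-dependent side condition of the defined-atom rule degenerates to a triviality over $\B$---and this is routine given the qualification-domain axioms; in particular no fresh induction on derivations is needed, since the soundness-and-completeness content of the characterization is inherited wholesale from Theorem~\ref{thm:SQCHL-leastmodel}.
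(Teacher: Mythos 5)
Your proposal is correct and follows essentially the same route as the paper: the paper also obtains this theorem purely by specialization (it instantiates Theorem \ref{thm:QCHL-leastmodel} at $\qdom = \B$, which itself is Theorem \ref{thm:SQCHL-leastmodel} at $\simrel = \sid$, noting that all degree-dependent side conditions of \textbf{QDA} trivialize over $\B$ and that observable qc-atoms collapse to the form $\cat{A}{\Pi}$). Your only deviation is collapsing the two-step specialization $\mbox{SQCHL} \to \mbox{QCHL} \to \mbox{CHL}$ into a single step, which is an immaterial difference.
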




Concerning goals and their solutions, their specialization to the scheme CLP leads to the following definition:

\begin{defn}[Goals and their Solutions in $\mbox{CLP}$]
\label{dfn:clp-goalsol}
Assume a given $\clp{\cdom}$-program $\Prog$.  Then:
\begin{enumerate}
\item
Goals for $\Prog$ have the form $G:\, A_1, \ldots, A_m$,
abbreviated as $(~ A_i ~)_{i = 1 \ldots m}$, where $A_i ~ (1 \leq i \leq m)$ are atoms.
 \item
 A {\em solution} for a goal $G$ is any pair $\langle \sigma, \Pi \rangle$ such that
 $\sigma$ is a $\cdom$-substitution,
 $\Pi$ is a satisfiable and finite set of atomic $\cdom$-constraints,
 and $\Prog \chlc \cat{A_i\sigma}{\Pi}$ holds for all $i = 1 \ldots m$.
The set of all solutions for $G$ is noted $\Sol{\Prog}{G}$.
\item
A solution $\langle \eta, \Pi \rangle$ for $G$ is called {\em ground} iff $\Pi = \emptyset$ and
$\eta \in \mbox{Val}_\cdom$ is a variable valuation such that $A_i\eta$ is a ground atom for all $i = 1 \ldots m$.
The set of all ground solutions for $G$ is noted $\GSol{\Prog}{G}$.
Obviously, $\GSol{\Prog}{G} \subseteq \Sol{\Prog}{G}$.
\item
A ground solution $\langle \eta, \emptyset \rangle \in \GSol{\Prog}{G}$ is {\em subsumed} by
$\langle \sigma, \Pi \rangle$ iff there is some $\nu \in \Solc{\Pi}$ s.t.
$\eta =_{\varset{G}} \sigma\nu$. \mathproofbox
\end{enumerate}
\end{defn}

The notion  of correct abstract goal solving system for $\mbox{SQCFLP}$ given in Definition \ref{dfn:goalsolsys}
specializes to $\mbox{CLP}$ with only minor  formal changes, as follows:

\begin{defn}[Correct Abstract Goal Solving Systems for $\mbox{CLP}$]
\label{dfn:clp-goalsolsys}
A {\em goal solving system} for $\clp{\cdom}$
is any effective procedure which takes a program $\Prog$ and a goal $G$ as input
and yields various pairs $\langle \sigma, \Pi \rangle$,
called {\em computed answers}, as outputs.
Such a goal solving system is called:
\begin{enumerate}
 \item
 {\em Sound} iff every computed answer  is a solution $\langle \sigma,  \Pi \rangle \in \mbox{Sol}_\Prog(G)$.
 \item
 {\em Weakly complete} iff every ground solution $\langle \eta,  \emptyset \rangle \in \mbox{GSol}_\Prog(G)$
 is subsumed by some computed answer.
 \item
 {\em Correct} iff it is both sound and weakly complete.
\enspace \mathproofbox
\end{enumerate}
\end{defn}


We close this Subsection with a technical lemma that will be useful for proving some results in Subsection \ref{sec:implemen:QCLP2CLP}:

\begin{lem}
\label{lema:dec}
Assume an existential $\cdom$-constraint $\pi(\ntup{X}{n}) = \exists Y_1\ldots\exists Y_k(B_1 \land \ldots \land B_m)$
with free variables $\ntup{X}{n}$ and a given $\clp{\cdom}$-program $\Prog$ including the clause
$C:\, p(\ntup{X}{n}) \gets B_1, \ldots, B_m$, where $p \in DP^n$ does not occur at the head of any other clause of $\Prog$.
Then, for any n-tuple $\ntup{t}{n}$ of $\cdom$-terms and any finite and satisfiable $\Pi \subseteq \Con{\cdom}$,
one has: 
\begin{enumerate}
\item
$\Prog \chlc (\cat{p(\ntup{t}{n})}{\Pi}) \Longrightarrow \Pi \models_{\cdom} \pi(\ntup{t}{n})$,
where $\pi(\ntup{t}{n})$ stands for the result of applying the substitution $\{\ntup{X}{n} \mapsto \ntup{t}{n}\}$
to $\pi(\ntup{X}{n})$.
\item
The opposite implication
$\Pi \models_{\cdom} \pi(\ntup{t}{n}) \Longrightarrow  \Prog \chlc (\cat{p(\ntup{t}{n})}{\Pi})$
holds if $\ntup{t}{n}$ is a ground term tuple.
Note that for ground $\ntup{t}{n}$ the constraint entailment $\Pi \models_{\cdom} \pi(\ntup{t}{n})$
simply means that $\pi(\ntup{t}{n})$ is true in $\cdom$.
\item 
$\Pi \models_{\cdom} \pi(\ntup{t}{n}) \Longrightarrow  \Prog \chlc (\cat{p(\ntup{t}{n})}{\Pi})$
may fail if $\ntup{t}{n}$ is not a ground term tuple.
\end{enumerate}
\end{lem}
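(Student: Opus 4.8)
The plan is to exploit the very restricted shape of $\CHL(\cdom)$ derivations concluding a $p$-atom, using the hypothesis that $p$ heads no clause other than $C$, and to read off a ``witness substitution'' from the single applicable \textbf{DA} step. For part~(1), I would start from a derivation $\Prog \chlc (\cat{p(\ntup{t}{n})}{\Pi})$. Since $p(\ntup{t}{n})$ is a defined atom, the root inference of its proof tree can only be \textbf{DA}, and since $p$ occurs at the head of no clause but $C$, that step uses $C$ with some substitution $\theta$. Its premises are $\cat{(t_i == X_i\theta)}{\Pi}$ for $1 \le i \le n$ and $\cat{B_j\theta}{\Pi}$ for $1 \le j \le m$. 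Because $\pi$ is a constraint, each $B_j$ (hence $B_j\theta$) is a primitive constraint or an equation, so these premises are inferable only by \textbf{PA} or \textbf{EA}; in either case they deliver $\Pi \models_{\cdom} t_i == X_i\theta$ and $\Pi \models_{\cdom} B_j\theta$. Writing $s_i = X_i\theta$ and $u_l = Y_l\theta$, I would then show that any $\eta \in \Solc{\Pi}$ satisfies $\pi(\ntup{t}{n})$ by taking the values $u_l\eta$ as existential witnesses for the $Y_l$: the head equations give $t_i\eta = s_i\eta$ and the body premises make each $B_j$ true under this valuation, which is exactly what $\pi(\ntup{t}{n})$ asserts. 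Hence $\Solc{\Pi} \subseteq \Solc{\pi(\ntup{t}{n})}$, i.e. $\Pi \models_{\cdom} \pi(\ntup{t}{n})$.

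For part~(2), assume $\ntup{t}{n}$ ground and $\Pi \models_{\cdom} \pi(\ntup{t}{n})$. As the statement notes, for ground $\ntup{t}{n}$ this simply means $\pi(\ntup{t}{n})$ is true in $\cdom$, so there exist ground values $v_1,\ldots,v_k$ witnessing the existential quantifiers, i.e. making each $B_j\{\ntup{X}{n} \mapsto \ntup{t}{n}\}\{Y_l \mapsto v_l\}$ true in $\cdom$. I would take $\theta = \{X_i \mapsto t_i, Y_l \mapsto v_l\}$ and assemble a \textbf{DA} step with clause $C$: the head equations become $t_i == t_i$ (trivially entailed, by \textbf{EA}), while each body premise $B_j\theta$ is a \emph{ground} primitive constraint or equation true in $\cdom$, hence entailed by the satisfiable $\Pi$ and derivable by \textbf{PA} or \textbf{EA}. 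Composing these via \textbf{DA} yields $\Prog \chlc (\cat{p(\ntup{t}{n})}{\Pi})$. The role of groundness is precisely that it lets me convert the semantic witnesses $v_l$ into a single syntactic substitution.

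For part~(3), I would exhibit a counterexample over $\rdom$ in which the existential witness for $Y$ is not expressible as a $\cdom$-term. Concretely, take $n=k=m=1$, the clause $C:\, p(X) \gets Y \times Y == X$ (so that $\pi(X) = \exists Y (Y \times Y == X)$), the non-ground argument tuple consisting of the single data variable $X$, and $\Pi = \{X \ge 0\}$. Every solution of $\Pi$ satisfies $\pi(X)$ by choosing $Y = \pm\sqrt{X}$, so $\Pi \models_{\cdom} \pi(X)$. However, by the analysis in part~(1), any $\CHL(\cdom)$ derivation of $\cat{p(X)}{\Pi}$ would force a single term $s = Y\theta$ with $\Pi \models_{\cdom} s \times s == X$; no such $\cdom$-term exists, since it would have to denote $\sqrt{X}$ uniformly for all $X \ge 0$, which is not a term of the real signature. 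Hence no $\CHL(\cdom)$ proof of $\cat{p(X)}{\Pi}$ exists, witnessing the claimed failure.

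The crux throughout is the contrast between semantic and syntactic witnessing: parts~(1) and~(2) are essentially routine once the proof tree is dissected, the only care being that the body premises are necessarily equational or primitive and therefore reduce to $\cdom$-entailments. I expect the delicate point to be part~(3), where I must argue that the existential quantifier genuinely cannot be Skolemized by a term of the constraint language; a non-ground left-hand side is exactly what makes the uniform term witness unavailable, so the example must be chosen so that the $Y$-witness depends non-termwise on $X$.
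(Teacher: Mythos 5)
Your proof is correct and follows essentially the same route as the paper's: parts (1) and (2) dissect and assemble, respectively, the unique \textbf{DA} step based on clause $C$ exactly as the paper does, and part (3) exhibits a counterexample over $\rdom$ of the same shape (non-ground argument $X$, $\Pi = \{X \geq 0\}$, and an existential witness that no $\rdom$-term can denote uniformly). The only difference is cosmetic: the paper takes $\pi(X) = \exists Y(op_{+}(Y,Y,X))$ whose witness $X/2$ is not a term because terms in the relational formalization of $\rdom$ contain no arithmetic operators, while you take $\exists Y(Y \times Y == X)$ with witness $\sqrt{X}$, which fails to be a term for the same reason.
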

\begin{proof*}
We prove each item separately:
\begin{enumerate}
\item
Assume $\Prog \chlc (\cat{p(\ntup{t}{n})}{\Pi})$. 
Note that $C$ is the only clause for $p$ in $\Prog$ and that each atom $B_j$ in $C$'s body is an atomic constraint.
Therefore, the $\CHL(\cdom)$ proof must use a \textbf{DA} step based on an instance
$C\theta$ of clause $C$ such that $\Pi \models_{\cdom} t_i == X_i\theta$ holds for all $1 \leq i \leq n$ and 
$\Pi \models B_j\theta$ holds for all $1 \leq j \leq m$.
These conditions and the syntactic form of $\pi(\ntup{X}{n})$ obviously imply $\Pi \models_{\cdom} \pi(\ntup{t}{n})$.
\item
Assume now $\Pi \models_{\cdom} \pi(\ntup{t}{n})$ and $\ntup{t}{n}$ ground. 
Then $\pi(\ntup{t}{n})$ is true in $\cdom$, and 
due to the syntactic form of $\pi(\ntup{X}{n})$, there must be some substitution $\theta$ such that
$X_i\theta = t_i$ (syntactic identity) for all $1 \leq i \leq n$ and
$B_j\theta$ is ground and true in $\cdom$ for all $1 \leq j \leq m$.
Trivially, $\Pi \models_{\cdom} t_i == X_i\theta$ holds for all  $1 \leq i \leq n$ and
$\Pi \models_{\cdom} B_j\theta$ also holds for all $1 \leq j \leq m$.
Then, it is obvious that $\Prog \chlc (\cat{p(\ntup{t}{n})}{\Pi})$ can be proved 
by using a  \textbf{DA} step based on the instance $C\theta$ of clause $C$.
\item
We prove that $\Pi \models_{\cdom} \pi(\ntup{t}{n}) \Longrightarrow  \Prog \chlc (\cat{p(\ntup{t}{n})}{\Pi})$
can fail if $\ntup{t}{n}$ is not ground by presenting a counterexample based on the constraint domain $\rdom$, using the syntax for $\rdom$-constraints explained in \cite{RR10TR}.
Consider the existential $\rdom$-constraint $\pi(X) = \exists Y(op_{+}(Y,Y,X))$,
and a $\clp{\rdom}$-program $\Prog$ including the clause $C:\, p(X) \gets op_{+}(Y,Y,X)$ and no other 
occurrence of the defined predicate symbol $p$. Consider also $\Pi = \{cp_{\geq}(X,0.0)\}$ and $t = X$.
Then  $\Pi \models_{\rdom} \pi(X)$ is obviously true, because any real number $x \geq 0.0$ 
satisfies $\exists Y(op_{+}(Y,Y,x))$ in $\rdom$.
However, there is no $\rdom$-term $s$ such that $\Pi \models_{\rdom} op_{+}(s,s,X)$,
and therefore there is no instance $C\theta$ of clause $C$ that can be used to prove $\Prog \chlc (\cat{p(X)}{\Pi})$
by  applying a  \textbf{DA} step. \mathproofbox
\end{enumerate}
\end{proof*}


\section{Implementation by Program Transformation}
\label{sec:implemen}

The purpose of this section is to introduce a program transformation that transforms $\sqclp{\simrel}{\qdom}{\cdom}$ programs and goals into semantically equivalent $\clp{\cdom}$ programs and goals.
This transformation is performed as the composition of the two following specific transformations:
\begin{enumerate}
\item
elim$_\simrel$ --- Eliminates the proximity relation $\simrel$ of arbitrary $\sqclp{\simrel}{\qdom}{\cdom}$ programs and goals, producing equivalent $\qclp{\qdom}{\cdom}$ programs and goals.
\item
elim$_\qdom$ --- Eliminates the qualification domain $\qdom$ of arbitrary $\qclp{\qdom}{\cdom}$ programs and goals, producing equivalent $\clp{\cdom}$ programs and goals.
\end{enumerate}

Thus, given a $\sqclp{\simrel}{\qdom}{\cdom}$-program $\Prog$---resp. $\sqclp{\simrel}{\qdom}{\cdom}$-goal $G$---, the composition of the two transformations will produce an equivalent $\clp{\cdom}$-program $\elimD{\elimS{\Prog}}$---resp. $\clp{\cdom}$-goal $\elimD{\elimS{G}}$---.

\begin{exmp}[Running example: $\sqclp{\simrel_r}{\,\U{\otimes}\W}{\rdom}$-program $\Prog_r$]
\label{exmp:pr}
As a running example for this section, consider the $\sqclp{\simrel_r}{\,\U{\otimes}\W}{\rdom}$-program $\Prog_r$ as follows:
\begin{center}
\footnotesize\it
\renewcommand{\arraystretch}{1.4}
\begin{tabular}{rl}
\tiny $R_1$ & famous(sha) $\qgets{(0.9,1)}$ \\
\tiny $R_2$ & wrote(sha, kle) $\qgets{(1,1)}$ \\
\tiny $R_3$ & wrote(sha, hamlet) $\qgets{(1,1)}$ \\
\tiny $R_4$ & good\_work(G) $\qgets{(0.75,3)}$ famous(A)\#(0.5,100), authored(A, G) \\[3mm]
\tiny $S_1$ & $\simrel_r$(wrote, authored) = $\simrel_r$(authored, wrote) = (0.9,0)\\
\tiny $S_2$ & $\simrel_r$(kle, kli) = $\simrel_r$(kli, kle) = (0.8,2)\\
\end{tabular}
\end{center}
where the constants $shakespeare$, $king\_lear$ and $king\_liar$ have been respectively replaced, for clarity purposes in the subsequent examples, by $sha$, $kle$ and $kli$.

In addition, consider the $\sqclp{\simrel_r}{\,\U{\otimes}\W}{\rdom}$-goal $G_r$ as follows:
\begin{center}
\it good\_work(X)\#W $\sep$ \!W $\dgeq^?$ \!\!(0.5,10)
\end{center}

We will illustrate the two transformation by showing, in subsequent examples, the program clauses of $\elimS{\Prog_r}$ and $\elimD{\elimS{\Prog_r}}$ and the goals $\elimS{G_r}$ and $\elimD{\elimS{G_r}}$. \mathproofbox
\end{exmp}

The next two subsections explain each transformation in detail.


\subsection{Transforming SQCLP into QCLP}
\label{sec:implemen:SQCLP2QCLP}

In this subsection we assume that the triple $\langle \simrel,\qdom,\cdom \rangle$ is admissible.
In the sequel we say that a defined predicate symbol $p \in DP^n$ is {\em affected} by a $\sqclp{\simrel}{\qdom}{\cdom}$-program $\Prog$ iff $\simrel(p,p') \neq \bt$ for some $p'\!$ occurring in $\Prog$.
We also say that an atom $A$ is {\em relevant} for $\Prog$ iff some of the three following cases hold: a) $A$ is an equation $t == s$; b) $A$ is a primitive atom $\kappa$; or c) $A$ is a defined atom $p(\ntup{t}{n})$ such that $p$ is affected by $\Prog$.

As a first step towards the definition of the first program transformation elim$_\simrel$, we define a set $EQ_\simrel$ of $\qclp{\qdom}{\cdom}$ program clauses that emulates the behavior of equations in $\sqclp{\simrel}{\qdom}{\cdom}$.
The following definition assumes that the binary predicate symbol $\sim\ \in DP^2$ (used in infix notation) and the nullary predicate symbols $\mbox{pay}_\lambda \in DP^0$ are not affected by $\Prog$\!.

\begin{defn}\label{def:EQ}
We define $EQ_\simrel$ as the following $\qclp{\qdom}{\cdom}$-program:
$$
\begin{array}{l@{\hspace{0mm}}c@{\hspace{0mm}}l}
EQ_\simrel & \eqdef & \{~ X \sim Y \qgets{\tp} \qat{(X == Y)}{?} ~\} \\
&& ~\bigcup~ \{~ u \sim u' \qgets{\tp} \qat{\mbox{pay}_{\lambda}}{?} \mid u,u' \in B_\cdom \mbox{ and } \simrel(u,u') = \lambda \neq \bt ~\} \\
&& ~\bigcup~ \{~ c(\ntup{X}{n}) \sim c'(\ntup{Y}{n}) \qgets{\tp} \qat{\mbox{pay}_{\lambda}}{?},
(~ \qat{(X_i \sim Y_i)}{?} ~)_{i = 1 \ldots n} \mid c, c' \in DC^{n} \\
&& \qquad \mbox{and } \simrel(c,c') = \lambda \neq \bt ~\}\\
&& ~\bigcup~ \{~ \mbox{pay}_\lambda \qgets{\lambda} \ \mid \mbox{ for each } \lambda \in \aqdom ~\}. \mathproofbox \\
\end{array}
$$
\end{defn}

The following lemma shows the relation between the semantics of equations in $\SQCHL(\simrel,\qdom,\cdom)$
and the behavior of the binary predicate symbol `$\sim$' defined by $EQ_\simrel$ in $\QCHL(\qdom,\cdom)$.

\begin{lem} \label{lema:equiv}
Consider any two arbitrary terms $t$ and $s$; $EQ_\simrel$ defined as in Definition \ref{def:EQ}; and a satisfiable finite set $\Pi$ of $\cdom$-constraints.
Then, for every $d \in \aqdom$:
$$t \approx_{d,\Pi} s \Longleftrightarrow EQ_\simrel \qchldc \cqat{(t \sim s)}{d}{\Pi} \enspace .$$
\end{lem}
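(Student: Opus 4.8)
The plan is to prove both directions of the biconditional by induction on the structure of the terms $t$ and $s$, exploiting the recursive shape of the clauses in $EQ_\simrel$ and the definition of the proximity-aware equation relation $\approx_{d,\Pi}$. Recall that $t \approx_{d,\Pi} s$ holds iff there exist $\hat{t}, \hat{s}$ with $\Pi \model{\cdom} t == \hat{t}$, $\Pi \model{\cdom} s == \hat{s}$, and $\bt \neq d \dleq \simrel(\hat{t},\hat{s})$. The key observation driving the proof is that the four groups of clauses in Definition \ref{def:EQ} mirror exactly the ways two terms can be proximate: the first clause $X \sim Y \qgets{\tp} \qat{(X == Y)}{?}$ handles proximity that comes purely from constraint entailment (it lets us ``jump'' to the syntactic witnesses $\hat{t}, \hat{s}$ via $\approx_\Pi$, which by the discussion in Section \ref{sec:qclpclp} holds iff $\Pi \model{\cdom} t == \hat{t}$); the second and third groups encode $\simrel(u,u') = \lambda$ on basic values and $\simrel(c,c') = \lambda$ on constructors, with the $\mbox{pay}_\lambda$ atom contributing exactly the factor $\lambda$ to the inferred qualification via the clause $\mbox{pay}_\lambda \qgets{\lambda}$.

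For the forward direction ($\Rightarrow$), I would start from witnesses $\hat{t}, \hat{s}$ as above and build a $\QCHL(\qdom,\cdom)$ proof of $\cqat{(t \sim s)}{d}{\Pi}$. First I use an instance of the first clause to reduce $t \sim s$ to $\hat{t} \sim \hat{s}$: since $\Pi \model{\cdom} t == \hat{t}$ and $\Pi \model{\cdom} s == \hat{s}$, the equational atom in the body is satisfiable with qualification $\tp$, and the attenuation $\tp$ leaves the qualification unchanged. Then I argue by structural induction on $\hat{t}, \hat{s}$: if both are basic values the second clause group and the $\mbox{pay}_\lambda$ clause yield precisely qualification $\lambda = \simrel(\hat{t},\hat{s}) \dgeq d$; if both have the form $c(\ntup{\hat{t}}{n})$ and $c'(\ntup{\hat{s}}{n})$ the third clause applies, and the side conditions on $\circ$ together with the inductive hypotheses on each $\hat{t}_i \sim \hat{s}_i$ combine (using $\lambda \circ \bigsqcap_i d_i$ and the distributivity axiom 2(d) of qualification domains) to give a value $\dgeq d$. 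The case where one is a variable and the equation itself carries the proximity is absorbed by the first clause step. The backward direction ($\Leftarrow$) proceeds by induction on the number $k$ of {\bf QDA} steps in the given $\QCHL(\qdom,\cdom)$ proof of $\cqat{(t \sim s)}{d}{\Pi}$; I invert the last (outermost) {\bf QDA} step, identify which clause of $EQ_\simrel$ was used, and read off the corresponding witnesses $\hat{t}, \hat{s}$ and the inequality $d \dleq \simrel(\hat{t},\hat{s})$ from the subproofs, using Theorem \ref{thm:QCHL-leastmodel} only implicitly through derivability.

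I expect the main obstacle to lie in the interaction between the equational jump provided by the first clause and the recursive constructor clauses, because a single proof of $t \sim s$ may interleave constraint-based steps with constructor-decomposition steps, so the witnesses $\hat{t}, \hat{s}$ need not appear at a single point in the derivation. Concretely, I must show that one can always normalize the proof so that all uses of the first clause are ``pushed to the top,'' or equivalently that the existence of an arbitrary $EQ_\simrel$-derivation guarantees witnesses of the required form; this will rely on the technical properties of $\approx_{d,\Pi}$ and of $\approx_\Pi$ cited from \cite{RR10TR} (in particular that $\approx_\Pi$ is transitive and compatible with $\Pi$-entailment). A secondary delicate point is tracking the qualification value precisely: each $\mbox{pay}_\lambda$ contributes $\lambda$ through $\circ$, and I must verify, using the distributivity and monotonicity axioms of $\qdom$, that the glb of the recursively obtained values times the attenuation reproduces exactly $d \dleq \simrel(\hat{t},\hat{s})$ rather than some strictly smaller or incomparable value. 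Handling the threshold annotation `?' uniformly (so that no genuine threshold constraint is ever imposed in these clauses) removes the remaining bookkeeping, since every body annotation in $EQ_\simrel$ is `?' and hence $\dgeq^?$ is vacuously satisfied.
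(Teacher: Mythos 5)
Your overall architecture (forward direction by structural induction on the witnesses, backward direction by induction on the number of \textbf{QDA} steps with a case analysis on the $EQ_\simrel$ clause used) coincides with the paper's, but the forward direction as you describe it breaks at its very first step. The clause $X \sim Y \qgets{\tp} \qat{(X == Y)}{?}$ cannot ``reduce $t \sim s$ to $\hat{t} \sim \hat{s}$'': its body is the \emph{equation} $X == Y$, not a $\sim$-atom, and in $\QCHL(\qdom,\cdom)$ an equational atom can only be established by a \textbf{QEA} step, which requires $\hat{t} \approx_\Pi \hat{s}$, i.e. $\Pi \model{\cdom} \hat{t} == \hat{s}$. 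The witnesses provided by $t \approx_{d,\Pi} s$ satisfy only $\bt \neq d \dleq \simrel(\hat{t},\hat{s})$, which is proximity, not entailment: for distinct constants $a,b$ with $\simrel(a,b) = 0.7$ and $\Pi = \emptyset$ we have $a \approx_{0.7,\Pi} b$ while $\Pi \model{\cdom} a == b$ fails. So after your proposed reduction the remaining obligation is in general unprovable, and no subgoal of the form $\hat{t} \sim \hat{s}$ ever arises; the first clause only covers the degenerate case in which the proximity comes entirely from $\Pi$ (in the paper's proof, the case $\hat{t} = \hat{s} = Z$ a variable).

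The missing idea is that the jump from $t,s$ to the witnesses is absorbed by the head-matching equational premises that the \textbf{QDA} rule itself supplies, whichever $EQ_\simrel$ clause is applied. If $\hat{t} = c(\ntup{\hat{t}}{n})$ and $\hat{s} = c'(\ntup{\hat{s}}{n})$ with $\simrel(c,c') = \lambda \neq \bt$, one applies the clause $c(\ntup{X}{n}) \sim c'(\ntup{Y}{n}) \qgets{\tp} \qat{\mbox{pay}_{\lambda}}{?},\ (\qat{(X_i \sim Y_i)}{?})_{i = 1 \ldots n}$ \emph{directly} to $t \sim s$, instantiated with the witness subterms: the premises $t == c(\ntup{\hat{t}}{n})$ and $s == c'(\ntup{\hat{s}}{n})$ are discharged by \textbf{QEA} from $\Pi \model{\cdom} t == \hat{t}$ and $\Pi \model{\cdom} s == \hat{s}$, the $\mbox{pay}_{\lambda}$ premise contributes $\lambda$, and the premises $\hat{t}_i \sim \hat{s}_i$ are handled by the induction with identity equational parts; the side condition then reads $d \dleq \lambda \sqcap \bigsqcap_{i=1}^{n} \simrel(\hat{t}_i,\hat{s}_i) = \simrel(\hat{t},\hat{s})$, so only glbs and $\tp \circ e = e$ are involved---no distributivity axiom is needed. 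This same observation dissolves the ``normalization'' obstacle you anticipate for the backward direction: since every \textbf{QDA} step carries its own equational premises, the witnesses extracted from the subproofs by the inductive hypothesis are reassembled compositionally, using congruence and transitivity of $\approx_\Pi$ (Lemma 2.7(1) of \cite{RR10TR}), without any reordering of the given proof tree.
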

\begin{proof*}
We separately prove each implication.

\smallskip\noindent [$\Longrightarrow$]
Assume $t \approx_{d,\Pi} s$. Then, there are two terms $\hat{t}$, $\hat{s}$ such that:
$$
(1)~ t \approx_{\Pi} \hat{t} \qquad
(2)~ s \approx_{\Pi} \hat{s} \qquad
(3)~ \hat{t} \approx_{d} \hat{s}
$$
We use structural induction on the form of the term $\hat{t}$.
\begin{itemize}
\item 
$\hat{t} = Z$, $Z \in \Var$. 
From (3) we have $\hat{s} = Z$. Then (1) and (2) become $t \approx_\Pi Z$ and $s \approx_\Pi Z$, therefore $t \approx_\Pi s$. 
Now $EQ_\simrel \qchldc \cqat{(t \sim s)}{d}{\Pi}$ can be proved with a proof tree rooted by a {\bf QDA} step of the form:
$$
\displaystyle\frac
    {~ \cqat{(t == X\theta)}{\tp}{\Pi}\qquad \cqat{(s == Y\theta)}{\tp}{\Pi} \qquad \cqat{(X==Y)\theta}{\tp}{\Pi} ~}
    {\cqat{(t \sim s)}{d}{\Pi}}
$$
using the clause $X \sim Y \qgets{\tp} \qat{(X == Y)}{?} \in EQ_\simrel$ instantiated by the substitution $\theta = \{X \!\mapsto t,\ Y \!\mapsto s \}$. Therefore the three premises can be derived from $EQ_\simrel$ with {\bf QEA} steps since $t \approx_\Pi t$, $s \approx_\Pi s$ and $t \approx_\Pi s$, respectively. Checking the side conditions of all inference steps is straightforward.

\item $\hat{t} = u$, $u \in B_{\cdom}$.
From (3) we have $\hat{s} = u'$ for some $u' \in B_\cdom$ such that $d \dleq \lambda = \simrel(u,u')$.
Then (1) and (2) become  $t \approx_\Pi u$ and $s \approx_\Pi u'$, which allow to build a proof of $EQ_\simrel \qchldc \cqat{(t \sim s)}{d}{\Pi}$ by means of a {\bf QDA} step using the clause $u \sim u' \qgets{\tp} \qat{\mbox{pay}_{\lambda}}{?}$.

\item $\hat{t} = c$, $c \in DC^0$\!. 
From (3) we have $\hat{s} = c'$ for some $c' \in DC^0$\! such that $d \dleq \lambda = \simrel(c,c')$.
Then (1) and (2) become  $t \approx_\Pi c$ and $s \approx_\Pi c'$\!, which allow us to build a proof of $EQ_\simrel \qchldc \cqat{(t \sim s)}{d}{\Pi}$ by means of a {\bf QDA} step using the clause $c \sim c' \qgets{\tp} \qat{\mbox{pay}_{\lambda}}{?}$.

\item $\hat{t} = c(\ntup{t}{n})$, $c \in DC^n$ with $n > 0$.
In this case, and because of (3), we can assume $\hat{s} = c'(\ntup{s}{n})$ for some $c' \in DC^n$ satisfying $d \dleq d_0 \eqdef \simrel(c,c')$ and $d \dleq d_i \eqdef \simrel(t_i,s_i)$ for $i=1 \dots n$.
Then $EQ_\simrel \qchldc \cqat{(t \sim s)}{d}{\Pi}$ with a proof tree rooted by a {\bf QDA} step of the form:
$$
\displaystyle\frac
  {~ 
    \begin{array}{l@{\hspace{1cm}}l}
      \cqat{(t == c(\ntup{t}{n}))}{\tp}{\Pi} & \cqat{\mbox{pay}_{d_0}}{d_0}{\Pi} \\
      \cqat{(s == c'(\ntup{s}{n}))}{\tp}{\Pi} & (~ \cqat{(t_i \sim s_i)}{d_i}{\Pi} ~)_{i = 1 \ldots n} \\
    \end{array}
   ~}
  {\cqat{(t \sim s)}{d}{\Pi}}
$$
using the $EQ_\simrel$ clause $C : c(\ntup{X}{n}) \sim c'(\ntup{Y}{n}) \qgets{\tp} \qat{\mbox{pay}_{d_0}}{?}, (\qat{(X_i \sim Y_i)}{?})_{i = 1 \ldots n}$ instantiated by the substitution $\theta = \{ X_1 \mapsto t_1,\ Y_1 \mapsto s_1,\ \dots,\ X_n \mapsto t_n,\ Y_n \mapsto s_n \}$.
Note that $C$ has attenuation factor $\tp$ and threshold values $?$ at the body.
Therefore, the side conditions of the {\bf QDA} step boil down to $d \dleq d_i ~ (1 \leq i \leq n)$ which are true by assumption.
It remains to prove that each premise of the {\bf QDA} step can be derived from $EQ_\simrel$ in QCHL($\qdom,\cdom$):
\begin{itemize}
\item $EQ_\simrel \qchldc \cqat{(t == c(\ntup{t}{n}))}{\tp}{\Pi}$ and $EQ_\simrel \qchldc \cqat{(s == c'(\ntup{s}{n}))}{\tp}{\Pi}$ are trivial consequences of $t \approx_\Pi c(\ntup{t}{n})$ and $s \approx_\Pi c'(\ntup{s}{n})$, respectively.
In both cases, the QCHL($\qdom$,$\cdom$) proofs consist of one single {\bf QEA} step.
\item $EQ_\simrel \qchldc  \cqat{\mbox{pay}_{d_0}}{d_0}{\Pi}$ can be proved using the clause $\mbox{pay}_{d_0} \!\qgets{d_0}\ \in EQ_\simrel$ in one single {\bf QDA} step.
\item $EQ_\simrel \qchldc \cqat{(t_i \sim s_i)}{d_i}{\Pi}$ for ${i = 1 \ldots n}$.
For each $i$, we observe that $t_i \approx_{d_i,\Pi} s_i$ holds because of $\hat{t}_i = t_i$, $\hat{s}_i = s_i$ which satisfy $t_i \approx_\Pi \hat{t}_i$, $s_i \approx_\Pi \hat{s}_i$ and $\hat{t}_i \approx_{d_i} \hat{s}_i$.
Since $\hat{t}_i = t_i$ is a subterm of $\hat{t} = c(\ntup{t}{n})$, the inductive hypothesis can be applied.
\end{itemize}
\end{itemize}

\smallskip\noindent [$\Longleftarrow$]
Let $T$ be a $\QCHL(\qdom,\cdom)$-proof tree witnessing $EQ_\simrel \qchldc \cqat{(t \sim s)}{d}{\Pi}$.
We prove $t \approx_{d,\Pi} s$ reasoning  by induction on the number $n = \Vert T \Vert$ of nodes in $T$ that represent conclusions of  \textbf{QDA} inference steps.
Note that all the program clauses belonging to $EQ_\simrel$ define either the binary predicate symbol `$\sim$' or the nullary predicates $\mbox{pay}_\lambda$.

\begin{description}
\item[\bf Basis ($n = 1$).] \hfill

In this case we have for the {\bf QDA} inference step that there can be used three possible $EQ_\simrel$ clauses:
\begin{enumerate}
\item
The program clause is  $X \sim Y \qgets{\tp} \qat{(X == Y)}{?}$.
Then the {\bf QDA} inference step must be of the form:
$$
\displaystyle\frac
  {~ \cqat{(t == t')}{d_1}{\Pi} \quad  \cqat{(s == s')}{d_2}{\Pi} \quad \cqat{(t' == s')}{e_1}{\Pi} ~}
  {\cqat{(t \sim s)}{d}{\Pi}}
$$
with $d \dleq  d_1 \sqcap d_2 \sqcap e_1$.
The proof of the three premises must use the {\bf QEA} inference rule.
Because of the conditions of this inference rule we have $t \approx_\Pi t'$,  $s \approx_\Pi s'$ and $ t' \approx_\Pi s'$. Therefore $t \approx_\Pi s$ is clear. Then $t \approx_{d,\Pi} s$ holds by taking $\hat{t} = \hat{s} = t$ because, trivially, $t \approx_\Pi \hat{t}$, $s \approx_\Pi \hat{s}$ and $\hat{t} \approx_d \hat{s}$.
\item
The program clause is $u \sim u' \qgets{\tp} \qat{\mbox{pay}_{\lambda}}{?}$ with $u, u' \in B_\cdom$ such that $\simrel(u,u') = \lambda \neq \bt$.
The {\bf QDA} inference step must be of the form:
$$
\displaystyle\frac
    {~ \cqat{(t == u)}{d_1}{\Pi} \quad  \cqat{(s == u')}{d_2}{\Pi} \quad \cqat{\mbox{pay}_{\lambda}}{e_1}{\Pi}~ }
    {\cqat{(t \sim s)}{d}{\Pi}}
$$
with $d \dleq d_1 \sqcap d_2 \sqcap  e_1$.
Due to the forms of the {\bf QEA} inference rule and the $EQ_\simrel$ clause $\mbox{pay}_{\lambda} \qgets{\lambda}$, we can assume without loss of generality that $d_1 = d_2 = \tp$ and $e_1 = \lambda$.
Therefore $d \dleq \lambda$.
Moreover, the QCHL($\qdom$,$\cdom$) proofs of the first two premises must use {\bf QEA} inferences.
Consequently we have $t \approx_\Pi u$ and $s \approx_\Pi u'$.
These facts and $u \approx_d u'$ imply $t \approx_{d,\Pi} s$.
\item
The program clause is $c \sim c' \qgets{\tp} \qat{\mbox{pay}_{\lambda}}{?}$ with $c, c' \in DC^0$ such that $\simrel(c,c') = \lambda \neq \bt$.
The {\bf QDA} inference step must be of the form:
$$
\displaystyle\frac
    {~ \cqat{(t == c)}{d_1}{\Pi} \quad  \cqat{(s == c')}{d_2}{\Pi} \quad \cqat{\mbox{pay}_{\lambda}}{e_1}{\Pi} ~}
    {\cqat{(t \sim s)}{d}{\Pi}}
$$
with $d \dleq d_1 \sqcap d_2 \sqcap  e_1$. Due to the forms of the {\bf QEA} inference rule and the $EQ_\simrel$ clause $\mbox{pay}_{\lambda} \qgets{\lambda}$, we can assume without loss of generality that $d_1 = d_2 = \tp$ and $e_1 = \lambda$.
Therefore $d \dleq \lambda$.
Moreover, the QCHL($\qdom$,$\cdom$) proofs of the first two premises must use {\bf QEA} inferences.
Consequently we have $t \approx_\Pi c$ and $s \approx_\Pi c'$.
These facts and $c \approx_d c'$ imply $t \approx_{d,\Pi} s$.
\end{enumerate}

\item[\bf Inductive step ($n > 1$).] \hfill

In this case $t$ and $s$ must be of the form $t = c(\ntup{t}{n})$ and $s=c'(\ntup{s}{n})$.
The $EQ_\simrel$ clause used in the {\bf QDA} inference step at the root must be of the form: 
$$c(\ntup{X}{n}) \sim c'(\ntup{Y}{n}) \qgets{\tp} \qat{\mbox{pay}_{d_0}}{?},\ (\qat{(X_i \sim Y_i)}{?})_{i = 1 \ldots n}$$
with $\simrel(c,c') = d_0 \neq \bt$. The inference step at the root will be:
$$
\displaystyle\frac
  {~
    \begin{array}{l@{\hspace{1cm}}l}
      \cqat{(t == c(\ntup{t}{n}))}{d_1}{\Pi} & \cqat{pay_{d_0}}{e_0}{\Pi} \\
      \cqat{(s == c'(\ntup{s}{n}))}{d_2}{\Pi} & (~ \cqat{(t_i \sim s_i)}{e_i}{\Pi} ~)_{i = 1 \ldots n} \\
    \end{array}
  ~}
  {\cqat{(t \sim s)}{d}{\Pi}}
$$
with $d \dleq d_1 \sqcap d_2 \sqcap  \bigsqcap_{i = 0}^n e_i$.
Due to the forms of the $EQ_\simrel$ clause $\mbox{pay}_{d_0} \!\qgets{d_0}$ and the {\bf QEA} inference rule there is no loss of generality in assuming $d_1 = d_2 = \tp$ and $e_0 = d_0$, therefore we have $d \dleq d_0 \sqcap \bigsqcap_{i = 1}^n e_i$.
By the inductive hypothesis $t_i  \approx_{e_i, \Pi} s_i ~ (1 \le i \le n)$, i.e. there are constructor terms $\hat{t}_i$, $\hat{s}_i$ such that $t_i \approx_\Pi \hat{t_i}$,  $s_i \approx_\Pi \hat{s}_i$ and $\hat{t}_i \approx_{e_i} \hat{s}_i$ for $i = 1 \ldots n$.
Thus, we can build $\hat{t} = c(\hat{t}_1, \ldots, \hat{t}_n)$ and $\hat{s} = c'(\hat{s}_1, \ldots, \hat{s}_n)$ having $t \approx_{d,\Pi} s$ because:
\begin{itemize}
\item 
$t \approx_\Pi \hat{t}$, i.e. $c(\ntup{t}{n}) \approx_\Pi c(\ntup{\hat{t}}{n})$, by decomposition since $t_i \approx_\Pi \hat{t}_i$.
\item
$s \approx_\Pi \hat{s}$, i.e. $c'(\ntup{s}{n}) \approx_\Pi c'(\ntup{\hat{s}}{n})$, again by decomposition since $s_i \approx_\Pi \hat{s}_i$.
\item
$\hat{t} \approx_d \hat{s}$, since
$d \dleq  d_0 \sqcap \bigsqcap_{i = 1}^n e_i \dleq \simrel(c,c') \sqcap \bigsqcap_{i = 1}^n  \simrel(\hat{t}_i,\hat{s}_i) = \simrel(\hat{t},\hat{s}) \enspace .\mathproofbox
$
\end{itemize}
\end{description}
\end{proof*}

We are now ready to define elim$_\simrel$ acting over programs and goals.

\begin{defn}\label{def:sqclptransform}
Assume a $\sqclp{\simrel}{\qdom}{\cdom}$-program $\Prog$ and a $\sqclp{\simrel}{\qdom}{\cdom}$-goal $G$ for $\Prog$ whose atoms are all relevant for $\Prog$.
Then we define:
\begin{enumerate}
\item
For each atom $A$, let $A_\sim$ be $t \sim s$ if $A : t == s$; otherwise let $A_\sim$ be  $A$.
\item
For each clause $C : (p(\ntup{t}{n}) \qgets{\alpha} \tup{B}) \in \Prog$
let $\hat{\mathcal{C}}_\simrel$ be the set of $\qclp{\qdom}{\cdom}$ clauses consisting of:
\begin{itemize}
\item[---]
The clause $\hat{C} : (\transRen{p}_{C}(\ntup{t}{n}) \qgets{\alpha} \ntup{B}{\sim})$, where $\transRen{p}_{C} \in DP^n$ is not affected by $\Prog$ (chosen in a different way for each $C$) and $\ntup{B}{\sim}$ is obtained from $\tup{B}$ by replacing each atom $A$ occurring in $\tup{B}$ by $A_\sim$.
\item[---]
A clause $p'(\ntup{X}{n}) \qgets{\tp} \qat{\mbox{pay}_\lambda}{?},\ (\qat{(X_i \sim t_i)}{?})_{i = 1 \ldots n},\ \qat{ \transRen{p}_C(\ntup{t}{n})}{?}$ for each $p' \in DP^n$ such that $\simrel(p,p') = \lambda \neq \bt$.
Here, $\ntup{X}{n}$ must be chosen as $n$ pairwise different variables not occurring in the clause $C$.
\end{itemize}
\item
$\elimS{\Prog}$ is the $\qclp{\qdom}{\cdom}$-program $EQ_\simrel \cup \hat{\Prog}_\simrel$ where $\hat{\Prog}_\simrel \eqdef \bigcup_{C \in \Prog} \hat{\mathcal{C}}_\simrel$. 
\item $\elimS{G}$  is the $\qclp{\qdom}{\cdom}$-goal $G_\sim$ obtained from $G$ by replacing each atom $A$ occurring in $G$ by $A_\sim$.
\mathproofbox
\end{enumerate}
\end{defn}

The following example illustrates the transformation elim$_\simrel$.
\begin{exmp}[Running example: $\qclp{\U\!\otimes\!\W}{\,\rdom}$-program $\elimS{\Prog_r}$]
\label{exmp:elims-pr}
Consider the $\sqclp{\simrel_r}{\,\U{\otimes}\W}{\rdom}$-program $\Prog_r$ and the goal $G_r$ for $\Prog_r$ as presented in Example \ref{exmp:pr}. 
The transformed $\qclp{\U{\otimes}\W}{\rdom}$-program $\elimS{\Prog_r}$ is as follows:
\begin{center}
\footnotesize\it
\renewcommand{\arraystretch}{1.6}
\begin{tabular}{rl}
\tiny $\hat{R}_1$ & \^{f}amous$_{R_1}$(sha) $\qgets{(0.9,1)}$ \\
\tiny $R_{1.1}$ & famous(X) $\gets$ pay$_\tp$, X$\sim$sha, \^{f}amous$_{R_1}$(sha) \\
\tiny $\hat{R}_2$ & \^{w}rote$_{R_2}$(sha, kle) $\qgets{(1,1)}$ \\
\tiny $R_{2.1}$ & wrote(X, Y) $\gets$ pay$_\tp$, X$\sim$sha, Y$\sim$kle, \^{w}rote$_{R_2}$(sha, kle) \\
\tiny $R_{2.2}$ & authored(X, Y) $\gets$ pay$_{(0.9,0)}$, X$\sim$sha, Y$\sim$kle, \^{w}rote$_{R_2}$(sha, kle) \\
\tiny $\hat{R}_3$ & \^{w}rote$_{R_3}$(sha, hamlet) $\qgets{(1,1)}$ \\
\tiny $R_{3.1}$ & wrote(X, Y) $\gets$ pay$_\tp$, X$\sim$sha, Y$\sim$hamlet, \^{w}rote$_{R_3}$(sha, hamlet) \\
\tiny $R_{3.2}$ & authored(X, Y) $\gets$ pay$_{(0.9,0)}$, X$\sim$sha, Y$\sim$hamlet, \^{w}rote$_{R_3}$(sha, hamlet) \\
\tiny $\hat{R}_4$ & \^{g}ood\_work$_{R_4}$(G) $\qgets{(0.75,3)}$ famous(A)\#(0.5,100), authored(A, G) \\
\tiny $R_{4.1}$ & good\_work(X) $\gets$ pay$_\tp$, X$\sim$G, \^{g}ood\_work$_{R_4}$(G) \\[4mm]
\end{tabular} \\
\begin{tabular}{l@{\hspace{1.2cm}}l}
\% Program clauses for $\sim$: & \% Program clauses for pay: \\
X\,$\sim$Y $\gets$ X==Y & pay$_\tp$ $\gets$ \\
kle\,$\sim$\,kli $\gets$ pay$_{(0.8,2)}$ & pay$_{(0.9,0)}$ $\qgets{(0.9,0)}$ \\
$[\ldots]$ & pay$_{(0.8,2)}$ $\qgets{(0.8,2)}$ \\[2mm]
\end{tabular}
\end{center}

Finally, the goal $\elimS{G_r}$ for $\elimS{\Prog_r}$ is as follows:
\begin{center}
\it good\_work(X)\#W $\sep$ \!W $\dgeq^?$ \!\!(0.5,10) \mathproofbox
\end{center}
\end{exmp}

The next theorem proves the semantic correctness of the program transformation.

\begin{thm}
\label{thm:SQCLP2QCLP:programs}
Consider a $\sqclp{\simrel}{\qdom}{\cdom}$-program $\Prog$\!, an atom $A$ relevant for $\Prog$\!, a qualification value $d \in \aqdom$ and a satisfiable finite set of $\cdom$-constraints $\Pi$.
Then, the following two statements are equivalent:
\begin{enumerate}
  \item $\Prog \sqchlrdc \cqat{A}{d}{\Pi}$
  \item $\elimS{\Prog} \qchldc \cqat{A_\sim}{d}{\Pi}$
\end{enumerate}
where $A_\sim$ is understood as in Definition \ref{def:sqclptransform}(1).
\end{thm}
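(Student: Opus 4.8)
The plan is to prove the two implications together by induction on the relevant proof tree, organised as a case analysis on the syntactic form of the atom $A$. The primitive and equational cases are essentially free. If $A$ is a primitive $\kappa$ then $A_\sim = \kappa$, and the single {\bf SQPA}/{\bf QPA} step transfers verbatim because both depend only on $\Pi \model{\cdom} \kappa$. If $A$ is an equation $t == s$ then $A_\sim = t \sim s$; here I would first record the structural fact that the predicates $\sim$ and $\mbox{pay}_\lambda$ occur at the head of no clause outside $EQ_\simrel$ (and equations need no clause at all), so that any $\QCHL(\qdom,\cdom)$ derivation of a $\sim$-atom in $\elimS{\Prog}$ actually lives inside $EQ_\simrel$. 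Lemma \ref{lema:equiv} then supplies exactly $t \approx_{d,\Pi} s \Leftrightarrow \elimS{\Prog} \qchldc \cqat{(t \sim s)}{d}{\Pi}$, which combined with {\bf SQEA} is the desired equivalence. The whole burden therefore lies in the defined-atom case, handled by induction on the number of {\bf SQDA} (resp.\ {\bf QDA}) steps, and using throughout that every predicate occurring in $\Prog$ is affected (by reflexivity of $\simrel$), so all body atoms encountered are relevant and the induction hypothesis applies to them.

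For $(1)\Rightarrow(2)$, suppose the last step is an {\bf SQDA} inference using a clause $C : p(\ntup{t}{n}) \qgets{\alpha} (\qat{B_j}{w_j})_{j}$, a substitution $\theta$, and $\simrel(p',p) = d_0 \neq \bt$, with premises $\cqat{(t'_i == t_i\theta)}{d_i}{\Pi}$ and $\cqat{B_j\theta}{e_j}{\Pi}$. First I would reconstruct $\elimS{\Prog} \qchldc \cqat{\transRen{p}_C(\ntup{t}{n}\theta)}{\alpha \circ \bigsqcap_j e_j}{\Pi}$ by one {\bf QDA} step on the renamed clause $\hat{C}$: its head arguments instantiate trivially (value $\tp$), and each body atom $(B_j)_\sim\theta$ is discharged by the induction hypothesis when $B_j$ is defined or primitive and by Lemma \ref{lema:equiv} when $B_j$ is an equation, the thresholds $e_j \dgeq^? w_j$ carrying over. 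Then I would apply the bridge clause for $p'$ (for which $\lambda = d_0$ by symmetry of $\simrel$) under the instantiation sending its fresh head variables $X_i$ to $t'_i$ and agreeing with $\theta$ on the variables of $C$: the subgoal $\mbox{pay}_\lambda$ contributes $d_0$, each body atom $t'_i \sim t_i\theta$ follows from the premise $\cqat{(t'_i == t_i\theta)}{d_i}{\Pi}$ via Lemma \ref{lema:equiv}, and the subgoal $\transRen{p}_C(\ntup{t}{n}\theta)$ was just established. Since the bridge clause has attenuation $\tp$ and trivial thresholds, its {\bf QDA} side condition collapses to $d \dleq d_0 \sqcap \bigsqcap_{i=1}^n d_i \sqcap \alpha \circ \bigsqcap_j e_j$, i.e.\ exactly the hypothesis $d \dleq \bigsqcap_{i=0}^n d_i \sqcap \alpha \circ \bigsqcap_j e_j$.

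For $(2)\Rightarrow(1)$ I would invert the $\QCHL$ proof of $\cqat{p'(\ntup{t'}{n})}{d}{\Pi}$. As $p'$ is affected, the only clauses with head $p'$ are bridge clauses, so the root step uses one of them, for some source clause $C$ and some $p$ with $\simrel(p,p') = \lambda = d_0$, under a substitution $\theta''$. Inverting yields $\Pi \model{\cdom} t'_i == X_i\theta''$ from the head equations, a value $f_0 \dleq d_0$ from $\mbox{pay}_\lambda$ (whose sole clause has attenuation $d_0$), witnesses $X_i\theta'' \approx_{g_i,\Pi} t_i\theta''$ from the $\sim$-subgoals via Lemma \ref{lema:equiv}, and a subproof of $\cqat{\transRen{p}_C(\ntup{t}{n}\theta'')}{h}{\Pi}$. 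Because $\transRen{p}_C$ is defined solely by $\hat{C}$, I invert that subproof in turn, obtaining a substitution $\rho$ with $\Pi \model{\cdom} t_i\theta'' == t_i\rho$, body derivations $\cqat{(B_j)_\sim\rho}{e_j}{\Pi}$ with $e_j \dgeq^? w_j$, and $h \dleq \alpha \circ \bigsqcap_j e_j$. Running the induction hypothesis on each body atom (splitting equation versus non-equation, Lemma \ref{lema:equiv} for the former) gives $\Prog \sqchlrdc \cqat{B_j\rho}{e_j}{\Pi}$. Splicing $\Pi \model{\cdom} t'_i == X_i\theta''$, the witnesses of $X_i\theta'' \approx_{g_i,\Pi} t_i\theta''$, and $\Pi \model{\cdom} t_i\theta'' == t_i\rho$ through the definition of $\approx_{g_i,\Pi}$ produces $t'_i \approx_{g_i,\Pi} t_i\rho$, hence $\Prog \sqchlrdc \cqat{(t'_i == t_i\rho)}{g_i}{\Pi}$ by {\bf SQEA}. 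Finally, from $f_0 \dleq d_0$ and $h \dleq \alpha \circ \bigsqcap_j e_j$ the root bound collapses to $d \dleq d_0 \sqcap \bigsqcap_i g_i \sqcap \alpha \circ \bigsqcap_j e_j$, so a single {\bf SQDA} step on $C$ with substitution $\rho$ and $\simrel(p',p) = d_0$ derives $\cqat{p'(\ntup{t'}{n})}{d}{\Pi}$.

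The hard part will be this second direction's defined case: one must correctly relate the two independently chosen substitutions, $\theta''$ from the bridge clause and $\rho$ from $\hat{C}$, which are linked only through the $\Pi$-entailed equations $t_i\theta'' == t_i\rho$, and combine these with the proximity witnesses delivered by Lemma \ref{lema:equiv} to recover one $\approx_{g_i,\Pi}$ fact per argument. The attendant qualification-value bookkeeping, namely discarding the (arbitrary, $\approx_\Pi$-driven) values on the head equations and propagating $f_0 \dleq d_0$ and $h \dleq \alpha \circ \bigsqcap_j e_j$ to the required meet, is routine given associativity, commutativity and monotonicity of $\circ$, but must be carried out carefully.
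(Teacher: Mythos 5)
Your proposal is correct and follows essentially the same route as the paper's proof: case analysis on the form of $A$, with the primitive and equational cases discharged trivially and via Lemma \ref{lema:equiv}, and the defined case handled by induction on the number of \textbf{SQDA}/\textbf{QDA} steps, composing (resp.\ inverting) a bridge-clause step for $p'$ with a step on the renamed clause $\hat{C}$, using the $\mbox{pay}_{d_0}$ atom to account for $\simrel(p',p)$ and Lemma \ref{lema:equiv} to move between $==$ and $\sim$. The only cosmetic differences are that you build the transformed proof bottom-up rather than top-down, and that where the paper cites Lemma 2.7(1) of \cite{RR10TR} to splice the $\Pi$-entailed equations with the proximity witnesses, you unfold the definition of $\approx_{d,\Pi}$ directly, which amounts to the same argument.
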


\begin{proof*}
We separately prove each implication.

\smallskip\noindent
[1. $\Rightarrow$ 2.] {\em (the transformation is complete).}
Assume that $T$ is a $\SQCHL(\simrel,\qdom,\cdom)$ proof tree witnessing $\Prog \sqchlrdc \cqat{A}{d}{\Pi}$.
We want to show the existence of a $\QCHL(\qdom,\cdom)$ proof tree $T'$
witnessing $\elimS{\Prog} \qchldc \cqat{A_\sim}{d}{\Pi}$. We reason by complete induction on $\Vert T \Vert$.
There are three possible cases according to the syntactic form of the atom $A$.
In each case we argue how to build the desired proof tree $T'$.

\noindent --- $A$ is a primitive atom $\kappa$.
In this case $A_\sim$ is also $\kappa$ and $T$ contains only one {\bf SQPA} inference node. Because of the inference rules {\bf SQPA} and {\bf QPA}, both $\Prog \sqchlrdc \cqat{\kappa}{d}{\Pi}$ and $\elimS{\Prog} \qchldc \cqat{\kappa}{d}{\Pi}$ are equivalent to $\Pi \model{\cdom} \kappa$, therefore $T'$ trivially contains just one {\bf QPA} inference node.

\noindent --- $A$ is an equation $t == s$.
In this case $A_\sim$ is $t \sim s$ and $T$ contains just one {\bf SQEA} inference node. We know $\Prog \sqchlrdc \cqat{(t == s)}{d}{\Pi}$ is equivalent to $t \approx_{d,\Pi} s$ because of the inference rule {\bf SQEA}. From this equivalence follows $EQ_\simrel \qchldc \cqat{(t \sim s)}{d}{\Pi}$ due to Lemma \ref{lema:equiv} and hence $\elimS{\Prog} \qchldc \cqat{(t \sim s)}{d}{\Pi}$ by construction of $\elimS{\Prog}$. In this case, $T'$ will be a proof tree rooted by a {\bf QDA} inference step.

\noindent --- $A$ is a defined atom $p'(\ntup{t'}{n})$ with $p' \in DP^n$\!.
In this case $A_\sim$ is $p'(\ntup{t'}{n})$ and the root inference of $T$ must be a {\bf SQDA} inference step of the form:
$$
\displaystyle\frac
{~
  (~ \cqat{(t'_i == t_i\theta)}{d_i}{\Pi} ~)_{i = 1 \ldots n}
  \quad
  (~ \cqat{B_j\theta}{e_j}{\Pi} ~)_{j = 1 \ldots m}
~}
{~ \cqat{p'(\ntup{t'}{n})}{d}{\Pi} ~}
~ (\clubsuit)
$$
with $C : (p(\ntup{t}{n}) \qgets{\alpha} \qat{B_1}{w_1}, \ldots, \qat{B_m}{w_m}) \in \Prog$, $\theta$ substitution, $\simrel(p',p) = d_0 \neq \bt$, $e_j \dgeq^? w_j ~ (1 \leq j \leq m)$, $d \dleq d_i ~ (0 \leq i \leq n)$ and $d \dleq \alpha \circ e_j ~ (1 \leq j \leq m)$---which means $d \dleq \alpha$ in the case  $m = 0$.
We can assume that the first $n$ premises at ($\clubsuit$) are proved in $\sqclp{\simrel}{\qdom}{\cdom}$ w.r.t. $\Prog$ by proof trees $T_{1i} ~ (1 \leq i \leq n)$ satisfying $\Vert T_{1i} \Vert < \Vert T \Vert ~ (1 \leq i \leq n)$, and the last $m$ premises at ($\clubsuit$) are proved in $\sqclp{\simrel}{\qdom}{\cdom}$ w.r.t. $\Prog$ by proof trees $T_{2j} ~ (1 \leq j \leq m)$ satisfying $\Vert T_{2j} \Vert < \Vert T \Vert ~ (1 \leq j \leq m)$.

By Definition \ref{def:sqclptransform}, we know that the transformed program $\elimS{\Prog}$ contains two clauses of the following form:
$$
\begin{array}{c@{\hspace{1mm}}cl}
\hat{C}&:&
\hat{p}_C(\ntup{t}{n}) \qgets{\alpha} \qat{B_\sim^1}{w_1},~ \ldots,~ \qat{B_\sim^m}{w_m} \\
\hat{C}_{p'}&:&
p'(\ntup{X}{n}) \qgets{\tp} \qat{\mbox{pay}_{d_0}}{?},~ (~ \qat{(X_i \sim t_i)}{?} ~)_{i = 1 \ldots n},~ \qat{\hat{p}_C(\ntup{t}{n})}{?} \\
\end{array}
$$
where $X_i ~ (1 \leq i \leq n)$ are fresh variables not occurring in $C$ and $B_\sim^j ~ (1 \leq j \leq m)$ is the result of replacing `$\sim$' for `==' if $B_j$ is equation; and $B_j$ itself otherwise.
Given that the $n$ variables $X_i$ do not occur in $C$, we can assume that $\sigma \eqdef \theta' \uplus \theta$ with $\theta' \eqdef \{ X_1 \mapsto t'_1,~ \ldots,~ X_n \mapsto t'_n\}$ is a well-defined substitution.
We claim that $\elimS{\Prog} \qchldc \cqat{A_\sim}{d}{\Pi}$ can be proved with 
a proof tree $T'$ rooted by the {\bf QDA} inference step ($\spadesuit$.1), which uses the clause $\hat{C}_{p'}$ instantiated by $\sigma$ and having $d_{n+1} = d$.
$$
\displaystyle\frac
{~
  \begin{array}{l}
  (~ \cqat{(t'_i == X_i\sigma)}{\tp}{\Pi} ~)_{i = 1 \ldots n} \\
  \cqat{\mbox{pay}_{d_0}\sigma}{d_0}{\Pi} \\
  (~ \cqat{(X_i \sim t_i)\sigma}{d_i}{\Pi} ~)_{i = 1 \ldots n} \\
  \cqat{\hat{p}_C(\ntup{t}{n})\sigma}{d_{n+1}}{\Pi} \\
  \end{array}
~}
{~ \cqat{p'(\ntup{t'}{n})}{d}{\Pi} ~}
~ (\spadesuit.1)
\quad
\displaystyle\frac
{~
  \begin{array}{l}
  (~ \cqat{(t'_i == X_i\theta')}{\tp}{\Pi} ~)_{i = 1 \ldots n} \\
  \cqat{\mbox{pay}_{d_0}}{d_0}{\Pi} \\
  (~ \cqat{(X_i\theta' \sim t_i\theta)}{d_i}{\Pi} ~)_{i = 1 \ldots n} \\
  \cqat{\hat{p}_C(\ntup{t}{n}\theta)}{d_{n+1}}{\Pi} \\
  \end{array}
~}
{~ \cqat{p'(\ntup{t'}{n})}{d}{\Pi} ~}
~ (\spadesuit.2)
$$
By construction of $\sigma$, ($\spadesuit$.1) can be rewritten as ($\spadesuit$.2), and in order to build the rest of $T'$\!, we show that each premise of ($\spadesuit$.2) admits a proof in $\QCHL(\qdom,\cdom)$ w.r.t. the transformed program $\elimS{\Prog}$:
\begin{itemize}
\item
$\elimS{\Prog} \qchldc \cqat{(t'_i == X_i\theta')}{\tp}{\Pi}$ for $i = 1 \ldots n$.
Straightforward using a single {\bf QEA} inference step since $X_i\theta' = t'_i$ and $t'_i \approx_\Pi t'_i$ is trivially true.
\item $\elimS{\Prog} \qchldc \cqat{\mbox{pay}_{d_0}}{d_0}{\Pi}$.
Immediate using the clause $(\mbox{pay}_{d_0} \qgets{d_0}) \in \elimS{\Prog}$ with a single {\bf QDA} inference step.
\item $\elimS{\Prog} \qchldc \cqat{(X_i\theta' \sim t_i\theta)}{d_i}{\Pi}$ for $i = 1 \ldots n$.
From the first $n$ premises of ($\clubsuit$) we know $\Prog \sqchlrdc \cqat{(t'_i == t_i\theta)}{d_i}{\Pi}$ with a proof tree $T_{1i}$ satisfying $\Vert T_{1i} \Vert < \Vert T \Vert$ for $i = 1 \ldots n$. Therefore, for $i = 1 \ldots n$, $\elimS{\Prog} \qchldc \cqat{(t'_i \sim t_i\theta)}{d_i}{\Pi}$ with some QCHL($\qdom$,$\cdom$) proof tree $T'_{1i}$ by inductive hypothesis. Since $(X_i\theta' \sim t_i\theta) = (t'_i \sim t_i\theta)$ for $i = 1 \ldots n$, we are done.
\item $\elimS{\Prog} \qchldc \cqat{\hat{p}_C(\ntup{t}{n}\theta)}{d}{\Pi}$.
This is proved by a $\QCHL(\qdom,\cdom)$ proof tree with a {\bf QDA} inference step node at its root of the following form:
$$
\displaystyle\frac
{~
  (~ \cqat{(t_i\theta == t_i\theta)}{d_i}{\Pi} ~)_{i = 1 \ldots n}
  \quad
  (~ \cqat{B_\sim^j\theta}{e_j}{\Pi} ~)_{j = 1 \ldots m}
~}
{~ \cqat{\hat{p}_C(\ntup{t}{n}\theta)}{d}{\Pi} ~}
~ (\heartsuit)
$$
which uses the program clause $\hat{C}$ instantiated by the substitution $\theta$.
Once more, we have to check that the premises can be derived in $\QCHL(\qdom,\cdom)$ from the transformed program $\elimS{\Prog}$ and that the side conditions of ($\heartsuit$) are satisfied:
\begin{itemize}
  \item The first $n$ premises can be trivially proved using {\bf QEA} inference steps.
  \item The last $m$ premises can be proved w.r.t. $\elimS{\Prog}$ with some $\QCHL(\qdom,\cdom)$ proof trees $T'_{2j} ~ (1 \leq j \leq m)$ by the inductive hypothesis, since we have premises $(~ \cqat{B_j\theta}{e_j}{\Pi} ~)_{j = 1 \ldots m}$ at ($\clubsuit$) that can be proved in $\sqclp{\simrel}{\qdom}{\cdom}$ w.r.t. $\Prog$ with proof trees $T_{2j}$ of size $\Vert T_{2j} \Vert < \Vert T \Vert ~ (1 \leq j \leq m)$.
  \item The side conditions---namely: $e_j \dgeq^? w_j ~ (1 \leq j \leq m)$, $d \dleq d_i ~ (1 \leq i \leq n)$ 
   and $d \dleq \alpha \circ e_j ~ (1 \leq j \leq m)$---trivially hold because they are also satisfied by ($\clubsuit$). 
\end{itemize}
\end{itemize}

Finally, we complete the construction of $T'$ by checking that ($\spadesuit$.2) satisfies
the side conditions of the inference rule {\bf QDA}:
\begin{itemize}
  \item All threshold values at the body of $\hat{C}_{p'}$ are `?'\!, therefore the first group of side conditions becomes $d_i \dgeq^?\,\,? ~ (0 \leq i \leq n+1)$, which are trivially true.
  \item The second side condition reduces to $d \dleq \tp$, which is also trivially true.
  \item The third, and last, side condition is $d \dleq \tp \circ d_i ~ (0 \leq i \leq n+1)$, or equivalently $d \dleq d_i ~ (0 \leq i \leq n+1)$. In fact, $d \dleq d_i ~ (0 \leq i \leq n)$ holds due to the side conditions in ($\clubsuit$), and $d \dleq d_{n+1}$ holds because $d_{n+1} = d$ by construction of ($\spadesuit$.1) and ($\spadesuit$.2).
\end{itemize}

\smallskip\noindent
[2. $\Rightarrow$ 1.] {\em (the transformation is sound).}
Assume that $T'$ is a $\QCHL(\qdom,\cdom)$ proof tree witnessing $\elimS{\Prog} \qchldc \cqat{A_\sim}{d}{\Pi}$.
We want to show the existence of a $\SQCHL(\simrel,\qdom,\cdom)$ proof tree $T$ witnessing $\Prog \sqchlrdc \cqat{A}{d}{\Pi}$. We reason by complete induction of $\Vert T' \Vert$.
There are three possible cases according to the syntactic form of the atom $A_\sim$.
In each case we argue how to build the desired proof tree $T$.

\noindent --- $A_\sim$ is a primitive atom $\kappa$.
In this case $A$ is also $\kappa$ and $T'$ contains only one {\bf QPA} inference node. Both $\elimS{\Prog} \qchldc \cqat{\kappa}{d}{\Pi}$ and $\Prog \sqchlrdc \cqat{\kappa}{d}{\Pi}$ are equivalent to $\Pi \model{\cdom} \kappa$ because of the inference rules {\bf QPA} and {\bf SQPA}, therefore $T$ trivially contains just one {\bf SQPA} inference node.

\noindent --- $A_\sim$ is of the form $t \sim s$.
In this case $A$ is $t == s$ and $T'$ is rooted by a {\bf QDA} inference step. From $\elimS{\Prog} \qchldc \cqat{(t \sim s)}{d}{\Pi}$ and by construction of $\elimS{\Prog}$ we have $EQ_\simrel \qchldc \cqat{(t \sim s)}{d}{\Pi}$. By Lemma \ref{lema:equiv} we get $t \approx_{d,\Pi} s$ and, by the definition of the {\bf SQEA} inference step, we can build $T$ as a proof tree with only one {\bf SQEA} inference node proving $\Prog \sqchlrdc  \cqat{(t == s)}{d}{\Pi}$.

\noindent --- $A_\sim$ is a defined atom $p'(\ntup{t}{n})$ with $p'\in DP^n$ and $p' \neq\,\,\sim$.
In this case $A = A_\sim$ and the step at the root of $T'$ must be a  {\bf QDA} inference step using a clause $C' \in \elimS{\Prog}$ with head predicate $p'$ and a substitution $\theta$. Because of Definition \ref{def:sqclptransform} and the fact that $p'$ is relevant for $\Prog$, there must be some clause $C : (p(\ntup{t}{n}) \qgets{\alpha} \tup{B}) \in \Prog$ such that $\simrel(p,p') = d_0 \neq \bt$, and $C'$ must be of the form:
$$
C' :
p'(\ntup{X}{n}) \qgets{\tp} \qat{\mbox{pay}_{d_0}}{?},~ (\qat{(X_i \sim t_i)}{?})_{i = 1 \ldots n},~ \qat{\hat{p}_C(\ntup{t}{n})}{?}
$$
where the variables $\ntup{X}{n}$ do not occur in $C$.
Thus the {\bf QDA} inference step at the root of $T'$ must be of the form:
$$
\displaystyle\frac
{~
  \begin{array}{l}
  (~ \cqat{(t'_i == X_i\theta)}{d_{1i}}{\Pi} ~)_{i = 1 \ldots n} \\
  \cqat{\mbox{pay}_{d_0}\theta}{e_{10}}{\Pi} \\
  (~ \cqat{(X_i \sim t_i)\theta}{e_{1i}}{\Pi} ~)_{i = 1 \ldots n} \\
  \cqat{\hat{p}_C(\ntup{t}{n})\theta}{e_{1(n+1)}}{\Pi} \\
  \end{array}
~}
{~ \cqat{p'(\ntup{t'}{n})}{d}{\Pi} ~}
~ (\spadesuit)
$$
and the proof of the last premise must use the only clause for $\hat{p}_C$ introduced in $\elimS{\Prog}$ according to Definition  \ref{def:sqclptransform}, i.e.:
$$
\hat{C}:
\hat{p}_C(\ntup{t}{n}) \qgets{\alpha} \qat{B_\sim^1}{w_1},~ \ldots,~ \qat{B_\sim^m}{w_m} \enspace .
$$
Therefore, the proof of this premise must be of the form:
$$
\displaystyle\frac
{~
  (~ \cqat{(t_i\theta == t_i\theta')}{d_{2i}}{\Pi} ~)_{i = 1 \ldots n}
  \quad
  (~ \cqat{B_\sim^j\theta'}{e_{2j}}{\Pi} ~)_{j = 1 \ldots m}
~}
{~ \cqat{\hat{p}_C(\ntup{t}{n})\theta}{e_{1(n+1)}}{\Pi} ~}
~ (\heartsuit)
$$
for some substitution $\theta'$ not affecting $\ntup{X}{n}$.
We can assume that the last $m$ premises in ($\heartsuit$) are proved in $\QCHL(\qdom,\cdom)$ w.r.t. $\elimS{\Prog}$ by proof trees $T'_{j}$ satisfying $\Vert T'_{j} \Vert < \Vert T' \Vert ~ (1 \leq j \leq m)$.
Then we use the substitution $\theta'$ and clause $C$ to build a $\SQCHL(\simrel,\qdom,\cdom)$ proof tree $T$
with a {\bf SQDA} inference step at the root of the form:
$$
\displaystyle\frac
{~
  (~ \cqat{(t'_i == t_i\theta')}{e_{1i}}{\Pi} ~)_{i = 1 \ldots n}
  \quad
  (~ \cqat{B_j\theta'}{e_{2j}}{\Pi} ~)_{j = 1 \ldots m}
~}
{~ \cqat{p'(\ntup{t'}{n})}{d}{\Pi} ~}
~ (\clubsuit)
$$
Next we check that the premises of this inference step admit proofs in $\SQCHL(\simrel,\qdom,$ $\cdom)$ and that
$(\clubsuit)$ satisfies the side conditions of a valid {\bf SQDA} inference step.
\begin{itemize}
\item
$\Prog \sqchlrdc \cqat{(t'_i == t_i\theta')}{e_{1i}}{\Pi}$ for $i = 1 \ldots n$.
\begin{itemize}
\item From the premises $(\cqat{(X_i \sim t_i)\theta}{e_{1i}}{\Pi})_{i = 1 \ldots n}$ of $(\spadesuit)$ and by construction of $\elimS{\Prog}$ we know $EQ_\simrel \qchldc \cqat{(X_i \sim t_i)\theta}{e_{1i}}{\Pi} ~ (1 \leq i \leq n)$.
Therefore by Lemma \ref{lema:equiv} we have $X_i\theta \approx_{e_{1i},\Pi} t_i\theta$ for $i=1 \dots n$.
\item Consider now the premises $(\cqat{(t'_i == X_i\theta)}{d_{1i}}{\Pi})_{i = 1 \ldots n}$ of $(\spadesuit)$.
Their proofs  must rely on {\bf QEA} inference steps, and therefore $t'_i \approx_{\Pi} X_i\theta$ holds for $i=1 \dots n$.
\item Analogously, from the proofs of the premises $(\cqat{(t_i\theta == t_i\theta')}{d_{2i}}{\Pi})_{i = 1 \ldots n}$ we have $t_i\theta \approx_{\Pi} t_i\theta'$ (or equivalently $t_i\theta' \approx_{\Pi} t_i\theta$) for $i = 1 \ldots n$.
\end{itemize} 
From the previous points we have $X_i\theta \approx_{e_{1i},\Pi} t_i\theta$, $t'_i \approx_{\Pi} X_i\theta$ and $t_i\theta' \approx_{\Pi} t_i\theta$, which by Lemma 2.7(1) of \cite{RR10TR} imply $ t'_i \approx_{e_{1i},\Pi} t_i\theta' ~ (1 \leq i \leq n)$. 
Therefore the premises $(\cqat{(t'_i == t_i\theta')}{e_{1i}}{\Pi})_{i = 1 \ldots n}$ can be proven in $\SQCHL(\simrel,\qdom,\cdom)$ using a {\bf SQEA} inference step.
\item
$\Prog \sqchlrdc \cqat{B_j\theta'}{e_{2j}}{\Pi}$ for $j = 1 \ldots m$.
We know $\elimS{\Prog} \qchldc \cqat{B_\sim^j\theta'}{e_{2j}}{\Pi}$ with a proof tree $T'_j$ satisfying $\Vert T'_j \Vert < \Vert T' \Vert ~ (1 \leq j \leq m)$ because of ($\heartsuit$).
Therefore we have, by inductive hypothesis, $\Prog \sqchlrdc \cqat{B_j\theta'}{e_{2j}}{\Pi}$ for some $\SQCHL(\simrel,\qdom,\cdom)$ proof tree $T_j ~ (1 \leq j \leq m)$.
\item
$\simrel(p,p') = d_0 \neq \bt$.
As seen above.
\item
$e_{2j} \dgeq^? w_j$ for $j = 1 \ldots m$.
This is a side condition of the {\bf QDA} step in $(\heartsuit)$.
\item
$d \dleq e_{1i}$ for $i = 1 \ldots n$.
Straightforward from the side conditions of $(\spadesuit)$, which include $d \dleq \tp \circ e_{1i}$ for $(0 \le i \le n+1)$.
\item
$d \dleq \alpha \circ e_{2j}$ for $j = 1 \ldots m$.
This follows from the side conditions of $(\spadesuit)$ and $(\heartsuit)$, since we have $d \dleq \tp \circ e_{1i}$ for $i = 0 \ldots n+1$ (in particular $d \dleq e_{1(n+1)}$) and $e_{1(n+1)} \dleq \alpha \circ e_{2j}$ for $j = 1 \ldots m$. \mathproofbox
\end{itemize}
\end{proof*}

Finally, the next theorem extends the previous result to goals.

\begin{thm}
\label{thm:SQCLP2QCLP:goals} 
Let $G$ be a goal for a $\sqclp{\simrel}{\qdom}{\cdom}$-program $\Prog$ whose atoms are all relevant for $\Prog$.
Assume $\Prog' = \elimS{\Prog}$ and $G' = \elimS{G}$. 
Then, $\Sol{\Prog}{G} = \Sol{\Prog'}{G'}$.
\end{thm}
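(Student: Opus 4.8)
The plan is to prove the set equality $\Sol{\Prog}{G} = \Sol{\Prog'}{G'}$ by showing that an arbitrary triple $\langle \sigma, \mu, \Pi \rangle$ lies in the left-hand side exactly when it lies in the right-hand side. Writing $G$ as $(\qat{A_i}{W_i}, W_i \dgeq^? \beta_i)_{i = 1 \ldots m}$, I would first observe that the goal transformation of Definition~\ref{def:sqclptransform}(4) rewrites only the atoms $A_i$ into $(A_i)_\sim$, leaving the qualification variables $W_i$, the threshold bounds $\beta_i$ and the annotated-atom structure untouched. Consequently $G$ and $G' = \elimS{G}$ carry the same qualification variables, and by Definitions~\ref{dfn:goalsol}(1) and~\ref{dfn:qclp-goalsol}(1) the admissibility requirements on $\langle \sigma, \mu, \Pi \rangle$ (that $\sigma$ is a $\cdom$-substitution, that $W\mu \in \aqdom$ for all $W \in \domset{\mu}$, and that $\Pi$ is satisfiable and finite) together with the threshold conditions $W_i\mu = d_i \dgeq^? \beta_i$ coincide for $G$ w.r.t. $\Prog$ and for $G'$ w.r.t. $\Prog'$. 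The sole difference between the two membership conditions is the derivability requirement, namely $\Prog \sqchlrdc \cqat{A_i\sigma}{W_i\mu}{\Pi}$ versus $\Prog' \qchldc \cqat{(A_i)_\sim\sigma}{W_i\mu}{\Pi}$; so everything reduces to proving, for each $i$, the equivalence of these two derivability statements.

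Next I would record the elementary commutation $(A\sigma)_\sim = (A_\sim)\sigma$, valid for every atom $A$ and substitution $\sigma$. This follows from the two cases of Definition~\ref{def:sqclptransform}(1): if $A$ is an equation $t == s$ then $A\sigma$ is the equation $t\sigma == s\sigma$, so both sides equal $t\sigma \sim s\sigma$; and if $A$ is primitive or defined then $A_\sim = A$ and $A\sigma$ stays within the same syntactic category, so both sides equal $A\sigma$. In particular $(A_i)_\sim\sigma = (A_i\sigma)_\sim$, which rewrites the $G'$ derivability requirement into $\Prog' \qchldc \cqat{(A_i\sigma)_\sim}{W_i\mu}{\Pi}$---precisely statement~2 of Theorem~\ref{thm:SQCLP2QCLP:programs} instantiated at the atom $A_i\sigma$.

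It then remains to confirm that Theorem~\ref{thm:SQCLP2QCLP:programs} is applicable with $A := A_i\sigma$, $d := W_i\mu$ and the given $\Pi$. Its three hypotheses hold: (a) $A_i\sigma$ is relevant for $\Prog$, because relevance depends only on the syntactic category of the atom and, for defined atoms, on the head predicate symbol, and these are preserved by substitution---an equation stays an equation, a primitive stays primitive, and a defined atom $p(\ntup{t}{n})$ becomes $p(\ntup{t}{n})\sigma$ with the same affected predicate $p$; (b) $W_i\mu = d_i \in \aqdom$, which holds in any well-formed solution since $W_i \in \domset{\mu}$ and the solution conditions require $W\mu \in \aqdom$ for all $W \in \domset{\mu}$; and (c) $\Pi$ is satisfiable and finite by the admissibility conditions already noted. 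Applying the theorem for each $i$ yields $\Prog \sqchlrdc \cqat{A_i\sigma}{d_i}{\Pi} \Longleftrightarrow \Prog' \qchldc \cqat{(A_i\sigma)_\sim}{d_i}{\Pi}$, and conjoining these equivalences over $i = 1 \ldots m$ gives $\langle \sigma, \mu, \Pi \rangle \in \Sol{\Prog}{G}$ iff $\langle \sigma, \mu, \Pi \rangle \in \Sol{\Prog'}{G'}$, which is the desired equality.

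I do not expect a genuine obstacle, since Theorem~\ref{thm:SQCLP2QCLP:programs} already carries the substantive content and this extension is largely bookkeeping. The points that most need care are the commutation $(A\sigma)_\sim = (A_\sim)\sigma$ and the preservation of relevance under substitution, together with verifying that $\elimS{\cdot}$ alters neither the qualification variables nor the thresholds, so that the non-derivability side conditions are literally identical for $G$ and $\elimS{G}$. If full rigour about the implicit assumption $W_i \in \domset{\mu}$ is wanted, one notes that otherwise $W_i\mu$ would be a variable rather than a value in $\aqdom$, so no observable qc-atom $\cqat{A_i\sigma}{W_i\mu}{\Pi}$ could be derived and the derivability condition would fail on both sides at once.
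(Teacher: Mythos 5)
Your proof is correct and follows essentially the same route as the paper's: both reduce the set equality to an atom-by-atom application of Theorem~\ref{thm:SQCLP2QCLP:programs}, after observing that $\elimS{\cdot}$ leaves the qualification variables and threshold conditions of the goal untouched. The only difference is that you make explicit some bookkeeping the paper treats as immediate (the commutation $(A\sigma)_\sim = (A_\sim)\sigma$ and the preservation of relevance under substitution), which is sound and, if anything, slightly more rigorous.
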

\begin{proof*}
According to the definition of goals in Section \ref{sec:sqclp}, and Definition \ref{def:sqclptransform}, $G$ and $G'$ must be of the form $(\qat{A_i}{W_i}, W_i \,{\dgeq}^? \beta_i)_{i = 1 \ldots m}$ and $(\qat{A_\sim^i}{W_i}, W_i \,{\dgeq}^? \beta_i)_{i = 1 \ldots m}$, respectively.
By Definitions \ref{dfn:goalsol} and \ref{dfn:qclp-goalsol}, both $\Sol{\Prog}{G}$ and $\Sol{\Prog'}{G'}$ are sets of triples $\langle \sigma, \mu, \Pi \rangle$ where $\sigma$ is a $\cdom$-substitution, $\mu : \warset{G} \to \aqdomd{\qdom}$ (note that $\warset{G} = \warset{G'}$) and $\Pi$ is a satisfiable finite set of $\cdom$-constraints.
Moreover:
\begin{enumerate}
\item
$\langle \sigma, \mu, \Pi \rangle \in \Sol{\Prog}{G}$ iff $W_i\mu = d_i \dgeq^? \!\beta_i$ and $\Prog \sqchlrdc \cqat{A_i\sigma}{W_i\mu}{\Pi}$ $(1 \leq i \leq m)$.
\item
$\langle \sigma, \mu, \Pi \rangle \in \Sol{\Prog'}{G'}$ iff $W_i\mu = d_i \dgeq^? \!\beta_i$ and $\Prog' \qchldc \cqat{A_\sim^i\sigma}{W_i\mu}{\Pi}$ $(1 \leq i \leq m)$.
\end{enumerate}
Because of Theorem \ref{thm:SQCLP2QCLP:programs}, conditions (1) and (2) are equivalent. \mathproofbox
\end{proof*}
\subsection{Transforming QCLP into CLP}
\label{sec:implemen:QCLP2CLP}

The results presented in this subsection are dependant on the assumption that the qualification domain $\qdom$ is existentially expressible in the constraint domain $\cdom$ via an injective mapping $\imath : \aqdomd{\qdom} \to C_\cdom$ and two existential $\cdom$-constraints of the following form:
\begin{itemize}
\item[] $\qval{X} = \exists U_1\ldots\exists U_k(B_1 \land \ldots \land B_m)$
\item[] $\qbound{X,Y,Z} = \exists V_1\ldots\exists V_l(C_1 \land \ldots \land C_q)$
\end{itemize}

Our aim is to present semantically correct transformations from $\qclp{\qdom}{\cdom}$ into $\clp{\cdom}$,
working both for programs and goals. In order to compute with the encodings of $\qdom$ values in $\cdom$,
we will use  the $\clp{\cdom}$-program $E_\qdom$ consisting of the following two clauses:
\begin{itemize}
\item[] $qV\!al(X) ~\gets~ B_1,\ \ldots,\ B_m$
\item[] $qBound(X,Y,Z) ~\gets~ C_1,\ \ldots,\ C_q$
\end{itemize}
where $qV\!al \in DP^1$ and $qBound \in DP^3$ do not occur in the $\qclp{\qdom}{\cdom}$ programs and goals to be transformed.

The lemma stated below is an immediate consequence of Lemma \ref{lema:dec} and Definition \ref{dfn:expressible}.

\begin{lem}
\label{lema:expr}
For any satisfiable finite set $\Pi$ of $\cdom$-constraints one has:
\begin{enumerate}
\item
For any ground term $t \in C_\cdom$:
$$t \in \mbox{ran}(\imath) \iff \qval{t} \mbox{ true in } \cdom \iff E_\qdom \chlc \cat{qV\!al(t)}{\Pi}$$
\item
For any 
ground terms $r = \imath(x)$, $s = \imath(y)$, $t = \imath(z)$ with $x,y,z \in \aqdomd{\qdom}$:
$$x \dleq y \circ z \iff \qbound{r,s,t} \mbox{ true in } \cdom \iff E_\qdom \chlc \cat{qBound(r,s,t)}{\Pi}$$
\end{enumerate}
The two items above are also valid if $E_\qdom$ is replaced by any $\clp{\cdom}$-program including the two clauses in $E_\qdom$ and having no additional occurrences of $qV\!al$ and $qBound$ at the head of clauses. \mathproofbox
\end{lem}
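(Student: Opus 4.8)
The plan is to obtain each of the two equivalence chains by splicing together two facts: the leftmost equivalence comes directly from Definition \ref{dfn:expressible}, while the rightmost equivalence comes from Lemma \ref{lema:dec} applied to the appropriate defining clause of $E_\qdom$. Since items 1 and 2 are structurally identical, I would treat them in parallel.

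For item 1, I would first note that the leftmost equivalence $t \in \mathrm{ran}(\imath) \iff \qval{t}$ true in $\cdom$ is immediate from Definition \ref{dfn:expressible}(1): as $t$ is ground, $\qval{t}$ is closed, so it is true in $\cdom$ exactly when the valuation sending $X$ to $t$ belongs to $\Solc{\qval{X}}$, which by Definition \ref{dfn:expressible}(1) holds iff $t \in \mathrm{ran}(\imath)$. For the rightmost equivalence, I would apply Lemma \ref{lema:dec} with $\pi(X) := \qval{X} = \exists U_1\ldots\exists U_k(B_1 \land \ldots \land B_m)$, with $p := qV\!al$, and with the clause $qV\!al(X) \gets B_1, \ldots, B_m$ of $E_\qdom$. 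Because $t$ is ground, items 1 and 2 of Lemma \ref{lema:dec} combine into the biconditional $E_\qdom \chlc \cat{qV\!al(t)}{\Pi} \iff \Pi \models_\cdom \qval{t}$, and the remark in Lemma \ref{lema:dec}(2) identifies $\Pi \models_\cdom \qval{t}$ with ``$\qval{t}$ true in $\cdom$'' whenever $t$ is ground and $\Pi$ is satisfiable. Chaining the two equivalences yields item 1.

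Item 2 I would handle identically. Definition \ref{dfn:expressible}(2) gives $x \dleq y \circ z \iff \qbound{r,s,t}$ true in $\cdom$, once I observe that $r, s, t$ are by hypothesis the $\imath$-images of $x, y, z$, so the valuation sending $X, Y, Z$ to $r, s, t$ is the relevant one in Definition \ref{dfn:expressible}(2). Then Lemma \ref{lema:dec} applied to $\pi(X,Y,Z) := \qbound{X,Y,Z} = \exists V_1\ldots\exists V_l(C_1 \land \ldots \land C_q)$, with $p := qBound$, and with the defining clause $qBound(X,Y,Z) \gets C_1, \ldots, C_q$ supplies the remaining equivalence, again using groundness of $r, s, t$.

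There is no genuinely hard step here; the only point deserving attention is checking the hypotheses of Lemma \ref{lema:dec}, namely that $qV\!al$ (resp.\ $qBound$) occurs at the head of exactly its one defining clause. This is secured by the stipulation that $qV\!al$ and $qBound$ do not occur in the programs and goals being transformed. The same head-uniqueness observation is exactly what justifies the closing sentence of the lemma: enlarging $E_\qdom$ to any $\clp{\cdom}$-program that still contains these two clauses and adds no further head-occurrences of $qV\!al$ or $qBound$ leaves the hypotheses of Lemma \ref{lema:dec} intact, so every derivation step above applies verbatim.
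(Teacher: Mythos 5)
Your proposal is correct and follows exactly the route the paper intends: the paper presents this lemma as an immediate consequence of Lemma \ref{lema:dec} and Definition \ref{dfn:expressible}, which is precisely the splicing you carry out (Definition \ref{dfn:expressible} for the left equivalences, Lemma \ref{lema:dec} with groundness for the right ones, and head-uniqueness of $qV\!al$ and $qBound$ for the closing remark).
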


\begin{figure}[ht]
\figrule
\centering
\begin{tabular}{@{\hspace{4mm}}l@{\hspace{1mm}}l}
\multicolumn{2}{l}{\textbf{Transforming Atoms}} \\&\\
\bf TEA & $\transform{({t == s})} \!= (t == s, ~\imath(\tp))$. \\
&\\
\bf TPA & $\transform{({\kappa})} \!= (\kappa, ~\imath(\tp))$ with $\kappa$ primitive atom. \\
&\\
\bf TDA & $\transform{({p(\ntup{t}{n})})} \!= (p'(\ntup{t}{n},W), ~W)$ with $p \in DP^n$ and $W$\! a fresh CLP variable. \\
&\\
\multicolumn{2}{l}{\textbf{Transforming qc-Atoms}\vspace{2mm}} \\
\bf TQCA &  $\displaystyle\frac
  {\transform{A} = (A',w) }
  {\quad \transform{(\qat{A}{d} \Leftarrow \Pi)} = (A' \Leftarrow \Pi, ~\{\qval{w},\ \qbound{\imath(d),\imath(\tp),w}\}) \quad}$ \\
&\\
\multicolumn{2}{l}{\textbf{Transforming Program Clauses}\vspace{2mm}} \\
\bf TPC & $\displaystyle\frac
  { (~ \transform{B_j} = (B_j',  w'_j) ~)_{j=1 \dots m}}
  {\quad
        \transform{C} = p'(\ntup{t}{n},W) ~\gets~ qV\!al(W),\ \left(
          \begin{array}{l}
            qV\!al(w_j'),\ \encode{w'_j \dgeq^? \!\imath(w_j)}, \\
            qBound(W, \imath(\alpha), w'_j),\ B'_j
          \end{array}
        \right)_{j = 1 \ldots m}
  \quad}$ \\
&\\
\multicolumn{2}{l}{$\qquad$ where $C : p(\ntup{t}{n}) \qgets{\alpha} \qat{B_1}{w_1}, \ldots, \qat{B_m}{w_m}$, $W$\! is a fresh CLP variable and} \\
\multicolumn{2}{l}{$\qquad$ $\encode{w'_j \dgeq^? \imath(w_j)}$ is omitted if $w_j =\ ?$, i.o.c. abbreviates $qBound(\imath(w_j),\imath(\tp),w'_j)$.}\\
&\\
\multicolumn{2}{l}{\textbf{Transforming Goals}\vspace{2mm}} \\
\bf TG & $\displaystyle\frac
  {(~ \transform{B_j} = (B_j',  w'_j) ~)_{j=1 \dots m}}
  {\quad
          \elimD{G} = \left(
            \begin{array}{l}
              qV\!al(W_j),\ \encode{W_j\dgeq^? \imath(\beta_j)}, \\
              qV\!al(w'_j),\ qBound(W_j, \imath(\tp), w'_j),\  B_j'
            \end{array}
          \right)_{j = 1 \ldots m}
  \quad}$ \\
&\\
\multicolumn{2}{l}{$\qquad$ where $G : (\qat{B_j}{W_j}, W_j \dgeq^? \beta_j)_{j=1 \dots m}$ and $\encode{W_j\dgeq^? \imath(\beta_i)}$ as in \textbf{TPC} above.} \\
\end{tabular}
\caption{Transformation rules}
\label{fig:transformation}
\figrule
\end{figure}

Now we are ready to define the  transformations from $\qclp{\qdom}{\cdom}$ into $\clp{\cdom}$.

\begin{defn}\label{def:qclptransform}
Assume that $\qdom$ is existentially expressible in $\cdom$, and let $\qval{X}$, $\qbound{X,Y,Z}$ and  $E_\qdom$ be as explained above.
Assume also a $\qclp{\qdom}{\cdom}$-program $\Prog$ and a $\qclp{\qdom}{\cdom}$-goal $G$ for $\Prog$ without occurrences of the defined predicate symbols $qV\!al$ and $qBound$.
Then:
\begin{enumerate}
\item
$\Prog$ is transformed into the $\clp{\cdom}$-program  $\elimD{\Prog}$ consisting of the two clauses in $E_\qdom$ and the
transformed $\transform{C}\!$ of each clause $C \in \Prog$\!, built as specified in Figure \ref{fig:transformation}. The transformation rules of this figure assume a different choice of $p' \in DP^{n+1}$ for each $p \in DP^n$\!.
\item
$G$ is transformed into the $\clp{\cdom}$-goal $\elimD{G}$ built as specified in Figure \ref{fig:transformation}.
Note that the qualification variables $\ntup{W}{\!n}$ occurring in $G$ become normal CLP variables in the transformed goal.
\mathproofbox
\end{enumerate}
\end{defn}

The following example illustrates the transformation elim$_\qdom$.
\begin{exmp}[Running example: $\clp{\rdom}$-program $\elimD{\elimS{\Prog_r}}$]
\label{exmp:elimd-elims-pr}

Consider the $\qclp{\U{\otimes}\W}{\rdom}$-program $\elimS{\Prog_r}$ and the goal $\elimS{G_r}$ for the same program as presented in Example \ref{exmp:elims-pr}. The transformed $\clp{\rdom}$-program $\elimD{\elimS{\Prog_r}}$ is as follows:
\begin{center}
\footnotesize\it
\renewcommand{\arraystretch}{1.5}
\begin{tabular}{rl}
\tiny $\hat{R}_1$ & \^{f}amous$_{R_1}$(sha, W) $\gets$ qVal(W), qBound(W, $\tp$, (0.9,1)) \\
\tiny $R_{1.1}$ & famous(X, W) $\gets$ qVal(W), qVal(W$_1$), qBound(W, $\tp$, \!W$_1$), pay$_\tp$(W$_1$), \\
& $\quad$ qVal(W$_2$), qBound(W, $\tp$, \!W$_2$), $\sim$(X, sha, W$_2$), \\
& $\quad$ qVal(W$_3$), qBound(W, $\tp$, \!W$_3$), \^{f}amous$_{R_1}$(sha, W$_3$) \\
\tiny $\hat{R}_2$ & \^{w}rote$_{R_2}$(sha, kle, W) $\gets$ qVal(W), qBound(W, $\tp$, (1,1)) \\
\tiny $R_{2.1}$ & wrote(X, Y, W) $\gets$ qVal(W), qVal(W$_1$), qBound(W, $\tp$, \!W$_1$), pay$_\tp$(W$_1$), \\
& $\quad$ qVal(W$_2$), qBound(W, $\tp$, \!W$_2$), $\sim$(X, sha, W$_2$), \\
& $\quad$ qVal(W$_3$), qBound(W, $\tp$, \!W$_3$), $\sim$(Y, kle, W$_3$), \\
& $\quad$ qVal(W$_4$), qBound(W, $\tp$, \!W$_4$), \^{w}rote$_{R_2}$(sha, kle, W$_4$) \\
\tiny $R_{2.2}$ & authored(X, Y, W) $\gets$ qVal(W), qVal(W$_1$), qBound(W, $\tp$, \!W$_1$), pay$_{(0.9,0)}$(W$_1$), \\
& $\quad$ qVal(W$_2$), qBound(W, $\tp$, \!W$_2$), $\sim$(X, sha, W$_2$), \\
& $\quad$ qVal(W$_3$), qBound(W, $\tp$, \!W$_3$), $\sim$(Y, kle, W$_3$), \\
& $\quad$ qVal(W$_4$), qBound(W, $\tp$, \!W$_4$), \^{w}rote$_{R_2}$(sha, kle, W$_4$) \\
\tiny $\hat{R}_3$ & \^{w}rote$_{R_3}$(sha, hamlet, W) $\gets$ qVal(W), qBound(W, $\tp$, (1,1)) \\
\tiny $R_{3.1}$ & wrote(X, Y, W) $\gets$ qVal(W), qVal(W$_1$), qBound(W, $\tp$, \!W$_1$), pay$_\tp$(W$_1$), \\
& $\quad$ qVal(W$_2$), qBound(W, $\tp$, \!W$_2$), $\sim$(X, sha, W$_2$), \\
& $\quad$ qVal(W$_3$), qBound(W, $\tp$, \!W$_3$), $\sim$(Y, hamlet, W$_3$), \\
& $\quad$ qVal(W$_4$), qBound(W, $\tp$, \!W$_4$), \^{w}rote$_{R_3}$(sha, hamlet, W$_4$) \\
\end{tabular}
\end{center}
\begin{center}
\footnotesize\it
\renewcommand{\arraystretch}{1.5}
\begin{tabular}{rl}
\tiny $R_{3.2}$ & authored(X, Y, W) $\gets$ qVal(W), qVal(W$_1$), qBound(W, $\tp$, \!W$_1$), pay$_{(0.9,0)}$(W$_1$), \\
& $\quad$ qVal(W$_2$), qBound(W, $\tp$, \!W$_2$), $\sim$(X, sha, W$_2$), \\
& $\quad$ qVal(W$_3$), qBound(W, $\tp$, \!W$_3$), $\sim$(Y, hamlet, W$_3$), \\
& $\quad$ qVal(W$_4$), qBound(W, $\tp$, \!W$_4$), \^{w}rote$_{R_3}$(sha, hamlet, W$_4$) \\
\tiny $\hat{R}_4$ & \^{g}ood\_work$_{R_4}$(X, W) $\gets$ qVal(W), \\
& $\quad$ qVal(W$_1$), qBound((0.5,100), $\tp$, W$_1$), qBound(W, (0.75,3), W$_1$), famous(Y, W$_1$), \\
& $\quad$ qVal(W$_2$), qBound(W, (0.75,3), W$_2$), authored(Y, X, W$_2$) \\
\tiny $R_{4.1}$ & good\_work(G, W) $\gets$ qVal(W), qVal(W$_1$), qBound(W, $\tp$, \!W$_1$), pay$_\tp$(W$_1$), \\
& $\quad$ qVal(W$_2$), qBound(W, $\tp$, \!W$_2$), $\sim$(G, X, W$_2$), \\
& $\quad$ qVal(W$_3$), qBound(W, $\tp$, \!W$_3$), \^{g}ood\_work$_{R_4}$(X, W$_3$) \\[3mm]
& \% Program clauses for $\sim$: \\
& $\sim$(X, Y, W) $\gets$ qVal(W), qVal($\tp$), qBound(W, $\tp$, $\tp$), X==Y \\
& $\sim$(kle, kli, W) $\gets$ qVal(W), qVal(W$_1$), qBound(W, $\tp$, \!W$_1$), pay$_{(0.8,2)}$(W$_1$) \\
& $[\ldots]$ \\[3mm]
& \% Program clauses for pay: \\
& pay$_\tp$(W) $\gets$ qVal(W), qBound(W, $\tp$, $\tp$) \\
& pay$_{(0.9,0)}$(W) $\gets$ qVal(W), qBound(W, $\tp$, (0.9,0)) \\
& pay$_{(0.8,2)}$(W) $\gets$ qVal(W), qBound(W, $\tp$, (0.8,2)) \\[3mm]
& \% Program clauses for qVal \& qBound: \\
& qVal((X$_1$,X$_2$)) $\gets$ X$_1$ $>$ 0, X$_1$ $\le$ 1, X$_2$ $\ge$ 0 \\
& qBound((W$_1$,W$_2$), (Y$_1$,Y$_2$), (Z$_1$,Z$_2$)) $\gets$ W$_1$ $\le$ Y$_1$ $\times$ Z$_1$, W$_2$ $\ge$ Y$_2$ $+$ Z$_2$ \\[2mm]
\end{tabular}
\end{center}

Finally, the goal $\elimD{\elimS{G_r}}$ for $\elimD{\elimS{\Prog_r}}$ is as follows:
\begin{center}
\footnotesize\it qVal(W), qBound((0.5,10), $\tp$, \!W), qVal(W'), qBound(W, $\tp$, \!W'), good\_work(X, W')
\end{center}

Note that, in order to improve the clarity of the program clauses of this example, the qualification value $(1,\!0)$---top value in $\U{\otimes}\W$---has been replaced by $\tp$. \mathproofbox
\end{exmp}

The next theorem proves the semantic correctness of the program transformation.

\begin{thm}
\label{thm:QCLP2CLP:programs}
Let $A$ be an atom such that $qV\!al$ and $qBound$ do not occur in $A$. Assume $d \in \aqdom$ such that $\transform{(\cqat{A}{d}{\Pi})} = (\cat{A'}{\Pi}, \Omega)$.
Then, the two following statements are equivalent:
\begin{enumerate}
  \item  $\Prog \qchldc \cqat{A}{d}{\Pi}$
  \item $\elimD{\Prog} \chlc \cat{A'\!\rho}{\Pi}$ for some $\rho \in \Sol{\cdom}{\Omega}$ such that $\domset{\rho} = \varset{\Omega}$.
\end{enumerate}
\end{thm}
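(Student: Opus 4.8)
The plan is to reduce both implications to a \emph{core lemma} about atoms carrying an \emph{exact} qualification value, prove that lemma by induction on proof-tree size, and then recover the existentially quantified solution $\rho$ of the theorem by a monotonicity argument. The core lemma I would state is: for every atom $A$ in which $qV\!al$ and $qBound$ do not occur, and for every $e \in \aqdom$, if $A$ is a defined atom $p(\ntup{t}{n})$ with $\transform{A} = (p'(\ntup{t}{n},W),W)$ then $\Prog \qchldc \cqat{p(\ntup{t}{n})}{e}{\Pi}$ iff $\elimD{\Prog} \chlc \cat{p'(\ntup{t}{n},\imath(e))}{\Pi}$; and if $A$ is an equation $t == s$ or a primitive $\kappa$ then both $\Prog \qchldc \cqat{A}{e}{\Pi}$ and $\elimD{\Prog} \chlc \cat{A}{\Pi}$ reduce, via the side conditions of \textbf{QEA}/\textbf{QPA} and \textbf{EA}/\textbf{PA}, to $\Pi \model{\cdom} t == s$ (resp.\ $\Pi \model{\cdom} \kappa$), independently of $e$. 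The equational and primitive cases are thus immediate, and since \textbf{TEA}/\textbf{TPA} leave these atoms unchanged (taking $w = \imath(\tp)$) no qualification argument is introduced.

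The heart of the argument is the defined-atom case, which I would handle by induction on the size of the given proof tree. In the forward direction, from a \textbf{QDA} step using a clause $C : p(\ntup{t}{n}) \qgets{\alpha} \qat{B_1}{w_1},\ldots,\qat{B_m}{w_m}$ with equational premises $\cqat{(t'_i == t_i\theta)}{d_i}{\Pi}$ (each giving $\Pi \model{\cdom} t'_i == t_i\theta$), body premises $\cqat{B_j\theta}{e_j}{\Pi}$, and side condition $e \dleq \bigsqcap_i d_i \sqcap \alpha \circ \bigsqcap_j e_j$ (which entails $e \dleq \alpha \circ e_j$ for each $j$ by monotonicity of $\circ$), I would assemble a \textbf{DA} step against $\transform{C}$: the head equations are discharged by \textbf{EA}, the subgoals $qV\!al(\imath(e))$, $qV\!al(\imath(e_j))$, $qBound(\imath(e),\imath(\alpha),\imath(e_j))$ and the thresholds $\encode{w'_j \dgeq^? \imath(w_j)}$ are discharged by Lemma \ref{lema:expr} (each $qV\!al$/$qBound$ subgoal translating into its $\qdom$-level fact), and each transformed body atom with qualification argument $\imath(e_j)$ is supplied by the inductive hypothesis applied to $\cqat{B_j\theta}{e_j}{\Pi}$. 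The backward direction is symmetric: a \textbf{DA} step can only use the unique clause $\transform{C}$ defining $p'$, from which I read off the ground encodings $\imath(e_j)$ bound to the $w'_j$, recover $\Prog \qchldc \cqat{B_j\theta}{e_j}{\Pi}$ by the inductive hypothesis, and reassemble a \textbf{QDA} step. What makes both directions fit together is the attenuation identity $\alpha \circ \bigsqcap_j e_j = \bigsqcap_j(\alpha \circ e_j)$: it shows that the conjunction over $j$ of the constraints $qBound(W,\imath(\alpha),w'_j)$ is exactly the \textbf{QDA} side condition on $e$, while the thresholds correspond verbatim to the conditions $e_j \dgeq^? w_j$.

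With the core lemma established, the theorem follows by unfolding $\Omega$. For a defined atom, $\Omega = \{\qval{W},\qbound{\imath(d),\imath(\tp),W}\}$ has $\varset{\Omega} = \{W\}$, so a $\rho \in \Sol{\cdom}{\Omega}$ with $\domset{\rho} = \varset{\Omega}$ is exactly a binding $W \mapsto \imath(e)$ with $e \in \aqdom$ and $d \dleq \tp \circ e = e$; hence statement 2 asserts precisely that $\elimD{\Prog} \chlc \cat{p'(\ntup{t}{n},\imath(e))}{\Pi}$ for some $e \dgeq d$, which by the core lemma is equivalent to $\Prog \qchldc \cqat{p(\ntup{t}{n})}{e}{\Pi}$ for some $e \dgeq d$. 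This matches statement 1 by downward closure of QCHL-provability in the qualification value (take $e = d$ for one direction; for the other, weaken the root value from $e$ to $d$, which the $\dleq$ form of the \textbf{QDA} side condition permits). For equational and primitive atoms $\Omega$ is ground and satisfied by the empty $\rho$ --- using $\tp \in \aqdom$ and $d \dleq \tp$ --- so statement 2 reduces to $\elimD{\Prog} \chlc \cat{A}{\Pi}$, and both statements coincide again with the relevant constraint entailment.

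I expect the main obstacle to be the groundness bookkeeping required to invoke Lemma \ref{lema:expr}. That lemma equates CLP-derivability of $qV\!al$/$qBound$ subgoals with $\qdom$-level facts only for \emph{ground} arguments --- indeed Lemma \ref{lema:dec}(3) exhibits a failure for non-ground arguments --- whereas in a \textbf{DA} step the qualification variables $W$ and $w'_j$ are fixed only through unification and are a priori just entailed-equal to encodings modulo $\Pi$. The delicate part of the proof is therefore to argue that, starting from the ground top-level value $\imath(e)$ provided by $\rho$, the qualification arguments propagated into the body are (forced by $\Pi$ to be) ground encodings, so that Lemma \ref{lema:expr} applies at every node; once this is secured, the remaining verification of side conditions is routine.
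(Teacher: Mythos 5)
Your proposal is correct, and its inductive engine is the same as the paper's proof: complete induction on the size of the given proof tree, a case split on the syntactic form of the atom, a correspondence between each \textbf{QDA} step on a clause $C$ and a \textbf{DA} step on $\transform{C}$, and Lemma \ref{lema:expr} to discharge the $qV\!al$, $qBound$ and threshold subgoals. What differs is the bookkeeping of the slack between $d$ and the value actually encoded. The paper runs the induction directly on statement~2: its induction hypothesis returns, for each body atom, some $\rho_j \in \Solc{\Omega_j}$, hence an encoding $w'_j\rho_j = \imath(e'_j)$ with only $e_j \dleq e'_j$, and the mismatch is absorbed at every level by monotonicity of $\circ$ (from $d \dleq \alpha \circ e_j$ and $e_j \dleq e'_j$ infer $d \dleq \alpha \circ e'_j$). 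You instead isolate an exact-value core lemma, $\Prog \qchldc \cqat{p(\ntup{t}{n})}{e}{\Pi}$ iff $\elimD{\Prog} \chlc \cat{p'(\ntup{t}{n},\imath(e))}{\Pi}$, and recover statement~2 in one stroke via downward closure of $\QCHL(\qdom,\cdom)$-derivability in the qualification value; that closure is indeed available because the \textbf{QDA} side condition has the form $d \dleq \cdots$, so weakening the root value preserves the proof. The two packagings are interchangeable: yours yields a cleaner invariant and confines all monotonicity reasoning to one place, while the paper's avoids stating an auxiliary lemma at the price of threading $\Omega$-solutions through every inductive call.

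Concerning the obstacle you single out at the end: it is real, but it is exactly the point at which the paper's soundness half says that, ``due to the premises $\cat{qV\!al(w'_j)\theta'}{\Pi}$ and Lemma \ref{lema:expr}, we can assume $w'_j\rho_j = \imath(e'_j)$'' without further justification, so your proposal is not weaker than the paper's here. Be aware, though, that your intended repair is misstated: from $\elimD{\Prog} \chlc \cat{qV\!al(w'_j)\theta'}{\Pi}$ one only obtains, via Lemma \ref{lema:dec}(1), the entailment $\Pi \model{\cdom} \qval{w'_j\theta'}$, which constrains the \emph{values} of $w'_j\theta'$ under solutions of $\Pi$ to lie in $\mbox{ran}(\imath)$; it does not make the term $w'_j\theta'$ a ground encoding (it may be a variable that $\Pi$ merely confines to an interval of encodings, possibly entangled with data variables). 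So what must be argued is that the \textbf{DA} step can be normalized so that qualification positions carry ground encodings --- not that $\Pi$ forces groundness; both your proof and the paper's leave this normalization implicit.
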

\begin{proof*}
We separately prove each implication.

\smallskip\noindent
[1. $\Rightarrow$ 2.] {\em (the transformation is complete).}
We assume that $T$ is a $\QCHL(\qdom,\cdom)$ proof tree witnessing $\Prog \qchldc \cqat{A}{d}{\Pi}$.
We want to show the existence of a $\clp{\cdom}$ proof tree  $T'$
witnessing $\elimD{\Prog} \chlc \cat{A'\!\rho}{\Pi}$
for some $\rho \in \Sol{\cdom}{\Omega}$ such that $\domset{\rho} = \varset{\Omega}$.
We reason  by complete induction on $\Vert T \Vert$.
There are three possible cases,  according to the the syntactic form of the atom $A$.
In each case we argue how to build the desired proof tree $T'$\!.

\noindent --- $A$ is a primitive atom $\kappa$.
In this case {\bf TQCA} and {\bf TPA} compute $A' = \kappa$ and $\Omega = \{\qval{\imath(\tp)},\ \qbound{\imath(d),\imath(\tp),\imath(\tp)}\}$. Now, from $\Prog \qchldc \cqat{\kappa}{d}{\Pi}$ follows $\Pi \model{\cdom} \kappa$ due to the {\bf QPA} inference, and therefore taking $\rho = \varepsilon$ we can prove $\elimD{\Prog} \chlc \cat{\kappa\varepsilon}{\Pi}$ with a proof tree $T'$\! containing only one {\bf PA} node. Moreover, $\varepsilon \in \Solc{\Omega}$ is trivially true because the two constraints belonging to $\Omega$ are obviously true in $\cdom$.

\noindent --- $A$ is an equation $t == s$.
In this case {\bf TQCA} and {\bf TEA} compute $A' = (t == s)$ and $\Omega = \{\qval{\imath(\tp)},\ \qbound{\imath(d),\imath(\tp),\imath(\tp)}\}$. Now, from $\Prog \qchldc \cqat{(t==s)}{d}{\Pi}$ follows $t \approx_\Pi s$ due to the {\bf QEA} inference, and therefore taking $\rho = \varepsilon$ we can prove $\elimD{\Prog} \chlc \cat{(t == s)\varepsilon}{\Pi}$ with a proof tree $T'$\! containing only one {\bf EA} node. Moreover, $\varepsilon \in \Solc{\Omega}$ is trivially true because the two constraints belonging to $\Omega$ are obviously true in $\cdom$.

\noindent --- $A$ is a defined atom $p(\ntup{t'}{n})$ with $p \in DP^n$.
In this case {\bf TQCA} and {\bf TDA} compute $A' = p'(\ntup{t'}{n},W)$ and $\Omega = \{\qval{W},\ \qbound{\imath(d),\imath(\tp),W}\}$ where $W$ is a fresh CLP variable.
On the other hand, $T$ must be rooted by a {\bf QDA} step of the form:
$$
\displaystyle\frac
    {~ (~ \cqat{(t'_i == t_i\theta)}{d_i}{\Pi} ~)_{i = 1 \ldots n} \quad (~ \cqat{B_j\theta}{e_j}{\Pi} ~)_{j = 1 \ldots m} ~}
    {\cqat{p(\ntup{t'}{n})}{d}{\Pi}} \quad (\clubsuit)
$$
using a clause $C : (p(\ntup{t}{n}) \qgets{\alpha} \qat{B_1}{w_1}, \ldots, \qat{B_m}{w_m}) \in \Prog$ instantiated by a substitution $\theta$ and such that the side conditions $e_j \dgeq^? w_j ~ (1 \le j \le m)$, $d \dleq d_i ~ (1 \le i \le n)$ and $d \dleq \alpha \circ e_j ~ (1 \le j \le m)$ are fulfilled.

For $j = 1 \ldots m$ we can assume $\transform{B_j} = (B'_j, w'_j)$ and thus $\transform{(\cqat{B_j\theta}{e_j}{\Pi})} = (\cat{B'_j\theta}{\Pi}, \Omega_j)$ where $\Omega_j = \{\qval{w'_j},\ \qbound{\imath(e_j),\imath(\tp),w'_j}\}$. The proof trees $T_j$ of the last $m$ premises of $(\clubsuit)$ will have less than $\Vert T \Vert$ nodes,
and hence the induction hypothesis can be applied to each $(\cqat{B_j\theta}{e_j}{\Pi})$ with $1 \leq j \leq m$, obtaining CHL($\cdom$) proof trees $T'_j$ proving $\elimD{\Prog} \chlc \cat{B'_j\theta\rho_j}{\Pi}$ for some $\rho_j \in \Solc{\Omega_j}$ with $\domset{\rho_j} = \varset{\Omega_j}$.

Consider $\rho = \{W \mapsto \imath(d)\}$ and $\transform{C} \in \elimD{\Prog}$ of the form:
$$
\transform{C} :~ p'(\ntup{t}{n},W') ~\gets~ qV\!al(W'),\
  \left( \begin{array}{l}
    qV\!al(w_j'), ~\encode{w'_j \dgeq^? \imath(w_j)}, \\
    qBound(W', \imath(\alpha), w'_j), ~B'_j \\
  \end{array}\right)_{j = 1 \ldots m.} \\
$$
Obviously, $\rho \in \Solc{\Omega}$ and $\domset{\rho} = \varset{\Omega}$. To finish the proof we must prove $\elimD{\Prog} \chlc \cat{A'\!\rho}{\Pi}$. We claim that this can be done with a CHL($\cdom$) proof tree $T'$ whose root inference is a {\bf DA} step of the form:
$$
\displaystyle\frac
{~
  \begin{array}{l}
     ~~~~(~ \cat{(t'_i\rho == t_i\theta')}{\Pi} ~)_{i = 1 \ldots n} \\
     ~~~~\cat{(W\rho == W'\theta')}{\Pi} \\
     ~~~~\cat{qV\!al(W')\theta'}{\Pi}  \\
     \left( \begin{array}{l}
      \cat{qV\!al(w'_j)\theta'}{\Pi} \\
      \cat{\encode{w'_j \dgeq^? \imath(w_j)}\theta'}{\Pi} \\
      \cat{qBound(W',\imath(\alpha),w'_j)\theta'}{\Pi} \\
      \cat{B'_j\theta'}{\Pi}
    \end{array}\right)_{j = 1 \ldots m} \\
  \end{array}
~}
{\cat{p'(\ntup{t'}{n},W)\rho}{\Pi}} ~ (\spadesuit)
$$
using $\transform{C}$ instantiated by the substitution $\theta' = \theta \uplus \rho_1 \uplus \dots \uplus \rho_m \uplus \{W' \mapsto \imath(d)\}$.
We check that the premises of ($\spadesuit$) can be derived from $\elimD{\Prog}$ in CHL($\cdom$):
\begin{itemize}
\item
$\elimD{\Prog} \chlc \cat{(t'_i\rho == t_i\theta')}{\Pi}$ for $i = 1 \ldots n$.
By construction of $\rho$ and $\theta'$\!, these are equivalent to prove $\elimD{\Prog} \chlc \cat{(t'_i == t_i\theta)}{\Pi}$ for $i = 1 \ldots n$ and these hold with CHL($\cdom$) proof trees of only one {\bf EA} node because of $t'_i \approx_\Pi t_i\theta$, which is a consequence of the first $n$ premises of ($\clubsuit$).
\item
$\elimD{\Prog} \chlc \cat{(W\rho == W'\theta')}{\Pi}$.
By construction of $\rho$ and $\theta'$\!, this is equivalent to prove $\elimD{\Prog} \chlc \cat{(\imath(d) == \imath(d))}{\Pi}$ which results trivial.
\item
$\elimD{\Prog} \chlc \cat{qV\!al(W')\theta'}{\Pi}$.
By construction of $\theta'$, this is equivalent to prove $\elimD{\Prog} \chlc \cat{qV\!al(\imath(d))}{\Pi}$. We trivially have that $\imath(d) \in \mbox{ran}(\imath)$. Then, by Lemma \ref{lema:expr}, this premise holds.
\item
$\elimD{\Prog} \chlc \cat{qV\!al(w'_j)\theta'}{\Pi}$ for $j = 1 \ldots m$.
By construction of $\theta'$ and Lemma \ref{lema:expr} we must prove, for any fixed $j$, that $\qval{w'_j\rho_j}$ is true in $\cdom$. As $\rho_j \in \Solc{\Omega_j}$ we know $\rho_j \in \Solc{\qval{w'_j}}$, therefore $\qval{w'_j\rho_j}$ is trivially true in $\cdom$.
\item
$\elimD{\Prog} \chlc \cat{\encode{w'_j \dgeq^? \imath(w_j)}\theta'}{\Pi}$ for $j = 1 \ldots m$.
We reason for any fixed $j$.
If $w_j =\ ?$ this results trivial.
Otherwise, it amounts to $\qbound{\imath(w_j),\imath(\tp),w'_j\rho_j}$ being true in $\cdom$, by construction of $\theta'$ and Lemma \ref{lema:expr}.
As seen before, $\qval{w'_j\rho_j}$ is true in $\cdom$, therefore $w'_j\rho_j = \imath(e'_j)$ for some $e'_j \in \aqdom$. From the side conditions of ($\clubsuit$) we have $w_j \dleq e_j$.
On the other hand, $\rho_j \in \Solc{\Omega_j}$ and, in particular, $\rho_j \in \Solc{\qbound{\imath(e_j),\imath(\tp),w'_j}}$. This, together with $w'_j\rho_j = \imath(e'_j)$, means $e_j \dleq e'_j$, which with $w_j \dleq e_j$ implies $w_j \dleq e'_j$, i.e. $\qbound{\imath(w_j),\imath(\tp),w'_j\rho_j}$ is true in $\cdom$.
\item
$\elimD{\Prog} \chlc \cat{qBound(W',\imath(\alpha),w'_j)\theta'}{\Pi}$ for $j = 1 \ldots m$.
We reason for any fixed $j$. By construction of $\theta'$ and Lemma \ref{lema:expr}, we must prove that $\qbound{\imath(d),\imath(\alpha),w'_j\rho_j}$ is true in $\cdom$. As seen before, $\qval{w'_j\rho_j}$ is true in $\cdom$, therefore $w'_j\rho_j = \imath(e'_j)$ for some $e'_j \in \aqdom$. From the side conditions of ($\clubsuit$) we have $d \dleq \alpha \circ e_j$. On the other hand, $\rho_j \in \Solc{\Omega_j}$ and, in particular, $\rho_j \in \Solc{\qbound{\imath(e_j), \imath(\tp), w'_j}}$. This, together with $w'_j\rho_j = \imath(e'_j)$, means $e_j \dleq e'_j$. Now, $d \dleq \alpha \circ e_j$ and $e_j \dleq e'_j$ implies $d \dleq \alpha \circ e'_j$, i.e. $\qbound{\imath(d),\imath(\alpha),w'_j\rho_j}$ is true in $\cdom$.
\item
$\elimD{\Prog} \chlc \cat{B'_j\theta'}{\Pi}$ for $j = 1 \ldots m$.
In this case, it is easy to see that $B'_j\theta' = B'_j\theta\rho_j$ by construction of $\theta'$ and because of the program transformation rules.
On the other hand, proof trees $T'_j$ proving $\elimD{\Prog} \chlc \cat{B'_j\theta\rho_j}{\Pi}$ can be obtained by inductive hypothesis as seen before.
\end{itemize}

\smallskip\noindent
[2. $\Rightarrow$ 1.] {\em (the transformation is sound).}
We assume that $T'$ is a a CHL($\cdom$) proof tree
witnessing $\elimD{\Prog} \chlc \cat{A'\rho}{\Pi}$
for some $\rho \in \Sol{\cdom}{\Omega}$ such that $\domset{\rho} = \varset{\Omega}$.
We want to to show the existence of a $\QCHL(\qdom,\cdom)$ proof tree $T$
witnessing $\Prog \qchldc \cqat{A}{d}{\Pi}$.
We reason  by complete induction on $\Vert T' \Vert$.
There are three possible cases according to the the syntactic form of the atom $A'$\!.
In each case we argue how to build the desired proof tree $T$.

\noindent --- $A'$ is a primitive atom $\kappa$.
In this case due to {\bf TQCA} and {\bf TPA} we can assume $A = \kappa$ and $\Omega = \{\qval{\imath(\tp)},\ \qbound{\imath(d),\imath(\tp),\imath(\tp)}\}$.
Note that $\domset{\rho} = \varset{\Omega} = \emptyset$ implies $\rho = \varepsilon$.
Now, from $\elimD{\Prog} \chlc \cat{\kappa\varepsilon}{\Pi}$ follows $\Pi \model{\cdom} \kappa$ due to the {\bf PA} inference, and therefore we can prove $\Prog \qchldc \cqat{\kappa}{d}{\Pi}$ with a proof tree $T$ containing only one {\bf QPA} node.

\noindent --- $A'$ is an equation $t == s$.
In this case due to {\bf TQCA} and {\bf TEA} we can assume $A = (t == s)$ and $\Omega = \{\qval{\imath(\tp)},\ \qbound{\imath(d),\imath(\tp),\imath(\tp)}\}$.
Note that $\domset{\rho} = \varset{\Omega} = \emptyset$ implies $\rho = \varepsilon$.
Now, from $\elimD{\Prog} \chlc \cat{(t == s)\varepsilon}{\Pi}$ follows $t \approx_\Pi s$ due to the {\bf EA} inference, and therefore we can prove $\Prog \qchldc \cqat{(t == s)}{d}{\Pi}$ with a proof tree $T$ containing only one {\bf QEA} node.

\noindent --- $A'$ is a defined atom $p'(\ntup{t'}{n},W)$ with $p' \in DP^{n+1}$.
In this case due to {\bf TQCA} and {\bf TDA} we can assume $A = p(\ntup{t'}{n})$ and $\Omega = \{\qval{W},\ \qbound{\imath(d),\imath(\tp),W}\}$.
On the other hand, $T'$ must be rooted by a {\bf DA} step ($\spadesuit$) using a clause $\transform{C} \in \elimD{\Prog}$ instantiated by a substitution $\theta'$. We can assume that ($\spadesuit$), $\transform{C}$ and the corresponding clause $C \in \Prog$ have the form already displayed in [1. $\Rightarrow$ 2.].

By construction of $\transform{C}$\!, we can assume $\transform{B_j} = (B'_j,\ w'_j)$.
Let $\theta = \theta'{\upharpoonright}\varset{C}$ and $\rho_j = \theta'{\upharpoonright}\varset{w'_j} ~ (1 \ge j \ge m)$.
Then, due to the premises $\cat{qV\!al(w'_j)\theta'}{\Pi}$ of ($\spadesuit$) and Lemma \ref{lema:expr} we can assume $e'_j \in \aqdom ~ (1 \leq j \leq m)$ such that $w'_j\rho_j = \imath(e'_j)$.

To finish the proof, we must prove $\Prog \qchldc \cqat{A}{d}{\Pi}$.
We claim that this can be done with a $\QCHL(\qdom,\cdom)$ proof tree $T$ whose root inference is a {\bf QDA} step of the form of ($\clubsuit$), as displayed in [1. $\Rightarrow$ 2.], using clause $C$ instantiated by $\theta$.
In the premises of this inference we choose $d_i = \tp ~ (1 \leq i \leq n)$ and $e_j = e'_j ~ (1 \leq j \leq m)$.
Next we check that these premises can be derived from $\Prog$ in $\QCHL(\qdom,\cdom)$ and that the side conditions are fulfilled:
\begin{itemize}
\item
$\Prog \qchldc \cqat{(t'_i == t_i\theta)}{d_i}{\Pi}$ for $i = 1 \ldots n$.
This amounts to $t'_i \approx_\Pi t_i\theta$ which follows from the first $n$ premises of ($\spadesuit$) given that $t'_i\rho = t'_i$ and $t_i\theta' = t_i\theta$.
\item
$\Prog \qchldc \cqat{B_j\theta}{e_j}{\Pi}$ for $j = 1 \ldots m$.
From $\transform{B_j} = (B_j',  w'_j)$ and due to rule {\bf TQCA}, we have $\transform{(\cqat{(B_j\theta)}{e_j}{\Pi})} = (\cat{B_j\theta}{\Pi}, \Omega_j)$ where $\Omega_j = \{\qval{w'_j},\ \qbound{\imath(e_j), \imath(\tp),$ $w'_j}\}$. From the premises of ($\spadesuit$) and the fact that $B'_j\theta' = B'_j\theta\rho_j$ we know that $\elimD{\Prog} \chlc \cat{B'_j\theta\rho_j}{\Pi}$ with a CHL($\cdom$) proof tree $T'_j$ such that $\Vert T'_j \Vert < \Vert T' \Vert$. Therefore $\Prog \qchldc \cqat{B_j\theta}{e_j}{\Pi}$ follows by inductive hypothesis provided that $\rho_j \in \Solc{\Omega_j}$. In fact, due to the form of $\Omega_j$, $\rho_j \in \Solc{\Omega_j}$ holds iff $w'_j\rho_j = \imath(e'_j)$ for some $e'_j$ such that $e_j \dleq e'_j$, which is the case because of the choice of $e_j$.
\item
$e_j \dgeq^? w_j$ for $j = 1 \ldots m$.
Trivial in the case that $w_j =\ ?$.
Otherwise they are equivalent to $w_j \dleq e'_j$ which follow from premises $\cat{\encode{w'_j \dgeq^? \imath(w_j)}\theta'}{\Pi}$ (i.e. $\cat{\encode{w'_j\rho_j \dgeq^? \imath(w_j)}}{\Pi}$) of ($\spadesuit$) and Lemma \ref{lema:expr}.
\item
$d \dleq d_i$ for $i = 1 \ldots n$.
Trivially hold due to the choice of $d_i = \tp$.
\item
$d \dleq \alpha \circ e_j$ for $j = 1 \ldots m$.
Note that $\rho \in \Solc{\Omega}$ implies the existence of $d' \in \aqdom$ such that $\imath(d') = W\rho$ and $d \dleq d'$.
On the other hand, $e_j = e'_j$ by choice.
It suffices to prove $d' \dleq \alpha \circ e'_j$ for $j = 1 \ldots m$. Premises of ($\spadesuit$) and Lemma \ref{lema:expr} imply that $\qbound{W'\theta',\imath(\alpha),w'_j\theta'}$ is true in $\cdom$. Moreover, $W'\theta' = W\rho = \imath(d')$ because of another premise of ($\spadesuit$) and $w'_j\theta' = \imath(e'_j)$ as explained above.
Therefore $\qbound{W'\theta',\imath(\alpha),w'_j\theta'}$ amounts to $\qbound{\imath(d'),\imath(\alpha),\imath(e'_j)}$ which guarantees $d' \dleq \alpha \circ e'_j ~ (1 \leq j \leq m)$. \mathproofbox
\end{itemize}
\end{proof*}

The goal transformation correctness is established by the next theorem, which will rely on the previous result:

\begin{thm}
\label{thm:QCLP2CLP:goals}
Let $G$ be a goal for a $\qclp{\qdom}{\cdom}$-program $\Prog$ such that $qV\!al$ and $qBound$ do not occur in $G$.
Let $\Prog' = \elimD{\Prog}$ and $G' = \elimD{G}$.
Assume a $\cdom$-subtitution $\sigma$, a mapping $\mu : \warset{G} \to \aqdomd{\qdom}$ and a satisfiable finite set of $\cdom$-constraints $\Pi$.
Then, the following two statements are equivalent:
\begin{enumerate}
\item
$\langle \sigma, \mu, \Pi \rangle \in \Sol{\Prog}{G}$.
\item
$\langle \theta, \Pi \rangle \in \Sol{\Prog'}{G'}$ for some $\theta$ that verifies the following requirements:
  \begin{enumerate}
  \item
  $\theta =_{\varset{G}} \sigma$,
  \item
  $\theta =_{\warset{G}} \mu\imath$ and
  \item
  $W\theta \in \mbox{ran}(\imath)$ for each $W \in \varset{G'} \setminus (\varset{G} \cup \warset{G})$.
  \end{enumerate}
\end{enumerate}
\end{thm}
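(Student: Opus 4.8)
The plan is to reduce the goal-level statement to the atom-level correspondence of Theorem~\ref{thm:QCLP2CLP:programs}, treating each annotated atom of $G$ independently and then gluing the per-atom witnesses into one $\cdom$-substitution. First I would recall the shapes involved: $G : (\qat{B_j}{W_j}, W_j \dgeq^? \beta_j)_{j = 1 \ldots m}$, and, writing $\transform{B_j} = (B'_j, w'_j)$, rule \textbf{TG} of Figure~\ref{fig:transformation} turns the $j$-th annotated atom into the block $\qval{W_j},\ \encode{W_j \dgeq^? \imath(\beta_j)},\ \qval{w'_j},\ \qbound{W_j,\imath(\tp),w'_j},\ B'_j$. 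Here $w'_j$ is a fresh CLP variable when $B_j$ is defined and the ground constant $\imath(\tp)$ otherwise; in the defined case the $w'_j$ are precisely the members of $\varset{G'} \setminus (\varset{G} \cup \warset{G})$. I would read the auxiliary atoms through Lemma~\ref{lema:expr}: $\qval{t}$ forces $t \in \mbox{ran}(\imath)$, and $\qbound{\imath(d),\imath(\tp),\imath(e)}$ holds iff $d \dleq e$. Setting $d_j = W_j\mu$, condition~(b) reads $W_j\theta = \imath(d_j)$, and the qc-atom transform \textbf{TQCA} gives $\transform{(\cqat{B_j\sigma}{d_j}{\Pi})} = (\cat{B'_j\sigma}{\Pi}, \Omega_j)$ with $\Omega_j = \{\qval{w'_j}, \qbound{\imath(d_j),\imath(\tp),w'_j}\}$ and $\varset{\Omega_j} = \{w'_j\}$ (reusing the same fresh variable, so that $\transform{(B_j\sigma)} = (B'_j\sigma, w'_j)$).

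For $1 \Rightarrow 2$ I would assume $\langle \sigma, \mu, \Pi \rangle \in \Sol{\Prog}{G}$, so for each $j$ one has $d_j \dgeq^? \beta_j$ and $\Prog \qchldc \cqat{B_j\sigma}{d_j}{\Pi}$. Applying Theorem~\ref{thm:QCLP2CLP:programs} to each such qc-atom supplies $\rho_j \in \Solc{\Omega_j}$ with $\domset{\rho_j} = \{w'_j\}$ and $\elimD{\Prog} \chlc \cat{B'_j\sigma\rho_j}{\Pi}$; writing $w'_j\rho_j = \imath(e'_j)$, membership in $\Solc{\Omega_j}$ forces $e'_j \in \aqdom$ and $d_j \dleq e'_j$. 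Since the domains of $\sigma$, of $\mu\imath$, and of the pairwise distinct $\rho_j$ are disjoint, I would let $\theta$ be their union, which is a well-defined $\cdom$-substitution meeting (a)--(c). Then I would check $\Prog' \chlc \cat{A'\theta}{\Pi}$ for every conjunct $A'$ of $G'$: the $\qval{W_j}\theta$ and $\qval{w'_j}\theta$ atoms hold by Lemma~\ref{lema:expr}(1); $\qbound{W_j,\imath(\tp),w'_j}\theta = \qbound{\imath(d_j),\imath(\tp),\imath(e'_j)}$ holds by Lemma~\ref{lema:expr}(2) because $d_j \dleq e'_j$; the threshold atom is vacuous if $\beta_j = {?}$ and otherwise amounts to $\beta_j \dleq d_j$, i.e. $d_j \dgeq^? \beta_j$; and $B'_j\theta = B'_j\sigma\rho_j$ is already derivable.

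For $2 \Rightarrow 1$ I would assume $\langle \theta, \Pi \rangle \in \Sol{\Prog'}{G'}$ with $\theta$ satisfying (a)--(c). For each $j$, the conjunct $\qval{w'_j}\theta$ and Lemma~\ref{lema:expr}(1) give $w'_j\theta = \imath(e'_j)$ for some $e'_j \in \aqdom$, and $\qbound{W_j,\imath(\tp),w'_j}\theta = \qbound{\imath(d_j),\imath(\tp),\imath(e'_j)}$ gives $d_j \dleq e'_j$ via Lemma~\ref{lema:expr}(2). Taking $\rho_j = \{w'_j \mapsto \imath(e'_j)\}$, these two facts are exactly $\rho_j \in \Solc{\Omega_j}$ with $\domset{\rho_j} = \varset{\Omega_j}$, while $B'_j\theta = B'_j\sigma\rho_j$ together with the goal solution yields $\elimD{\Prog} \chlc \cat{B'_j\sigma\rho_j}{\Pi}$. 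The converse direction of Theorem~\ref{thm:QCLP2CLP:programs} then gives $\Prog \qchldc \cqat{B_j\sigma}{d_j}{\Pi}$, and the threshold condition $d_j \dgeq^? \beta_j$ follows from the conjunct $\encode{W_j \dgeq^? \imath(\beta_j)}\theta$ as before; hence $\langle \sigma, \mu, \Pi \rangle \in \Sol{\Prog}{G}$.

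The genuinely delicate parts are bookkeeping rather than conceptual. I expect the main obstacle to be verifying that the piecewise $\theta$ is consistent and that the atom transform of $B_j\sigma$ reuses the same fresh variable $w'_j$ as that of $B_j$, so that $B'_j\theta$ and $B'_j\sigma\rho_j$ literally coincide; this is where the freshness of the $w'_j$ and the disjointness of $\varset{G}$, $\warset{G}$, and the new variables are essential. Conceptually, the single atom $\qbound{W_j,\imath(\tp),w'_j}$ carries the whole interaction: after the bindings $W_j \mapsto \imath(d_j)$ and $w'_j \mapsto \imath(e'_j)$ it is exactly the bound constraint of $\Omega_j$, encoding the slack $d_j \dleq e'_j$ that Theorem~\ref{thm:QCLP2CLP:programs} produces in one direction and consumes in the other.
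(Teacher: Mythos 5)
Your proposal is correct and follows essentially the same route as the paper's own proof: both directions reduce the goal-level claim to the per-atom correspondence of Theorem~\ref{thm:QCLP2CLP:programs}, glue the per-atom witnesses $\rho_j$ into $\theta$ (respectively split $\theta$ back into the $\rho_j$), and discharge the auxiliary $qV\!al$, $qBound$ and threshold atoms via Lemma~\ref{lema:expr}. Your explicit bookkeeping about the fresh variables $w'_j$ (distinguishing the defined-atom case from the equational/primitive case, where $w'_j = \imath(\tp)$) is in fact slightly more careful than the paper's remark that $\varset{G'} \setminus (\varset{G} \cup \warset{G}) = \varset{\Omega_1} \uplus \cdots \uplus \varset{\Omega_m}$.
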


\begin{proof*}
As explained in Subsection \ref{sec:cases:qclp} the syntax of goals in $\qclp{\qdom}{\cdom}$-programs is the same as that of goals for $\sqclp{\simrel}{\qdom}{\cdom}$-programs, which is described in Section \ref{sec:sqclp}.
Therefore $G$, and $G'$ due to rule {\bf TG}, must have the following form:
$$
\begin{array}{c@{\hspace{1mm}}c@{\hspace{1mm}}l}
G &:& (~\qat{B_j}{W_j},\ W_j \dgeq^? \!\beta_j~)_{j = 1 \ldots m} \\
G' &:& (~qV\!al(W_j),~  \encode{W_j\dgeq^? \imath(\beta_j)},~  qV\!al(w'_j),~ qBound(W_j,$ $\imath(\tp),w'_j),~  B_j'~)_{j=1 \dots m} \\
\end{array}
$$
with $\transform{B_j} = (B'_j,  w'_j) ~ (1 \leq j \leq m)$.
Note that, because of rule {\bf TQCA}, we have $\transform{(\cqat{B_j\sigma}{W_j\mu}{\Pi})} = (\cat{B_j'\sigma}{\Pi}, \Omega_j)$ with $\Omega_j = \{\qval{w'_j},\ \qbound{\imath(W_j\mu),\imath(\tp),$ $w'_j}\}$ for $j = 1 \ldots m$.
We now prove each implication.

\smallskip\noindent
[1. $\Rightarrow$ 2.]
Let  $\langle \sigma, \mu, \Pi \rangle \in \Sol{\Prog}{G}$.
This means, by Definition \ref{dfn:qclp-goalsol}, $W_j\mu \dgeq^? \!\beta_j$ and $\Prog \qchldc \cqat{B_j\sigma}{W_j\mu}{\Pi}$ for $j = 1 \ldots m$.
In these conditions, Theorem \ref{thm:QCLP2CLP:programs} guarantees $\Prog' \chlc \cat{B'_j\sigma\rho_j}{\Pi} ~ (1 \leq j \leq m)$ for some $\rho_j \in \Solc{\Omega_j}$ such that $\domset{\rho_j} = \varset{\Omega_j}$.
It is easy to see that $\varset{G'} \setminus (\varset{G} \cup \warset{G}) = \varset{\Omega_1} \uplus \cdots \uplus \varset{\Omega_m}$.
Therefore it is possible to define a substitution $\theta$ verifying $\theta =_{\varset{G}} \sigma$, $\theta =_{\warset{G}} \mu\imath$ and $\theta =_{\domset{\rho_j}} \rho_j ~ (1 \leq j \leq m)$. Trivially, $\theta$ satisfies conditions 2.(a) and 2.(b). It also satisfies condition 2.(c) because for any $j$ and any variable $X$ such that $X \in \varset{\Omega_j}$, we have a constraint $\qval{X} \in \Omega_j$ implying, due to Lemma \ref{lema:expr}, $X\rho_j \in \mbox{ran}(\imath)$ (because $\rho_j \in \Solc{\Omega_j}$).

In order to prove $\langle \theta, \Pi \rangle \in \Sol{\Prog'}{G'}$ in the sense of Definition \ref{dfn:clp-goalsol} we check the following items:
\begin{itemize}
\item
By construction, $\theta$ is a $\cdom$-substitution.
\item
By the theorem's assumptions, $\Pi$ is a satisfiable and finite set of $\cdom$-constraints.
\item
$\Prog' \chlc \cat{A\theta}{\Pi}$ for every atom $A$ in $G'$.
Because of the form of $G'$ we have to prove the following for any fixed $j$:
\begin{itemize}
\item
$\Prog' \chlc \cat{qV\!al(W_j)\theta}{\Pi}$.
By construction of $\theta$ and Lemma \ref{lema:expr}, this amounts to $\qval{\imath(W_j\mu)}$ being true in $\cdom$, which is trivial consequence of $W_j\mu \in \aqdom$.
\item
$\Prog' \chlc \cat{\encode{W_j \dgeq^? \!\imath(\beta_j)}\theta}{\Pi}$.
If $\beta_j =\ ?$ this becomes trivial. Otherwise, $W_j\theta = \imath(W_j\mu)$ by construction of $\theta$, and by Lemma \ref{lema:expr} it suffices to prove $\qbound{\imath(\beta_j),\imath(\tp),\imath(W_j\mu)}$ is true in $\cdom$. This follows from $W_j\mu \dgeq^? \!\beta_j$, that is ensured by $\langle \sigma, \mu, \Pi \rangle \in \Sol{\Prog}{G}$.
\item
$\Prog' \chlc \cat{qV\!al(w'_j)\theta}{\Pi}$.
By construction of $\theta$ and Lemma \ref{lema:expr}, this amounts to $\qval{w'_j\rho_j}$ being true in $\cdom$, that is guaranteed by $\rho_j \in \Solc{\Omega_j}$.
\item
$\Prog' \chlc \cat{qBound(W_j,\imath(\tp),w'_j)\theta}{\Pi}$.
By construction of $\theta$ and Lemma \ref{lema:expr}, this amounts to $\qbound{\imath(W_j\mu),\imath(\tp),w'_j\rho_j}$ being true in $\cdom$, that is also guaranteed by $\rho_j \in \Solc{\Omega_j}$.
\item
$\Prog' \chlc \cat{B'_j\theta}{\Pi}$.
Note that, by construction of $\theta$, $B'_j\theta = B'_j\sigma\rho_j$. On the other hand, $\rho_j$ has been chosen above to verify $\Prog' \chlc \cat{B'_j\sigma\rho_j}{\Pi}$.
\end{itemize}
\end{itemize}

\smallskip\noindent
[2. $\Rightarrow$ 1.]
Let $\langle \theta, \Pi \rangle \in \Sol{\Prog'}{G'}$ and assume that $\theta$ verifies 2.(a), 2.(b) and 2.(c).
In order to prove $\langle \sigma, \mu, \Pi \rangle \in \Sol{\Prog}{G}$ in the sense of Definition \ref{dfn:qclp-goalsol} we must prove the following items:
\begin{itemize}
\item
By the theorem's assumptions, $\sigma$ is a $\cdom$-substitution, $\mu : \warset{G} \to \aqdomd{\qdom}$ and $\Pi$ is a satisfiable finite set of $\cdom$-constraints.
\item
$W_j\mu \dgeq^? \!\beta_j$.
We reason for any fixed $j$.
If $\beta_j =\ ?$ this results trivial.
Otherwise, we have $\Prog' \chlc \cat{\encode{W_j \dgeq^? \imath(\beta_j)}\theta}{\Pi}$ which, by condition 2.(b) and Lemma \ref{lema:expr} amounts to $\qbound{\imath(\beta_j),\imath(\tp),\imath(W_j\mu)}$ is true $\cdom$, i.e. $W_j\mu \dgeq \beta_j$.
\item
$\Prog \qchldc \cqat{B_j\sigma}{W_j\mu}{\Pi}$ for $j = 1 \ldots m$.
We reason for any fixed $j$. Let $\rho_j$ be the restriction of $\theta$ to $\varset{\Omega_j}$.
Then, $\Prog' \chlc \cat{B'_j\sigma\rho_j}{\Pi}$ follows from $\langle \theta,\Pi \rangle \in \Sol{\Prog'}{G'}$ and $B'_j\theta = B'_j\sigma\rho_j$.
Therefore, $\Prog \qchldc \cqat{B_j\sigma}{W_j\mu}{\Pi}$ follows from Theorem 5.3 provided that $\rho_j \in \Solc{\Omega_j}$.
By Lemma \ref{lema:expr} and the form of $\Omega_j$, $\rho_j \in \Solc{\Omega_j}$ holds iff $\Prog' \chlc \cat{qV\!al(w'_j\rho_j)}{\Pi}$ and $\Prog' \chlc \cat{qBound(\imath(W_j\mu),\imath(\tp),w'_j\rho_j)}{\Pi}$, which is true because $\langle \theta,\Pi \rangle \in \Sol{\Prog'}{G'}$ and construction of $\rho_j$. \mathproofbox
\end{itemize}
\end{proof*}
\subsection{Solving SQCLP Goals}
\label{sec:implemen:solving}

In this subsection we show that the transformations from the two previous subsections can be used to define abstract goal solving systems for SQCLP and arguing about their correctness.
In the sequel we consider a given $\sqclp{\simrel}{\qdom}{\cdom}$-program $\Prog$\! and a goal $G$ for $\Prog$ whose atoms are all relevant for $\Prog$.
We also consider $\Prog' \!= \elimS{\Prog}$, $G' = \elimS{G}$, $\Prog'' \!= \elimD{\Prog'}$ and $G'' = \elimD{G'}$.
Due to the definition of both elim$_\simrel$ and elim$_\qdom$, we can assume:
$$
\begin{array}{c@{\hspace{1mm}}c@{\hspace{1mm}}l}
G &:& (~ \qat{A_i}{W_i},~ W_i \dgeq^? \!\beta_i ~)_{i = 1 \ldots m} \\
G' &:& (~ \qat{A^i_\sim}{W_i},~ W_i \dgeq^? \!\beta_i ~)_{i = 1 \ldots m} \\
G'' &:& (~ qV\!al(W_i),~ \encode{W_i \dgeq^? \imath(\beta_i)},~ qV\!al(w'_i),~ qBound(W_i, \imath(\tp), w'_i),~ A_i' ~)_{i=1 \dots m} \\
&& \mbox{ where } \transform{A_i} = (A'_i, w'_i).
\end{array}
$$

We start by presenting an auxiliary result.

\begin{lem}
\label{lem:soltoground}
Assume $\Prog$\!, $G$, $\Prog'$\!, $G'\!$, $\Prog''$ and $G''$\! as above.
Let $\langle \sigma', \Pi \rangle \in \Sol{\Prog''}{G''}$, $\nu \in \Solc{\Pi}$ and $\theta = \sigma'\nu$.
Then $\langle \theta, \Pi \rangle \in \Sol{\Prog''}{G''}$. Moreover, $W \theta \in \mbox{ran}(\imath)$ for every $W \in \varset{G''} \setminus \varset{G}$.\footnote{Note that $\warset{G} \subseteq \varset{G''} \setminus \varset{G}$.}
\end{lem}
\begin{proof}
Consider an arbitrary atom $A''$\! occurring in $G''$\!.
Because of $\langle \sigma', \Pi \rangle \in \Sol{\Prog''}{G''}$ we have $\Prog \chlc \cat{A''\sigma'}{\Pi}$.
On the other hand, because of $\nu \in \Solc{\Pi}$ we have $\emptyset \model{\cdom} \Pi\nu$ and therefore also $\Pi \model{\cdom} \Pi\nu$.
This and Definition 3.1(4) of \cite{RR10TR} ensure $\cat{A''\sigma'}{\Pi} \entail{\cdom} \cat{A''\sigma'\nu}{\Pi}$, i.e. $\cat{A''\sigma'}{\Pi} \entail{\cdom} \cat{A''\theta}{\Pi}$.
This fact, $\Prog'' \chlc \cat{A''\sigma'}{\Pi}$ and the Entailment Property for Programs in $\clp{\cdom}$ imply $\Prog'' \chlc \cat{A''\theta}{\Pi}$.
Therefore, $\langle \theta,\Pi \rangle \in \Sol{\Prog''}{G''}$.

Consider now any $W \in \varset{G''} \setminus \varset{G}$.
By construction of $G''$\!, one of the atoms occurring in $G''$ is $qV\!al(W)$.
Then, due to $\langle \sigma'\Pi \rangle \in \Sol{\Prog''}{G''}$ we have $\Prog'' \chlc \cat{qV\!al(W\sigma')}{\Pi}$.
Because of Lemma \ref{lema:dec}(1) this implies $\Pi \model{\cdom} \qval{W\sigma'}$, i.e. $\Solc{\Pi} \subseteq \Solc{\qval{W\sigma'}}$.
Since $\nu \in \Solc{\Pi}$ we get $\nu \in \Solc{\qval{W\sigma'}}$, i.e. $W\sigma'\nu \in \mbox{ran}(\imath)$.
Since $W\sigma'\nu = W\theta$, we are done.
\end{proof}

Next, we explain how to define an abstract goal solving system for SQCLP from a given abstract goal solving system for CLP.

\begin{defn}
\label{dfn:SQCLPGSS}
Let CLP-AGSS be an abstract goal solving system for $\clp{\cdom}$ (in the sense of Definition \ref{dfn:clp-goalsolsys}).
Then we define {\em SQCLP-AGSS} as an abstract goal solving system for $\sqclp{\simrel}{\qdom}{\cdom}$ that works as follows:
\begin{enumerate}
\item
Given a goal $G$ for the $\sqclp{\simrel}{\qdom}{\cdom}$-program $\Prog$, consider $\Prog'$\!, $G'$\!, $\Prog''$\! and $G''$ as explained at the beginning of the subsection.
\item
For each solution $\langle \sigma',\Pi \rangle$ computed by CLP-AGSS for $G''$\!, $\Prog''$\! and for any $\nu \in \Solc{\Pi}$, SQCLP-AGSS computes $\langle \sigma, \mu, \Pi \rangle$ where $\theta = \sigma'\nu$, $\sigma = \theta{\upharpoonright}\varset{G}$ and $\mu = \theta\imath^{-1}{\upharpoonright}\warset{G}$.
Note that $\mu$ is well-defined thanks to Lemma \ref{lem:soltoground}. \mathproofbox
\end{enumerate}
\end{defn}

The next theorem ensures that SQCLP-AGSS is correct provided that CLP-AGSS is also correct.
The proof relies on the semantic results of the two previous subsections.

\begin{thm}
Assume that CLP-AGSS is correct (in the sense of Definition \ref{dfn:clp-goalsolsys}).
Let SQCLP-AGSS be as in the previous definition.
Then SQCLP-AGSS is correct in the sense of Definition \ref{dfn:goalsolsys}.
\end{thm}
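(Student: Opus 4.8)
The plan is to prove soundness and weak completeness of SQCLP-AGSS separately, in each case chaining together the two goal-transformation theorems (Theorems \ref{thm:SQCLP2QCLP:goals} and \ref{thm:QCLP2CLP:goals}) with the auxiliary Lemma \ref{lem:soltoground}. Throughout I will exploit two bookkeeping facts: since $\elimS{}$ only rewrites $==$ into $\sim$ inside atoms, one has $\varset{G} = \varset{G'}$ and $\warset{G} = \warset{G'}$; and since $\elimD{}$ turns all qualification variables into ordinary CLP variables, one has $\varset{G} \cup \warset{G} \subseteq \varset{G''}$, the only further variables of $G''$ being the fresh $w'_i$.

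For soundness, I would take any answer $\langle \sigma, \mu, \Pi \rangle$ computed by SQCLP-AGSS. By Definition \ref{dfn:SQCLPGSS} it arises from a CLP-AGSS answer $\langle \sigma', \Pi \rangle$ and a valuation $\nu \in \Solc{\Pi}$, with $\theta = \sigma'\nu$, $\sigma = \theta{\upharpoonright}\varset{G}$ and $\mu = \theta\imath^{-1}{\upharpoonright}\warset{G}$. Soundness of CLP-AGSS gives $\langle \sigma', \Pi \rangle \in \Sol{\Prog''}{G''}$, whence Lemma \ref{lem:soltoground} yields $\langle \theta, \Pi \rangle \in \Sol{\Prog''}{G''}$ together with $W\theta \in \mbox{ran}(\imath)$ for every $W \in \varset{G''} \setminus \varset{G}$. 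I then check that $\theta$, $\sigma$, $\mu$ satisfy conditions (a)--(c) of Theorem \ref{thm:QCLP2CLP:goals}: (a) and (b) are immediate from the definitions of $\sigma$ and $\mu$ and from $W\theta \in \mbox{ran}(\imath)$ on $\warset{G}$, and (c) follows from the range property of Lemma \ref{lem:soltoground} since $\varset{G''} \setminus (\varset{G'} \cup \warset{G'}) \subseteq \varset{G''} \setminus \varset{G}$. Theorem \ref{thm:QCLP2CLP:goals} then gives $\langle \sigma, \mu, \Pi \rangle \in \Sol{\Prog'}{G'}$, and Theorem \ref{thm:SQCLP2QCLP:goals} upgrades this to $\langle \sigma, \mu, \Pi \rangle \in \Sol{\Prog}{G}$, establishing soundness.

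For weak completeness, I would start from a ground solution $\langle \eta, \rho, \emptyset \rangle \in \GSol{\Prog}{G}$. As $G$ and $G'$ share the same data and qualification variables, Theorem \ref{thm:SQCLP2QCLP:goals} gives $\GSol{\Prog}{G} = \GSol{\Prog'}{G'}$, so the triple is also a ground solution for $G'$ w.r.t. $\Prog'$. Applying the $(1 \Rightarrow 2)$ direction of Theorem \ref{thm:QCLP2CLP:goals} produces $\theta_0$ with $\langle \theta_0, \emptyset \rangle \in \Sol{\Prog''}{G''}$, $\theta_0 =_{\varset{G}} \eta$, $\theta_0 =_{\warset{G}} \rho\imath$, and $W\theta_0 \in \mbox{ran}(\imath)$ on the fresh variables. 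Because $\eta$ is ground, $\rho$ maps into $\aqdomd{\qdom}$ and $\imath$ yields ground encodings, $\theta_0$ is ground on all of $\varset{G''}$, so in fact $\langle \theta_0, \emptyset \rangle \in \GSol{\Prog''}{G''}$. Weak completeness of CLP-AGSS then supplies a computed answer $\langle \sigma', \Pi \rangle$ and some $\nu \in \Solc{\Pi}$ with $\theta_0 =_{\varset{G''}} \sigma'\nu$. Feeding $\langle \sigma', \Pi \rangle$ and this same $\nu$ into SQCLP-AGSS yields a computed answer $\langle \sigma, \mu, \Pi \rangle$ with $\theta = \sigma'\nu$, and I verify it subsumes $\langle \eta, \rho, \emptyset \rangle$ using $\nu$ as witness: on $\varset{G}$ one gets $\eta =_{\varset{G}} \theta_0 =_{\varset{G}} \theta =_{\varset{G}} \sigma\nu$, while on each $W_i \in \warset{G}$ one gets $W_i\mu = W_i\theta\imath^{-1} = (W_i\rho\imath)\imath^{-1} = W_i\rho$, so in particular $W_i\rho \dleq W_i\mu$.

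The routine but delicate part is the bookkeeping of variable sets and of the encoding $\imath$: one must confirm $\varset{G} = \varset{G'}$, $\warset{G} = \warset{G'}$, that $\imath^{-1}$ is only ever applied to terms known to lie in $\mbox{ran}(\imath)$ (guaranteed by Lemma \ref{lem:soltoground} and by clause (c) of Theorem \ref{thm:QCLP2CLP:goals}), and that the single valuation $\nu$ serves simultaneously as the CLP subsumption witness and as the parameter fixing the SQCLP computed answer. The only genuinely substantive step is showing, in the completeness direction, that the CLP solution obtained from a ground SQCLP solution is itself ground; this hinges on the fresh variables receiving ground encodings through condition (c) together with $\imath$ ranging over ground $\cdom$-terms.
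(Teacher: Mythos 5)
Your proposal is correct and follows essentially the same route as the paper's own proof: soundness by combining Definition \ref{dfn:SQCLPGSS}, Lemma \ref{lem:soltoground}, and the two goal-transformation theorems (\ref{thm:QCLP2CLP:goals}, then \ref{thm:SQCLP2QCLP:goals}), and weak completeness by pushing the ground solution through $\elimS{}$ and $\elimD{}$, invoking weak completeness of CLP-AGSS, and verifying subsumption with the same valuation $\nu$ (including the implicit use of $\nu\nu = \nu$ and injectivity of $\imath$). No substantive differences from the paper's argument.
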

\begin{proof*}
We separately prove that SQCLP-AGSS is {\em sound} and {\em weakly complete}.

\smallskip\noindent
--- {\em SQCLP-AGSS is sound.}
Let $\langle \sigma, \mu, \Pi \rangle$ be an answer computed by SQCLP-AGSS for $G,\Prog$. We must prove that
$\langle \sigma, \mu, \Pi \rangle \in \Sol{\Prog}{G}$.
By Definition \ref{dfn:SQCLPGSS} we can assume $\langle \sigma', \Pi \rangle \in \Sol{\Prog''}{G''}$ and $\nu \in \Solc{\Pi}$ such that $\sigma = \theta{\upharpoonright}\varset{G}$ and $\mu = \theta\imath^{-1}{\upharpoonright}\warset{G}$ with $\theta = \sigma'\nu$.
Because of Lemma \ref{lem:soltoground} we have $\langle \theta, \Pi \rangle \in \Sol{\Prog''}{G''}$ and $W\theta \in \mbox{ran}(\imath)$ for every $W \in \varset{G''} \setminus \varset{G}$.
Note that:
\begin{itemize}
\item
$\theta =_{\varset{G'}} \sigma$.
This follows from $\varset{G'} = \varset{G}$ and the construction of $\sigma$.
\item
$\theta =_{\warset{G'}} \mu\imath$.
This follows from $\warset{G'} = \warset{G}$ and $\theta =_{\warset{G}} \mu\imath$, that is obvious from the construction of $\mu$.
\item
$W\theta \in \mbox{ran}(\imath)$ for each $W \in \varset{G''} \setminus (\varset{G'} \cup \warset{G'})$.
This is a consequence of Lemma \ref{lem:soltoground} since $\varset{G''} \setminus (\varset{G'} \cup \warset{G'}) \subseteq \varset{G''} \setminus \varset{G'}$ and $\varset{G'} = \varset{G}$.
\end{itemize}
From the previous items and Theorem \ref{thm:QCLP2CLP:goals} we get $\langle \sigma, \mu, \Pi \rangle \in \Sol{\Prog'}{G'}$, which trivially implies $\langle \sigma, \mu, \Pi \rangle \in \Sol{\Prog}{G}$ because of Theorem \ref{thm:SQCLP2QCLP:goals}.

\smallskip\noindent
--- {\em SQCLP-AGSS is weakly complete.}
Let $\langle \eta, \rho, \emptyset \rangle \in \GSol{\Prog}{G}$ be  a ground solution for $G$ w.r.t. $\Prog$.
We must prove that it is subsumed---in the sense of Definition \ref{dfn:goalsol}(3)---by some answer $\langle \sigma, \mu, \Pi \rangle$ computed by SQCLP-AGSS for $G,\Prog$.

By Theorem \ref{thm:SQCLP2QCLP:goals} we have that $\langle \eta, \rho, \emptyset \rangle$ is also a ground solution for $G'$ w.r.t. $\Prog'$.
In addition, by Theorem \ref{thm:QCLP2CLP:goals} $\langle \eta', \emptyset \rangle \in \Sol{\Prog''}{G''}$ for some $\eta'$ such that
\begin{itemize}
  \item (1) $\eta' =_{\varset{G'}} \eta$,
  \item (2) $\eta' =_{\warset{G'}} \rho\imath$ and hence $\eta'(\imath^{-1}) =_{\warset{G'}} \rho$, and
  \item $W\eta' \in \mbox{ran}(\imath)$ for each $W \in \varset{G''} \setminus (\varset{G'} \cup \warset{G'})$ (i.e. $w'_i\eta' \in \mbox{ran}(\imath)$ for each $i = 1 \ldots m$ such that $w'_i$ is a variable).
\end{itemize}
By construction of $\eta'$, it is clear that $\langle \eta', \emptyset \rangle$ is ground.
Now, by the weak completeness of CLP-AGSS, there is some computed answer $\langle \sigma',\Pi \rangle$ subsuming $\langle \eta',\emptyset \rangle$, therefore satisfying
\begin{itemize}
  \item (3) there is some $\nu \in \Solc{\Pi}$, and
  \item (4) $\eta' =_{\varset{G''}} \sigma'\nu$.
\end{itemize}

Because of Definition \ref{dfn:SQCLPGSS} one can build a SQCLP-AGSS computed answer $\langle \sigma, \mu, \Pi \rangle$ as follows:
\begin{itemize}
  \item (5) $\sigma = \sigma'\nu{\upharpoonright}\varset{G}$
  \item (6) $\mu = \sigma'\nu\imath^{-1}{\upharpoonright}\warset{G}$
\end{itemize}

We now check that $\langle \sigma, \mu, \Pi \rangle$ subsumes $\langle \eta, \rho, \emptyset \rangle$:
\begin{itemize}
  \item $W_i\rho \dleq W_i\mu$ and even $W_i\rho = W_i\mu$ because:
  $$W_i\rho =_{(2)} W_i\eta'(\imath^{-1}) =_{(4)} W_i\sigma'\nu(\imath^{-1}) =_{(6)} W_i\mu \enspace .$$
  \item $\nu \in \Solc{\Pi}$ by (3) and, moreover, for any $X \in \varset{G}$:
  $$ X\eta =_{(1)} X\eta' =_{(4)} X\sigma'\nu =_{(\dagger)} X\sigma'\nu\nu =_{(5)} X\sigma\nu$$
  therefore $\eta =_{\varset{G}} \sigma\nu$. 
  
  The step ($\dagger$) is justified because $\nu \in \mbox{Val}_\cdom$ implies $\nu = \nu\nu$. \mathproofbox
\end{itemize}
\end{proof*}

\section{A Practical Implementation}
\label{sec:practical}

This section is devoted to the more practical aspects of the SQCLP programming scheme and it is developed in three subsections: Subsection \ref{sec:practical:SQCLP} explains what steps must be given when implementing a programming scheme like this and why the theoretic results presented in the previous sections---with special emphasis in those in Subsection \ref{sec:implemen:solving}---become useful for implementation.
Subsection \ref{sec:practical:prototype} introduces a prototype implementation and explains how to write programs and how to solve goals.
Finally, in Subsection \ref{sec:practical:efficiency} we study the unavoidable overload introduced in the system by qualifications and proximity relations when comparing the execution of programs without any explicit use of such resources.

\subsection{SQCLP over a CLP Prolog System}
\label{sec:practical:SQCLP}

Assume an available CLP Prolog System, a $\sqclp{\simrel}{\qdom}{\cdom}$-program $\Prog$ and a goal $G$ for $\Prog$.
Our purpose is to implement a goal solving system for SQCLP following Definition \ref{dfn:SQCLPGSS}.
We will examine each step in this schema, discussing the necessary implementation details for putting theory into practice.

The first step is to obtain the transformed programs $\Prog' \!= \elimS{\Prog}$ and $\Prog'' \!= \elimD{\Prog'}$;
and the transformed goals $G' = \elimS{G}$ and $G'' = \elimD{G'}$.
According to Definition \ref{def:sqclptransform}(3), $\Prog' \!= \elimS{\Prog}$ is of the form $EQ_\simrel \cup \hat{\Prog}_\simrel$, where $EQ_\simrel$ is obtained following Definition \ref{def:EQ} and $\hat{\Prog}_\simrel$ is obtained following Definition \ref{def:sqclptransform}(3,2).
When implementing  $EQ_\simrel$ a first difficulty arises, namely the implementation of $\sim\ \in DP^2$\!, which apparently requires one clause of the form:
$$u \sim u' ~\qgets{\tp}~ \qat{\mbox{pay}_{\lambda}}{?}$$
for each pair $u,u' \in B_\cdom$ such that $\simrel(u,u') = \lambda \neq \bt$, and one clause of the form:
$$
c(\ntup{X}{n}) \sim c'(\ntup{Y}{n}) ~\qgets{\tp}~ \qat{\mbox{pay}_{\lambda}}{?},\
(\qat{(X_i \sim Y_i)}{?})_{i = 1 \ldots n}
$$
for each pair $c, c' \in DC^{n}$ such that $\simrel(c,c') = \lambda \neq \bt$.
While this should obviously require an infinite number of clauses (because $DC^n$ is infinite and $\simrel(c,c) = \tp \neq \bt$ for all $c\in DC^{n}$; and also $B_\cdom$ is infinite---in general---and $\simrel(u,u) = \tp \neq \bt$ for every $u \in B_\cdom$), in practice, it is enough to limit the number of clauses to the finite number of different basic values $u \in B_\cdom$ and constructors $c \in DC^n$ that can be found either in $\Prog$\!, $G$ or $\simrel$.

A similar difficulty arises when codifying the clauses for predicates $\mbox{pay}_\lambda \in DP^0$\!, which according to Definition \ref{def:EQ} there should be a clause of the form:
$$\mbox{pay}_\lambda ~\qgets{\lambda}~$$
in $EQ_\simrel$ for each $\lambda \in \aqdomd{\qdom}$.
In this case, the solution is also similar because it suffices to generate enough $\mbox{pay}_\lambda$ clauses for the finite $\lambda \in \aqdomd{\qdom}$ that can be found occurring either in the clauses of $\hat{\Prog}_\simrel$ or in the clauses implementing the predicate $\sim\ \in DP^2$\!.

The construction of $\hat{\Prog}_\simrel$, following Definition \ref{def:sqclptransform}, presents no particular difficulties.
For each clause $C : (p(\ntup{t}{n}) \qgets{\alpha} \tup{B}) \in \Prog$ we will generate a finite set $\hat{\mathcal{C}}_\simrel$ of clauses, because the number of symbols $p'$ such that $\simrel(p,p') = \lambda \neq \bt$ will be also finite in practice.
Finally, the construction of $G'$ is merely the straightforward replacement of all the occurrences of `==' in $G$ by `$\sim$'.

The transformation elim$_\qdom$ from $\qclp{\qdom}{\cdom}$ into $\clp{\cdom}$, is defined in Definition \ref{def:qclptransform}.
$\Prog'' \!= \elimD{\Prog'}$ is obtained by incorporating the two clauses of the program $E_\qdom$ to the result of applying the transformation rules in Figure \ref{fig:transformation} to the $\qclp{\qdom}{\cdom}$-program $\Prog'$\!.
Applying the transformation rules is straightforward, but the codification of constraints $\qval{X}$ and $\qbound{X,Y,Z}$ in $E_\qdom$ requires some clarification.
In our implementation we have considered the constraint domain  $\rdom$, as well as any qualification domain that can be built from $\B$, $\U$ and $\W$ by means of the strict cartesian product operation ${\otimes}$ including, in particular, $\U{\otimes}\W$.
These qualification domains are existentially expressible in $\rdom$, therefore the constraints can be implemented by defined predicates as explained in Section \ref{sec:implemen:QCLP2CLP}. In particular in our prototype implementation these predicates are:
\begin{verbatim}
% qval( +QDom, ?W ):
qval(b, 1).
qval(u, W) :- {W > 0, W =< 1}.
qval(w, W) :- {W >= 0}.
qval((D1,D2), (W1,W2)) :- qval(D1, W1), qval(D2, W2).

% qbound( +QDom, ?X, ?Y, ?Z ):
qbound(b, 1, 1, 1).
qbound(u, X, Y, Z) :- {X =< Y * Z}.
qbound(w, X, Y, Z) :- {X >= Y + Z}.
qbound((D1,D2), (X1,X2), (Y1,Y2), (Z1,Z2)) :- qbound(D1, X1, Y1, Z1),
    qbound(D2, X2, Y2, Z2).
\end{verbatim}
Instead of using different $qV\!al $ and $qBound $ predicates for each allowable $\qdom$, our prototype implementation just uses two predicates $qV\!al $ and $qBound$ with an extra first argument, used to encode an identifier of some specific allowable $\qdom$. 
This parameter can take either the value {\tt b} (for $\B$), {\tt u} (for $\U$), {\tt w} (for $\W$) or a pair {\tt (D$_1$,D$_2$)}
(for $\qdom_1 \otimes \qdom_2$), where each {\tt D}$_i$ can be either {\tt b}, {\tt u}, {\tt w} or another pair representing a product.
For instance {\tt ((u,w),w)} represents the qualification domain $(\U{\otimes}\W){\otimes}\W$.
The compiler ensures that this argument takes the correct value for each transformed program and goal depending on the specific instance of the SQCLP scheme the program is written for.

After obtaining $\Prog''$\! and $G''$\!, the CLP Prolog System is used to solve $G''$ \!w.r.t. $\Prog''$\!.
This yields computed answers of the form $\langle \sigma',\Pi \rangle$.
Now, instead of obtaining particular substitutions $\theta = \sigma'\nu$, $\sigma = \theta{\upharpoonright}\varset{G}$ and $\mu = \theta\imath^{-1}{\upharpoonright}\warset{G}$ for any $\nu \in \Solc{\Pi}$ as explained in Definition \ref{dfn:SQCLPGSS}(2), our prototype implementation limits itself to display $\langle \sigma',\Pi \rangle$ as the computed answer in SQCLP.
The reason behind this behavior is that, in general (and particularly in $\rdom$), it is impossible to enumerate the possible solutions $\nu \in \Solc{\Pi}$.
Thus, it results impossible to implement a technique for obtaining all the possible triples $\langle \sigma, \mu, \Pi \rangle$.
Note, however, that for a user it will not be difficult to distinguish, in the shown computed answers, what variable bindings correspond to the substitution $\sigma$ of the triple and what to the substitution $\mu$, even when the qualification variables are not bound but constrained, which is a common behavior in the context of CLP programming.

However, for the SQCLP-AGSS of Definition \ref{dfn:SQCLPGSS}, it results mandatory to define the computed answers in terms of $\nu \in \Solc{\Pi}$, because our SQCLP-semantics relies on proving instances of $G$ for some specific ground values of the variables in $\warset{G}$.

\subsection{{\tt (S)QCLP}: A Prototype System for SQCLP Programming}
\label{sec:practical:prototype}

The prototype implementation object of this subsection is publicly available, and can be found at:
\begin{center}
{\tt http://gpd.sip.ucm.es/cromdia/qclp}
\end{center}

The system currently requires the user to have installed either {\em SICStus Prolog} or {\em SWI-Prolog}, and it has been tested to work under Windows, Linux and MacOSX platforms.
The latest version available at the time of writing this paper is {\tt 0.6}.
If a latter version is available some things might have changed but in any case the main aspects of the system should remain the same. Please consult the {\em changelog} provided within the system itself for specific changes between versions.

SQCLP is a very general programming scheme and, as such, it supports different proximity relations, different qualification domains and different constraint domains when building specific instances of the scheme for any specific purpose.
As it would result impossible to provide an implementation for every admissible triple (or instance of the scheme), it becomes mandatory to decide in advance what specific instances will be available for writing programs in {\tt (S)QCLP}.
In essence:
\begin{enumerate}
\item
In its current state, the only available constraint domain is $\rdom$.
Thus, under both {\em SICStus Prolog} and {\em SWI-Prolog} the library {\tt clpr} will provide all the available primitives in {\tt (S)QCLP} programs.
\item
The available qualification domains are: `{\tt b}' for the domain $\B$; `{\tt u}' for the domain $\U$; `{\tt w}' for the domain $\W$; and any strict cartesian product of those, as e.g. `{\tt (u,w)}' for the product domain $\U{\otimes}\W$.
\item With respect to proximity relations, the user will have to provide, in addition to the two symbols and their proximity value, their {\em kind} (either predicate or constructor) and their {\em arity}. Both kind and arity must be the same for each pair of symbols having a proximity value different of $\bt$.
\end{enumerate}
Note, however, that when no specific proximity relation $\simrel$ is provided for a given program, $\sid$ is then assumed.
Under this circumstances, an obvious technical optimization consists on transforming the original program only with elim$_\qdom$, thus reducing the overload introduced in this case by elim$_\simrel$.
The reason behind this optimization is that for any given $\sqclp{\sid}{\qdom}{\cdom}$-program $\Prog$, it is also true that $\Prog$ is a $\qclp{\qdom}{\cdom}$-program, therefore $\elimD{\elimS{\Prog}}$ must semantically be equivalent to $\elimD{\Prog}$.
Nevertheless, $\elimD{\Prog}$ behaves more efficiently than $\elimD{\elimS{\Prog}}$ due to the reduced number of resulting clauses.
Thus, in order to improve the efficiency, the system will avoid the use of elim$_\simrel$ when no proximity relation is provided by the user.

The final available instances in the {\tt (S)QCLP} system are: $\sqclp{\simrel}{\tt b}{\tt clpr}$, $\sqclp{\simrel}{\tt u}{\tt clpr}$, $\sqclp{\simrel}{\tt w}{\tt clpr}$, $\sqclp{\simrel}{\tt (u,w)}{\tt clpr}$, \ldots{} and their counterparts in the QCLP scheme when $\simrel$ = $\sid$.

\subsubsection{Programming in {\tt (S)QCLP}}
\label{sec:practical:prototype:programming}

Programming in {\tt (S)QCLP} is straightforward if the user is accustomed to the Prolog programming style.
However, there are three syntactic differences with pure Prolog:
\begin{enumerate}
  \item Clauses implications are replaced by ``{\tt <-$d$-}'' where $d \in \aqdom$. If $d = \tp$, then the implication can become just ``{\tt <--}''. E.g. ``{\tt <-0.9-}'' is a valid implication in the domains $\U$ and $\W$; and ``{\tt <-(0.9,2)-}'' is a valid implication in the domain ${\U{\otimes}\W}$.
  \item Clauses in {\tt (S)QCLP} are not finished with a dot ({\tt .}). They are separated by layout, therefore all clauses in a {\tt (S)QCLP} program must start in the same column. Otherwise, the user will have to explicitly separate them by means of semicolons ({\tt ;}).
  \item After every body atom (even constraints) the user can provide a threshold condition using `{\tt \#}'. The notation `{\tt ?}' can also be used instead of some particular qualification value, but in this case the threshold condition `{\tt \#?}' can be omitted.
\end{enumerate}
Comments are as in Prolog:
\begin{verbatim}
% This is a line comment.
/* This is a multi-line comment, /* and they nest! */. */
\end{verbatim}
and the basic structure of a {\tt (S)QCLP} program is the following (line numbers are for reference):
\begin{center}
\small
\begin{tabular}{rp{11cm}}
\multicolumn{2}{@{\hspace{0mm}}l}{{\bf File:} {\em Peano.qclp}} \\[2mm]
\tiny 1 & \verb+% Directives...+ \\
\tiny 2 & \verb+# qdom w+ \\[2mm]
\tiny 3 & \verb+% Program clauses...+ \\
\tiny 4 & \verb+% num( ?Num )+ \\
\tiny 5 & \verb+num(z) <--+ \\
\tiny 6 & \verb+num(s(X)) <-1- num(X)+ \\
\end{tabular}
\end{center}
In the previous small program, lines {\tt 1}, {\tt 3} and {\tt 4} are line comments, line {\tt 2} is a program directive telling the compiler the specific qualification domain the program is written for, and lines {\tt 5} and {\tt 6} are program clauses defining the well-known Peano numbers.
As usual, comments can be written anywhere in the program as they will be completely ignored (remember that a line comment must necessarily end in a new line character, therefore the very last line of a file cannot contain a line comment),
and directives must be declared before any program clause.
There are three program directives in {\tt (S)QCLP}:
\begin{enumerate}
\item
The first one is ``{\tt \#qdom} {\em qdom}'' where {\em qdom} is any system available qualification domain, i.e. {\tt b}, {\tt u}, {\tt w}, {\tt (u,w)}\ldots{}
See line {\tt 2} in the previous program sample as an example.
This directive is mandatory because the user must tell the compiler for which particular qualification domain the program is written.
\item
The second one is ``{\tt \#prox} {\em file}'' where {\em file} is the name of a file (with extension {\tt .prox} containing a proximity relation.
If the name of the file starts with a capital letter, or it contains spaces or any special character, {\em file} will have to be quoted with single quotes.
For example, assume that with our program file we have another file called {\em Proximity.prox}.
Then, we would have to write ``{\tt \#prox `Proximity'}'' to link the program with such proximity relation.
This directive is optional, and if omitted, the system assumes that the program is of an instance of the QCLP scheme.
\item
The third one is ``{\tt \#optimized\_unif}''. This directive tells the compiler that the program is intended to be used with the optimized version of the unification algorithm, what improves the general efficiency of the goal solving process. However, as noted at the end of Section \ref{sec:sqclp}, this could have the effect of losing valid answers, although we conjecture that if the proximity relation is transitive and if the program clauses do not make use of attenuation factors other that $\tp$, this will not happen.
\end{enumerate}

Proximity relations are defined in files of extension {\tt .prox} with the following form:
\begin{center}
\small
\begin{tabular}{rp{11cm}}
\multicolumn{2}{@{\hspace{0mm}}l}{{\bf File:} {\em Work.prox}} \\[2mm]
\tiny 1 & \verb+% Predicates: pprox( S1, S2, Arity, Value ).+ \\
\tiny 2 & \verb+pprox(wrote, authored, 2, (0.9,0)).+ \\[2mm]
\tiny 3 & \verb+% Constructors: cprox( S1, S2, Arity, Value ).+ \\
\tiny 4 & \verb+cprox(king_lear, king_liar, 0, (0.8,2)).+ \\
\end{tabular}
\end{center}
where the file can contain {\tt pprox/4} Prolog facts, for defining proximity between predicate symbols of any arity;
or {\tt cprox/4} Prolog facts, for defining proximity between constructor symbols of any arity.
The arguments of both {\tt pprox/4} and {\tt cprox/4} are: the two symbols, their arity and its proximity value.
Note that, although it is not made explicit the qualification domain this proximity relation is written for, all values in it must be of the same specific qualification domain, and this qualification domain must be the same declared in every program using the proximity relation.
Otherwise, the solving of equations may produce unexpected results or even fail.

Reflexive and symmetric closure is inferred by the system, therefore, there is no need for writing reflexive proximity facts, nor the symmetric variants of proximity facts already provided.
You can notice this in the previous sample file in which neither reflexive proximity facts, nor the symmetric proximity facts to those at lines {\tt 2} and {\tt 4} are provided.
In the case of being explicitly provided, additional (repeated) solutions might be computed for the same given goal, although soundness and weak completeness of the system should still be preserved.
Transitivity is neither checked nor inferred so the user will be responsible for ensuring it if desired.

As the reader would have already guessed, the file  {\tt Work.prox} implements the proximity relation $\simrel_r$ of Example \ref{exmp:pr} in {\tt (S)QCLP}. Finally, the program $\Prog_r$ of Example \ref{exmp:pr} can be represented in {\tt (S)QCLP} as follows:
\begin{center}
\small
\begin{tabular}{rp{11cm}}
\multicolumn{2}{@{\hspace{0mm}}l}{{\bf File:} {\em Work.qclp}} \\[2mm]
\tiny 1 & \verb+# qdom (u,w)+ \\
\tiny 2 & \verb+# prox 'Work'+ \\[2mm]
\tiny 3 & \verb+% famous( ?Author )+ \\
\tiny 4 & \verb+famous(shakespeare) <-(0.9,1)-+ \\[2mm]
\tiny 5 & \verb+% wrote( ?Author, ?Book )+ \\
\tiny 6 & \verb+wrote(shakespeare, king_lear) <-(1,1)-+ \\
\tiny 7 & \verb+wrote(shakespeare, hamlet) <-(1,1)-+ \\[2mm]
\tiny 8 & \verb+% good_work( ?Work )+ \\
\tiny 9 & \verb+good_work(X) <-(0.75,3)- famous(Y)#(0.5,100), authored(Y,X)+ \\
\end{tabular}
\end{center}
Note that, at line {\tt 1} the qualification domain $\U{\otimes}\W$ is declared, and at line {\tt 2} the proximity relation at {\tt Work.prox} is linked to the program.
In addition, observe that one threshold constraint is imposed for a body atom in the program clause at line {\tt 9}, effectively requiring to prove {\tt famous(Y)} for a qualification value of {\em at least} {\tt (0.5,100)} to be able to use this program clause.

Finally, we explain how constraints are written in {\tt (S)QCLP}.
As it has already been said, only $\rdom$ is available, thus both in {\em SICStus Prolog} and {\em SWI-Prolog} the library {\tt clpr} is the responsible for providing the available primitive predicates.
Given that constraints are primitive atoms of the form {\tt r($\ntup{{\tt t}}{n}$)} where {\tt r} $\in PP^n$ and {\tt t$_i$} are terms; primitive atoms share syntax with usual Prolog atoms.
At this point, and having that many of the primitive predicates are syntactically operators (hence not valid identifiers), the syntax for predicate symbols has been extended to include operators, therefore predicate symbols like $op_+ \in PP^3$, which codifies the operation {\tt +} in a 3-ary predicate, will let us to build constraints of the form {\tt +(A,B,C)}, that must be understood as in $A+B=C$ or $C=A+B$.
Similarly, predicate symbols like $cp_> \in PP^2$, which codifies the comparison operator {\tt >} in a binary predicate, will let us to build constraints of the form {\tt >(A,B)}, that must be understood as in $A > B$.
Any other primitive predicate such as {\em maximize} $\in PP^1$, will let us to build constraints like {\tt maximize(X)}.
Valid primitive predicate symbols include {\tt +}, {\tt -}, {\tt *}, {\tt /}, {\tt >}, {\tt >=}, {\tt =<}, {\tt <}, {\tt maximize}, {\tt minimize}, etc.

Threshold constraints can also be provided for primitive atoms in the body of clauses with the usual notation.
Note, however, that due the semantics of SQCLP, all primitive atoms can be trivially proved with $\tp$ if they ever succeeds---so threshold constraints become, in this case, of no use.

The syntax for constraints explained above follows the standard syntax for atoms.
Nonetheless, the system also allows to write these constraints in a more natural infix notation.
More precisely, {\tt +(A,B,C)} can be also written in the infix form {\tt A+B=C} or {\tt C=A+B}, and {\tt >(X,Y)} in the infix form {\tt X>Y}; and similarly for other $op$ and $cp$ constraints.
When using infix notation, threshold conditions can be set by (optionally) enclosing the primitive atom between parentheses, therefore becoming {\tt (A+B=C)\#$\tp$}, {\tt (C=A+B)\#$\tp$} or {\tt (X>Y)\#$\tp$} (or any other valid qualification value or `?').
Using parentheses is recommended to avoid understanding that the threshold condition is set only for the last term in the constraint, which would not be the case.
Note that even in infix notation, operators cannot be nested, that is, terms {\tt A}, {\tt B}, {\tt C}, {\tt X} and {\tt Y} cannot have operators as main symbols (neither in prefix nor in infix notation), so the infix notation is just a syntactic sugar of its corresponding prefix notation.

As a final example for constraints, one could write the predicate {\tt double/2} in {\tt (S)QCLP}, for computing the double of any given number, with just the clause \verb+double(N,D) <-- *(N,2,D)+, or \verb+double(N,D) <-- N*2=D+ for a clause with a more natural syntax.

\subsubsection{The interpreter for {\tt (S)QCLP}}

The interpreter for {\tt (S)QCLP} has been implemented on top of both {\em SICStus Prolog} and {\em SWI-Prolog}.
To load it, one must first load her desired (and supported) Prolog system and then load the main file of the interpreter---i.e. {\tt qclp.pl}---, that will be located in the main {\tt (S)QCLP} folder among other folders.
Once loaded, one will see the welcome message and will be ready to compile and load programs, and to execute goals.
\begin{verbatim}
WELCOME TO (S)QCLP 0.6
(S)QCLP is free software and comes with absolutely no warranty.
Support & Updates: http://gpd.sip.ucm.es/cromdia/qclp.

Type ':help.' for help.
yes
| ?-
\end{verbatim}

From the interpreter for {\tt (S)QCLP} one can, in addition to making use of any standard Prolog goals, use the specific {\tt (S)QCLP} commands required for both interacting with the {\tt (S)QCLP} system, and  for compiling/loading SQCLP programs.
All these commands take the form:
\begin{center}
\tt :command.
\end{center}
if they do not require arguments, or:
\begin{center}
\tt :command({\em Arg}$_1$, \ldots, {\em Arg}$_n$).
\end{center}
if they do; where each argument {\tt\em Arg}$_i$ must be a prolog atom unless stated otherwise.
The most useful commands are:
\begin{itemize}
\item {\tt :cd(}{\em Folder}{\tt ).} \\
Changes the working directory to {\em Folder}. {\em Folder} can be an absolute or relative path.
\item {\tt :compile(}{\em Program}{\tt ).} \\
Compiles the {\tt (S)QCLP} program `{\em Program}{\tt .qclp}' producing the equivalent  Prolog program in the file `{\em Program}{\tt .pl}'.
\item {\tt :load(}{\em Program}{\tt ).} \\
Loads the already compiled {\tt (S)QCLP} program `{\em Program}{\tt .qclp}' (note that the file `{\em Program}{\tt .pl}' must exist for the program to correctly load).
\item {\tt :run(}{\em Program}{\tt ).} \\
Compiles the {\tt (S)QCLP} program `{\em Program}{\tt .qclp}' and loads it afterwards.
This command is equivalent to executing: {\tt :compile(}{\em Program}{\tt ), :load(}{\em Program}{\tt ).}
\end{itemize}

For illustration purposes, we will assume that you have the files {\tt Work.prox} and {\tt Work.qclp} (both as seen before) in the folder {\tt $\sim$/examples}.
Under these circumstances, after loading your preferred Prolog system and the interpreter for {\tt (S)QCLP}, one would only have to change the working directory to that where the files are located:
\begin{Verbatim}[commandchars=\\\{\}]
| ?- :cd('\prox/examples').
\end{Verbatim}
and run the program:
\begin{verbatim}
| ?- :run('Work').
\end{verbatim}
If no errors are encountered, one should see the output:
\begin{verbatim}
| ?- :run('Work').
<Work> Compiling...
<Work> QDom: 'u,w'.
<Work> Prox: 'Work'.
<Work> Translating to QCLP...
<Work> Translating to CLP...
<Work> Generating code...
<Work> Done.
<Work> Loaded.
yes
\end{verbatim}
and now everything is ready to execute goals for the program loaded.

\subsubsection{Executing SQCLP-Goals}

Recall that goals have the form
$\qat{A_1}{W_1},\ \ldots,\ \qat{A_m}{W_m} \sep W_1 \!\dgeq^? \beta_1,\ \ldots,\ W_m \dgeq^? \!\beta_m$
which in actual {\tt (S)QCLP} syntax becomes:
\begin{verbatim}
| ?- A1#W1, ..., Am#Wm :: W1 >= B1, ..., Wm >= Bm.
\end{verbatim}
Note the following:
\begin{enumerate}
\item Goals must end in a dot ({\tt .}).
\item The symbol `$\sep$' is replaced by `\verb+::+'.
\item The symbol `${\dgeq}^?$' is replaced by `\verb+>=+' (and this is independent of the qualification domain in use, so that it may mean $\le$ in $\W$).
\item Conditions of the form $W \dgeq^?\ ?$ {\em must be omitted}, therefore $\qat{A_1}{W_1}, \qat{A_2}{W_2} \sep W_1 \dgeq^?\ ?, W_2 \dgeq^? \!\beta_2$ becomes ``\verb+A1#W1, A2#W2 :: W2 >= B2.+'',
and $\qat{A}{W}\sep W \dgeq^?\ ?$ becomes just ``\verb+A#W.+''.
\end{enumerate}

Assuming now that we have loaded the program {\tt Work.qclp} as explained before, we can execute the goal $\qat{good\_work(king\_liar)}{W}\sep W \dgeq^? (0.5,100)$:
\begin{verbatim}
| ?- good_work(king_liar)#W::W>=(0.5,10).
W = (0.6,5.0) ?
yes
\end{verbatim}

\subsubsection{Examples}

To finish this subsection, we are now showing some additional goal executions using the interpreter for {\tt (S)QCLP} and the programs displayed along the paper.

\paragraph{Peano.}
Consider the program {\tt Peano.qclp} as displayed at the beginning of Subsection \ref{sec:practical:prototype:programming}.
Qualifications in this program are intended as a cost measure for obtaining a given number in the Peano representation, assuming that each use of the clause at line {\tt 6} requires to pay {\em at least} {\tt 1}.
In essence, threshold conditions will impose an upper bound over the maximum number obtainable in goals containing the atom {\tt num(X)}.
Therefore if we ask for numbers {\em up to} a cost of {\tt 3} we get the following answers:
\begin{center}
\small
\begin{tabular}{rp{11cm}}
\tiny Goal & \verb+?- num(X)#W::W>=3.+ \\[2mm]
\tiny Sol$_1$ & \verb+W = 0.0, X = z ? ;+ \\
\tiny Sol$_2$ & \verb+W = 1.0, X = s(z) ? ;+ \\
\tiny Sol$_3$ & \verb+W = 2.0, X = s(s(z)) ? ;+ \\
\tiny Sol$_4$ & \verb+W = 3.0, X = s(s(s(z))) ? ;+ \\
\tiny & \verb+no+ \\
\end{tabular}
\end{center}

\paragraph{Work.}
Consider now the program {\tt Work.qclp} and the proximity relation {\tt Work.prox}, both as displayed in Subsection \ref{sec:practical:prototype:programming} above.
In this program, qualifications behave as the conjunction of the certainty degree of the user confidence about some particular atom, and a measure of the minimum cost to pay for proving such atom.
In these circumstances, we could ask---just for illustration purposes---for famous authors with a minimum certainty degree---for them being actually famous---of {\tt 0.5}, and with a proof cost of no more than {\tt 30} (think of an upper bound for possible searches in different databases).
Such a goal would have, in this very limited example, only the following solution:
\begin{center}
\small
\begin{tabular}{rp{11cm}}
\tiny Goal & \verb+?- famous(X)#W::W>=(0.5,30).+ \\[2mm]
\tiny Sol$_1$ & \verb+W = (0.9,1.0), X = shakespeare ? ;+ \\
\tiny & \verb+no+ \\
\end{tabular}
\end{center}
meaning that we can have a confidence of {\tt shakespeare} being famous of {\tt 0.9}, and that we can prove it with a cost of {\tt 1}.

Now, in a similar fashion we could try to obtain different works that can be considered as good works by using the last clause in the example.
Limiting the search to those works that can be considered good with a qualification value better or equal to {\tt (0.5,100)} produce the following result:
\begin{center}
\small
\begin{tabular}{rp{11cm}}
\tiny Goal & \verb+?- good_work(X)#W::W>=(0.5,100).+ \\[2mm]
\tiny Sol$_1$ & \verb+W = (0.675,4.0), X = king_lear ? ;+ \\
\tiny Sol$_2$ & \verb+W = (0.6,5.0), X = king_liar ? ;+ \\
\tiny & \verb+no+ \\
\end{tabular}
\end{center}
It is important to remark here that the qualification value obtained for a particular computed answer is not guaranteed to be the best possible one; rather, different computed answers may compute different qualification values which can be observed by the user.
This is easy to see if we try to solve a more particular goal:
\begin{center}
\small
\begin{tabular}{rp{11cm}}
\tiny Goal & \verb+?- good_work(king_liar)#W::W>=(0.675,4.0).+ \\[2mm]
\tiny Sol$_1$ & \verb+W = (0.675,4.0) ? ;+ \\
\tiny & \verb+no+ \\
\end{tabular}
\end{center}
That is, not only {\tt good\_work(king\_liar)} can be proved for for {\tt W = (0.6,5.0)} as shown in Sol$_2$ above, but also with {\tt W = (0.675,4.0)}, which results a better qualification value (i.e. greater certainty degree and lower proof cost).
\paragraph{Library.}
Finally, consider the program $\Prog_s$ and the proximity relation $\simrel_s$, both as displayed in Figure \ref{fig:library} of Section \ref{sec:sqclp}.
As it has been said when this example was introduced, the predicate {\em guessRdrLvl} takes advantage of attenuation factors to encode heuristic rules to compute reader levels on the basis of vocabulary level and other book features.
As an illustration of use, consider the following goal:
\begin{center}
\small
\begin{tabular}{rp{11cm}}
\tiny Goal & \verb+?- guessRdrLvl(book(2, 'Dune', 'F. P. Herbert', english, sciFi,+\\
& \verb+   medium, 345), Level)#W.+ \\[2mm]
\tiny Sol$_1$ & \verb+W = 0.8, Level = intermediate ? ;+ \\
& $\cdots$ \\
\tiny Sol$_6$ & \verb+W = 0.7, Level = upper ?+ \\
\tiny & \verb+yes+ \\
\end{tabular}
\end{center}
Here we ask for possible ways of classifying the second book in the library according to reader levels.
We obtain as valid solutions, among others, {\tt intermediate} with a certainty factor of {\tt 0.8}; and {\tt upper} with a certainty factor of {\tt 0.7}.
These valid solutions show that the predicate {\em guessRdrLvl} tries with different levels for any certain book based on the heuristic implemented by the qualified clauses.

To conclude, consider now the goal proposed in Section \ref{sec:sqclp} for this program.
For such goal we obtain:
\begin{center}
\small
\begin{tabular}{rp{11cm}}
\tiny Goal & \verb+?- search(german, essay, intermediate, ID)#W::W>=0.65.+\\[2mm]
\tiny Sol$_1$ & \verb+W = 0.8, ID = 4 ?+ \\
\tiny & \verb+yes+ \\
\end{tabular}
\end{center}
What tells us that the forth book in the library is written in German, it can be considered to be an essay, and it is targeted for an intermediate reader level.
All this with a certainty degree of {\em at least} {\tt 0.8}.

\subsection{Efficiency}
\label{sec:practical:efficiency}
The minimum---and unavoidable---overload introduced by qualifications and proximity relations in the transformed programs manifests itself in the case of {\tt (S)QCLP} programs which use the identity proximity relation and have $\tp$ as the attenuation factor of all their clauses.
In order to measure this overload we have made some experiments using some program samples, taken from the {\em SICStus Prolog Benchmark} that can be found in:
\begin{center}
\tt http://www.sics.se/isl/sicstuswww/site/performance.html
\end{center}
and we have compared the time it took to repeatedly execute a significant number of times each program in both {\tt (S)QCLP} and {\em SICStus Prolog} making use of a {\em slightly} modified (to ensure a correct behavior in both systems) version of the harness also provided in the same site.

From all the programs available in the aforementioned site, we selected the following four:
\begin{itemize}
\item {\em naivrev:} naive implementation of the predicate that reverses the contents of a list.
\item {\em deriv:} program for symbolic derivation.
\item {\em qsort:} implementation of the well-known sorting algorithm {\em Quicksort}.
\item {\em query:} obtaining the population density of different countries.
\end{itemize}
No other program could be used because they included impure features such as cuts which are not currently supported by our system. In order to adapt these Prolog programs to our setting 
the following modifications were required:
\begin{enumerate}
\item All the program clause are assumed to have $\tp$ as attenuation factor.
      After including these attenuation factors, we obtain as results QCLP programs.
      More specifically we obtain two QCLP programs for each initial Prolog program,
      one using  the qualification domain $\B$ (because this domain uses trivial constraints), and another using  the qualification domain $\U$ (which uses $\rdom$-constraints).
\item We define an empty proximity relation, allowing us to obtain two additional 
SQCLP-programs.
\item By means of the program directive ``{\tt \#optimized\_unif}'' defined in Subsection \ref{sec:practical:prototype:programming}, each SQCLP program can be also executed 
    in this optimized mode. Therefore each original Prolog Program produces six {\tt (S)QCLP}
    programs, denoted as Q(b), Q(u), PQ(b), PQ(u), SQ(b) and SQ(u) in Table \ref{table:results}.
\end{enumerate}

Additionally some  minor modifications to the program samples have been introduced for compatibility reasons, i.e. additions using the predicate {\tt is/2} were replaced, both in the Prolog version of the benchmark and in the multiple {\tt (S)QCLP} versions, by {\tt clpr} constraints. 
In any case, all the program samples used for this benchmarks in this subsection can be found in the folder {\tt benchmarks/} of the {\tt (S)QCLP} distribution.

Finally, we proceeded to solve the same goals for every version of the benchmark programs, both in {\em SICStus Prolog} and in {\tt (S)QCLP}.
The benchmark results can be found in Table \ref{table:results}. All the experiments were performed in a computer with a Intel(R) Core(TM)2 Duo CPU at 2.19GHz and with 3.5 GB RAM.

\begin{table}[ht]
\caption{Time overload factor with respect to Prolog}
\label{table:results}
\begin{minipage}{\textwidth}
\begin{tabular}{lrrrrrr}
\hline\hline
Program &
Q(b)\footnote{$\qclp{\B}{\rdom}$ version (i.e. the program does not have the {\tt \#prox} directive).} &
Q(u)\footnote{$\qclp{\U}{\rdom}$ version (i.e. the program does not have the {\tt \#prox} directive).} &
PQ(b)\footnote{$\sqclp{\sid}{\B}{\rdom}$ version.} &
PQ(u)\footnote{$\sqclp{\sid}{\U}{\rdom}$ version.} &
SQ(b)\footnote{$\sqclp{\sid}{\B}{\rdom}$ version with directive {\tt \#optimized\_unif}.} &
SQ(u)\footnote{$\sqclp{\sid}{\U}{\rdom}$ version with directive {\tt \#optimized\_unif}.} \\
\hline
naivrev & 1.80  & 10.71 & 4289.79 & 4415.11 & 56.22& 65.75\\
deriv   & 1.94  & 10.60 &  331.45 &  469.67 & 29.63 & 39.32 \\
qsort   & 1.05  & 1.11 & 135.59  & 136.98 & 2.51 & 2.83\\
query   & 1.02  & 1.12 & 7.17  & 7.13 & 3.80 & 3.88\\
\hline\hline
\end{tabular}
\vspace{-2\baselineskip}
\end{minipage}
\end{table}

The results in the table indicate the slowdown factor obtained for each version of each program.
For instance, the first column indicates that the time required for evaluating the goal corresponding to the sample program {\em naivrev} in $\qclp{\B}{\rdom}$ is about 1.80 times the required time for the evaluation of the same goal in Prolog. Next we discuss the results:
\begin{itemize}
\item {\em Influence of the qualification domain.}
      In general the difference between the slowdown factors obtained for the two considered
      qualification domains is not large. However, in the case of QCLP-programs {\em naivrev} and  {\em deriv}  the difference increases notably. 
      This is due to the different ratios of the $\B$-constraints w.r.t. the program and 
      $\U$-constraints w.r.t. the program. It must be noticed that the transformed programs
      are the same in both cases, but for the implementation of {\tt qval} and {\tt qbound}
      constraints, which is more complex for $\U$ as one can see in Section \ref{sec:practical:SQCLP}.
      In the case of {\em naivrev} and  {\em deriv} this makes a big difference because
      the number of computation steps directly required by the programs is much smaller
      than in the other cases. Thus the slowdown factor becomes noticeable for the qualification domain $\U$ in computations that requires a large number of steps.
      
\item {\em Influence of the proximity relation.}
The introduction of a proximity relation, even 
of empty, is very significative. This is due to the introduction of the predicate $\sim$,
which replaces Prolog unification. The situation even worsens when the computation
introduces large constructor terms, as in the case of {\em naivrev} which deals with 
Prolog lists. The efficient Prolog unification is replaced by an explicit term decomposition.

\item {\em Influence of the optimized unification.}
As explained at the end of Section \ref{sec:sqclp} this optimization can lead to the loss of solutions in general.
However, this is not the case for the chosen examples.
As seen in the table, the use of the program directive {\tt \#optimized\_unif} causes a clear increase in the efficiency of goal solving for these examples.

\end{itemize}

\section{Conclusions}
\label{sec:conclusions}


In our recent work \cite{RR10} we extended the classical CLP scheme to a new programming scheme SQCLP whose instances $\sqclp{\simrel}{\qdom}{\cdom}$ were parameterized by a proximity relation $\simrel$, a qualification domain $\qdom$ and a constraint domain $\cdom$.
This new scheme offered extra facilities for dealing with expert knowledge representation and flexible query answering.
In this paper we have contributed to the aforementioned scheme providing, in a more practical sense, both a semantically correct transformation technique, in two steps, for transforming SQCLP programs and goals intro equivalent CLP programs and goals;
and a prototype implementation on top CLP($\rdom$) systems like {\em SICStus Prolog} and {\em SWI-Prolog} of some particularly interesting instances of the scheme.

The two-step transformation technique presented in Section \ref{sec:implemen} has provided us with the needed theoretical results for effectively showing how proximity relations can be reduced to qualifications and clause annotations by means of the transformation elim$_\simrel$; and how qualifications and clause annotations can be reduced to classical CLP programming by means of the transformation elim$_\qdom$.
These two transformations altogether, ultimately enables the use of the classical mechanism of SLD resolution to obtain computed answers for SQCLP goals w.r.t SQCLP programs, via their equivalent CLP programs and goals and the computed answers obtained from them by any capable CLP goal solving procedure.

The prototype implementation presented in Section \ref{sec:practical} has finally allowed us to execute all the examples showed in this paper---and in previous ones---, and a series of benchmarks for measuring the overload actually introduced by proximity relations---or by similarity relations---and by clause annotations and qualifications.
While we are aware that the prototype implementation presented in this paper has to be considered a research application (and as such, we have to admit that it cannot be used for industrial applications), we think that it can contribute to the field as a quite complete implementation of an extension of the CLP($\rdom$) scheme with proximity relations and qualifications.
Some related implementation techniques and systems have been cited in the introduction.
However, as far as we know, no other implementation in this field has ever provided support for proximity (and similarity) relations, qualifications via clause annotations and CLP($\rdom$) style programming.
Moreover, our results in Section \ref{sec:implemen} on the semantic correctness of our implementation technique are in our opinion another contribution of this paper which has no counterpart in related approaches.


In the future, and taking advantage of the prototype system we have already developed, we plan to investigate possible applications which can profit from proximity relations and qualifications, such as in the area of flexible query answering.
In particular, we plan to investigate application related to flexible answering of queries to XML documents, in the line of \cite{CDG+09} and other related papers.
As support for practical applications, we also plan to increase the repertoire of constraint and qualification domains which can be used in the {\tt (S)QCLP} prototype, adding the constraint domain $\mathcal{FD}$ and the qualification domain $\W_d$ defined in Section 2.2.3 of \cite{RR10TR}.
On a more theoretical line, other possible lines of future work include:
a) extension of the SLD($\qdom$) resolution procedure presented in \cite{RR08} to a SQCLP goal solving procedure able to work with constraints and a proximity relation;
b) investigation of the conjecture stated at the end of Section \ref{sec:sqclp};
and c) extension of the QCFLP ({\em qualified constraint functional logic programming}) scheme in \cite{CRR09} to work with a proximity relation and higher-order functions, as well as the implementation of the resulting scheme in the CFLP($\cdom$)-system {\sf Toy} \cite{toy}.


\bibliographystyle{acmtrans}
\bibliography{../biblio}

\begin{thebibliography}{}

\bibitem[\protect\citeauthoryear{Apt}{Apt}{1990}]{Apt90}
{\sc Apt, K.~R.} 1990.
\newblock Logic programming.
\newblock In {\em Handbook of Theoretical Computer Science}, {J.~van Leeuwen},
  Ed. Vol. B: Formal Models and Semantics. Elsevier and The MIT Press,
  493--574.

\bibitem[\protect\citeauthoryear{Arcelli and Formato}{Arcelli and
  Formato}{1999}]{AF99}
{\sc Arcelli, F.} {\sc and} {\sc Formato, F.} 1999.
\newblock Likelog: a logic programming language for flexible data retrieval.
\newblock In {\em Proceedings of the 1999 ACM Symposium on Applied computing
  (SAC'99)}. ACM Press, New York, NY, USA, 260--267.

\bibitem[\protect\citeauthoryear{Arcelli~Fontana}{Arcelli~Fontana}{2002}]{Arc0%
2}
{\sc Arcelli~Fontana, F.} 2002.
\newblock Likelog for flexible query answering.
\newblock {\em Soft Computing\/}~{\em 7}, 107--114.

\bibitem[\protect\citeauthoryear{Arenas, Fern\'andez, Gil, L\'opez-Fraguas,
  Rodr\'iguez-Artalejo, and S\'aenz-P\'erez}{Arenas et~al\mbox{.}}{2007}]{toy}
{\sc Arenas, P.}, {\sc Fern\'andez, A.~J.}, {\sc Gil, A.}, {\sc
  L\'opez-Fraguas, F.~J.}, {\sc Rodr\'iguez-Artalejo, M.}, {\sc and} {\sc
  S\'aenz-P\'erez, F.} 2007.
\newblock $\mathcal{TOY}$, a multiparadigm declarative language. version 2.3.1.
\newblock R. Caballero and J. S\'anchez (Eds.), Available at
  \texttt{http://toy.sourceforge.net}.

\bibitem[\protect\citeauthoryear{Baader and Nipkow}{Baader and
  Nipkow}{1998}]{BN98}
{\sc Baader, F.} {\sc and} {\sc Nipkow, T.} 1998.
\newblock {\em Term Rewriting and All That}.
\newblock Cambridge University Press.

\bibitem[\protect\citeauthoryear{Baldwin, Martin, and Pilsworth}{Baldwin
  et~al\mbox{.}}{1995}]{BMP95}
{\sc Baldwin, J.~F.}, {\sc Martin, T.}, {\sc and} {\sc Pilsworth, B.} 1995.
\newblock {\em Fril-Fuzzy and Evidential Reasoning in Artificial Intelligence}.
\newblock John Wiley \& Sons.

\bibitem[\protect\citeauthoryear{Bistarelli, Montanari, and Rossi}{Bistarelli
  et~al\mbox{.}}{2001}]{BMR01}
{\sc Bistarelli, S.}, {\sc Montanari, U.}, {\sc and} {\sc Rossi, F.} 2001.
\newblock Semiring-based constraint logic programming: Syntax and semantics.
\newblock {\em ACM Transactions on Programming Languages and Systems\/}~{\em
  3,\/}~1 (January), 1--29.

\bibitem[\protect\citeauthoryear{Caballero, Rodr\'iguez-Artalejo, and
  Romero-D\'iaz}{Caballero et~al\mbox{.}}{2008}]{CRR08}
{\sc Caballero, R.}, {\sc Rodr\'iguez-Artalejo, M.}, {\sc and} {\sc
  Romero-D\'iaz, C.~A.} 2008.
\newblock Similarity-based reasoning in qualified logic programming.
\newblock In {\em PPDP '08: Proceedings of the 10th international ACM SIGPLAN
  conference on Principles and Practice of Declarative Programming}. ACM,
  Valencia, Spain, 185--194.

\bibitem[\protect\citeauthoryear{Caballero, Rodr\'iguez-Artalejo, and
  Romero-D\'iaz}{Caballero et~al\mbox{.}}{2009}]{CRR09}
{\sc Caballero, R.}, {\sc Rodr\'iguez-Artalejo, M.}, {\sc and} {\sc
  Romero-D\'iaz, C.~A.} 2009.
\newblock Qualified computations in functional logic programming.
\newblock In {\em Logic Programming (ICLP'09)}, {P.~Hill} {and} {D.~Warren},
  Eds. LNCS, vol. 5649. Springer-Verlag Berlin Heidelberg, Pasadena, CA, USA,
  449--463.

\bibitem[\protect\citeauthoryear{Campi, Damiani, Guinea, Marrara, Pasi, and
  Spoletini}{Campi et~al\mbox{.}}{2009}]{CDG+09}
{\sc Campi, A.}, {\sc Damiani, E.}, {\sc Guinea, S.}, {\sc Marrara, S.}, {\sc
  Pasi, G.}, {\sc and} {\sc Spoletini, P.} 2009.
\newblock A fuzzy extension of the {XP}ath query language.
\newblock {\em Journal of Intelligent Information Systems\/}~{\em 33,\/}~3
  (December), 285--305.

\bibitem[\protect\citeauthoryear{Dubois and Prade}{Dubois and
  Prade}{1980}]{DP80}
{\sc Dubois, D.} {\sc and} {\sc Prade, H.} 1980.
\newblock {\em Fuzzy Sets and Systems: Theory and Applications}.
\newblock Academic Press, New York, NY, USA.

\bibitem[\protect\citeauthoryear{Freuder and Wallace}{Freuder and
  Wallace}{1992}]{FW92}
{\sc Freuder, E.~C.} {\sc and} {\sc Wallace, R.~J.} 1992.
\newblock Partial constraint satisfaction.
\newblock {\em Artificial Intelligence\/}~{\em 58,\/}~1--3, 21--70.

\bibitem[\protect\citeauthoryear{Georget and Codognet}{Georget and
  Codognet}{1998}]{GC98}
{\sc Georget, Y.} {\sc and} {\sc Codognet, P.} 1998.
\newblock Compiling semiring-based constraints with {CLP(FD,S)}.
\newblock In {\em Proceedings of the 4th International Conference on Principles
  and Practice of Constraint Programming}. LNCS, vol. 1520. Springer-Verlag,
  205--219.

\bibitem[\protect\citeauthoryear{Guadarrama, Mu{\~{n}}oz, and
  Vaucheret}{Guadarrama et~al\mbox{.}}{2004}]{GMV04}
{\sc Guadarrama, S.}, {\sc Mu{\~{n}}oz, S.}, {\sc and} {\sc Vaucheret, C.}
  2004.
\newblock Fuzzy prolog: A new approach using soft constraint propagation.
\newblock {\em Fuzzy Sets and Systems\/}~{\em 144,\/}~1, 127--150.

\bibitem[\protect\citeauthoryear{H\'ajek}{H\'ajek}{1998}]{Haj98}
{\sc H\'ajek, P.} 1998.
\newblock {\em Metamathematics of Fuzzy Logic}.
\newblock Dordrecht: Kluwer.

\bibitem[\protect\citeauthoryear{H\"{o}hfeld and Smolka}{H\"{o}hfeld and
  Smolka}{1988}]{HS88}
{\sc H\"{o}hfeld, M.} {\sc and} {\sc Smolka, G.} 1988.
\newblock Definite relations over constraint languages.
\newblock Tech. Rep. LILOG Report 53, IBM Deutschland.

\bibitem[\protect\citeauthoryear{Ishizuka and Kanai}{Ishizuka and
  Kanai}{1985}]{IK85}
{\sc Ishizuka, M.} {\sc and} {\sc Kanai, N.} 1985.
\newblock Prolog-{ELF} incorporating fuzzy logic.
\newblock In {\em Proceedings of the 9th International Joint Conference on
  Artificial Intelligence (IJCAI'85)}, {A.~K. Joshi}, Ed. Morgan Kaufmann, Los
  Angeles, CA, USA, 701--703.

\bibitem[\protect\citeauthoryear{Jaffar and Lassez}{Jaffar and
  Lassez}{1987}]{JL87}
{\sc Jaffar, J.} {\sc and} {\sc Lassez, J.~L.} 1987.
\newblock Constraint logic programming.
\newblock In {\em Proceedings of the 14th ACM SIGACT-SIGPLAN symposium on
  Principles of Programming Languages (POPL'87)}. ACM New York, NY, USA,
  Munich, West Germany, 111--119.

\bibitem[\protect\citeauthoryear{Jaffar, Maher, Marriott, and Stuckey}{Jaffar
  et~al\mbox{.}}{1998}]{JMM+98}
{\sc Jaffar, J.}, {\sc Maher, M.}, {\sc Marriott, K.}, {\sc and} {\sc Stuckey,
  P.~J.} 1998.
\newblock Semantics of constraints logic programs.
\newblock {\em Journal of Logic Programming\/}~{\em 37,\/}~1-3, 1--46.

\bibitem[\protect\citeauthoryear{Juli\'an, Moreno, and Penabad}{Juli\'an
  et~al\mbox{.}}{2009}]{JMP09}
{\sc Juli\'an, R.}, {\sc Moreno, G.}, {\sc and} {\sc Penabad, J.} 2009.
\newblock An improved reductant calculus using fuzzy partial evaluation
  techniques.
\newblock {\em Fuzzy Sets and Systems\/}~{\em 160,\/}~2, 162--181.

\bibitem[\protect\citeauthoryear{Juli\'an-Iranzo, Rubio, and
  Gallardo}{Juli\'an-Iranzo et~al\mbox{.}}{2009}]{JRG09}
{\sc Juli\'an-Iranzo, P.}, {\sc Rubio, C.}, {\sc and} {\sc Gallardo, J.} 2009.
\newblock Bousi$\sim${P}rolog: a prolog extension language for flexible query
  answering.
\newblock In {\em Proceedings of the Eighth Spanish Conference on Programming
  and Computer Languages (PROLE 2008)}, {J.~M. Almendros-Jim\'enez}, Ed. ENTCS,
  vol. 248. Elsevier, Gij\'on, Spain, 131--147.

\bibitem[\protect\citeauthoryear{Juli\'an-Iranzo and
  Rubio-Manzano}{Juli\'an-Iranzo and Rubio-Manzano}{2009a}]{JR09}
{\sc Juli\'an-Iranzo, P.} {\sc and} {\sc Rubio-Manzano, C.} 2009a.
\newblock A declarative semantics for {B}ousi$\sim${P}rolog.
\newblock In {\em PPDP'09: Proceedings of the 11th ACM SIGPLAN conference on
  Principles and practice of declarative programming}. ACM, Coimbra, Portugal,
  149--160.

\bibitem[\protect\citeauthoryear{Juli\'an-Iranzo and
  Rubio-Manzano}{Juli\'an-Iranzo and Rubio-Manzano}{2009b}]{JR09b}
{\sc Juli\'an-Iranzo, P.} {\sc and} {\sc Rubio-Manzano, C.} 2009b.
\newblock A similarity-based {WAM} for {B}ousi$\sim${P}rolog.
\newblock In {\em Bio-Inspired Systems: Computational and Ambient Intelligence
  (IWANN 2009)}. LNCS, vol. 5517. Springer Berlin / Heidelberg, Salamanca,
  Spain, 245--252.

\bibitem[\protect\citeauthoryear{Kifer and Subrahmanian}{Kifer and
  Subrahmanian}{1992}]{KS92}
{\sc Kifer, M.} {\sc and} {\sc Subrahmanian, V.~S.} 1992.
\newblock Theory of generalized annotated logic programs and their
  applications.
\newblock {\em Journal of Logic Programming\/}~{\em 12,\/}~3\&4, 335--367.

\bibitem[\protect\citeauthoryear{Lee}{Lee}{1972}]{Lee72}
{\sc Lee, R. C.~T.} 1972.
\newblock Fuzzy logic and the resolution principle.
\newblock {\em Journal of the Association for Computing Machinery (ACM)\/}~{\em
  19,\/}~1 (January), 109--119.

\bibitem[\protect\citeauthoryear{Li and Liu}{Li and Liu}{1990}]{LL90}
{\sc Li, D.} {\sc and} {\sc Liu, D.} 1990.
\newblock {\em A Fuzzy Prolog Database System}.
\newblock John Wiley \& Sons.

\bibitem[\protect\citeauthoryear{Lloyd}{Lloyd}{1987}]{Llo87}
{\sc Lloyd, J.~W.} 1987.
\newblock {\em Foundations of Logic Programming, Second Edition}.
\newblock Springer.

\bibitem[\protect\citeauthoryear{Loia, Senatore, and Sessa}{Loia
  et~al\mbox{.}}{2004}]{LSS04}
{\sc Loia, V.}, {\sc Senatore, S.}, {\sc and} {\sc Sessa, M.~I.} 2004.
\newblock Similarity-based {SLD} resolution and its role for web knowledge
  discovery.
\newblock {\em Fuzzy Sets and Systems\/}~{\em 144,\/}~1, 151--171.

\bibitem[\protect\citeauthoryear{Medina, Ojeda-Aciego, and
  Vojt\'{a}\v{s}}{Medina et~al\mbox{.}}{2001a}]{MOV01a}
{\sc Medina, J.}, {\sc Ojeda-Aciego, M.}, {\sc and} {\sc Vojt\'{a}\v{s}, P.}
  2001a.
\newblock Multi-adjoint logic programming with continuous semantics.
\newblock In {\em Logic Programming and Non-Monotonic Reasoning (LPNMR'01)},
  {T.~Eiter}, {W.~Faber}, {and} {M.~Truszczyinski}, Eds. LNAI, vol. 2173.
  Springer-Verlag, 351--364.

\bibitem[\protect\citeauthoryear{Medina, Ojeda-Aciego, and
  Vojt\'{a}\v{s}}{Medina et~al\mbox{.}}{2001b}]{MOV01b}
{\sc Medina, J.}, {\sc Ojeda-Aciego, M.}, {\sc and} {\sc Vojt\'{a}\v{s}, P.}
  2001b.
\newblock A procedural semantics for multi-adjoint logic programming.
\newblock In {\em Progress in Artificial Intelligence (EPIA'01)}, {P.~Brazdil}
  {and} {A.~Jorge}, Eds. LNAI, vol. 2258. Springer-Verlag, 290--297.

\bibitem[\protect\citeauthoryear{Medina, Ojeda-Aciego, and
  Vojt\'{a}\v{s}}{Medina et~al\mbox{.}}{2004}]{MOV04}
{\sc Medina, J.}, {\sc Ojeda-Aciego, M.}, {\sc and} {\sc Vojt\'{a}\v{s}, P.}
  2004.
\newblock Similarity-based unification: a multi-adjoint approach.
\newblock {\em Fuzzy Sets and Systems\/}~{\em 146}, 43--62.

\bibitem[\protect\citeauthoryear{Riezler}{Riezler}{1998}]{Rie98phd}
{\sc Riezler, S.} 1998.
\newblock Probabilistic constraint logic programming.
\newblock Ph.D. thesis, Neuphilologischen Fakult\"{a}t del Universit\"{a}t
  T\"{u}bingen.

\bibitem[\protect\citeauthoryear{Rodr\'iguez-Artalejo and
  Romero-D\'iaz}{Rodr\'iguez-Artalejo and Romero-D\'iaz}{2008}]{RR08}
{\sc Rodr\'iguez-Artalejo, M.} {\sc and} {\sc Romero-D\'iaz, C.~A.} 2008.
\newblock Quantitative logic programming revisited.
\newblock In {\em Functional and Logic Programming (FLOPS'08)}, {J.~Garrigue}
  {and} {M.~Hermenegildo}, Eds. LNCS, vol. 4989. Springer-Verlag, Ise, Japan,
  272--288.

\bibitem[\protect\citeauthoryear{Rodr\'iguez-Artalejo and
  Romero-D\'iaz}{Rodr\'iguez-Artalejo and Romero-D\'iaz}{2010a}]{RR10}
{\sc Rodr\'iguez-Artalejo, M.} {\sc and} {\sc Romero-D\'iaz, C.~A.} 2010a.
\newblock A declarative semantics for {CLP} with qualification and proximity.
\newblock {\em Theory and Practice of Logic Programming, 26th Int'l. Conference
  on Logic Programming (ICLP'10) Special Issue\/}~{\em 10,\/}~4--6, 627--642.

\bibitem[\protect\citeauthoryear{Rodr\'iguez-Artalejo and
  Romero-D\'iaz}{Rodr\'iguez-Artalejo and Romero-D\'iaz}{2010b}]{RR10TR}
{\sc Rodr\'iguez-Artalejo, M.} {\sc and} {\sc Romero-D\'iaz, C.~A.} 2010b.
\newblock Fixpoint \& {P}roof-theoretic {S}emantics for {CLP} with
  {Q}ualification and {P}roximity.
\newblock Tech. Rep. SIC-1-10, Universidad Complutense, Departamento de
  Sistemas Inform\'aticos y Computaci\'on, Madrid, Spain.

\bibitem[\protect\citeauthoryear{Sessa}{Sessa}{2001}]{Ses01}
{\sc Sessa, M.~I.} 2001.
\newblock Translations and similarity-based logic programming.
\newblock {\em Soft Computing\/}~{\em 5,\/}~2.

\bibitem[\protect\citeauthoryear{Sessa}{Sessa}{2002}]{Ses02}
{\sc Sessa, M.~I.} 2002.
\newblock Approximate reasoning by similarity-based {SLD} resolution.
\newblock {\em Theoretical Computer Science\/}~{\em 275,\/}~1-2, 389--426.

\bibitem[\protect\citeauthoryear{van Emden}{van Emden}{1986}]{VE86}
{\sc van Emden, M.~H.} 1986.
\newblock Quantitative deduction and its fixpoint theory.
\newblock {\em Journal of Logic Programming\/}~{\em 3,\/}~1, 37--53.

\bibitem[\protect\citeauthoryear{Vojt\'{a}\v{s}}{Vojt\'{a}\v{s}}{2001}]{Voj01}
{\sc Vojt\'{a}\v{s}, P.} 2001.
\newblock Fuzzy logic programming.
\newblock {\em Fuzzy Sets and Systems\/}~{\em 124}, 361--370.

\bibitem[\protect\citeauthoryear{Zadeh}{Zadeh}{1965}]{Zad65}
{\sc Zadeh, L.~A.} 1965.
\newblock Fuzzy sets.
\newblock {\em Information and Control\/}~{\em 8,\/}~3, 338--353.

\bibitem[\protect\citeauthoryear{Zadeh}{Zadeh}{1971}]{Zad71}
{\sc Zadeh, L.~A.} 1971.
\newblock Similarity relations and fuzzy orderings.
\newblock {\em Information Sciences\/}~{\em 3,\/}~2, 177--200.

\end{thebibliography}



\end{document}